\newtheorem{theorem}{Theorem}
\newtheorem{lemma}{Lemma}
\newtheorem{proposition}{Proposition}
\newtheorem{rmk}{Remark}
\theoremstyle{definition}
\newtheorem{example}{Example}
\begin{document}
%
\title{Capacity of Distributed Storage Systems with Clusters and Separate Nodes}


\author{Jingzhao~Wang,
        Tinghan~Wang,
        Yuan~Luo,~\IEEEmembership{Member,~IEEE,}
        and~Kenneth W. Shum~\IEEEmembership{Senior Member,~IEEE}
\thanks{J. Wang, T. Wang and Y. Luo are with the Department of Computer Science and Engineering, Shanghai Jiao Tong University, Shanghai, 200240, China. Email: \{wangzhe.90, wth19941018, yuanluo\}@sjtu.edu.cn}
\thanks{Kenneth W. Shum is with the Institute of Network Coding, The Chinese University of Hong Kong, Shatin, New Territories, Hong Kong. Email: wkshum@inc.cuhk.edu.hk}
}


%


\maketitle

\begin{abstract}
 In distributed storage systems (DSSs), the optimal tradeoff between node storage and repair bandwidth is an important issue for designing distributed coding strategies to ensure large scale data reliability. The capacity of DSSs is obtained as a function of node storage and repair bandwidth parameters, characterizing the tradeoff. There are lots of works on DSSs with clusters (racks) where the repair bandwidths from intra-cluster and cross-cluster are differentiated. However, separate nodes are also prevalent in the realistic DSSs, but the works on DSSs with clusters and separate nodes (CSN-DSSs) are insufficient. In this paper, we formulate the capacity of CSN-DSSs with one separate node for the first time where the bandwidth to repair a separate node is of cross-cluster. Consequently, the optimal tradeoff between node storage and repair bandwidth are derived and compared with cluster DSSs. A regenerating code instance is constructed based on the tradeoff. Furthermore, the influence of adding a separate node is analyzed and formulated theoretically. We prove that when each cluster contains $R$ nodes and any $k$ nodes suffice to recover the original file (MDS property), adding an extra separate node will keep the capacity if $R|k$, and reduce the capacity otherwise.
\end{abstract}


%
\IEEEpeerreviewmaketitle

\section{Introduction}
In the age of big data, massive amount of data are generated and stored in large data centers every day, where ensuring the data reliability is an important issue\cite{Hu2017}. Erasure coding is widely used to tolerate node failures in distributed storage systems (DSSs)\cite{Aggarwal2017Sprout,Abbasi2018Video,Huang2013,Li2017Beehive,Zhang2017Tree}, where the original data are encoded and stored in multiple nodes. If a node failure happens, a newcomer is generated by downloading data from other nodes, which may incur high repair bandwidth\cite{Xiang2016Joint}. The authors of \cite{Dimakis2010} proposed regenerating codes to balance the node storage and repair bandwidth, which is characterized by the capacity of DSSs with homogeneous node parameters \cite{Ernvall2013}, meaning that all the storage nodes are undifferentiated.

On the other hand, in heterogeneous DSSs~\cite{Ernvall2013,yu2015tradeoff}, the storage and repair bandwidth parameters are different for different nodes based on the variety of real storage devices. In realistic storage systems, nodes are generally grouped with clusters (racks)\cite{ford2010}, where the intra-cluster networks are often faster and cheaper. In order to make use of the communication resource efficiently, it is necessary to differentiate intra-cluster and cross-cluster bandwidths. Additionally, the cross-cluster bandwidth is often constrained in modern data centers \cite{Hu2017}. For instance the available cross-cluster bandwidth for each node is only $1/5$ to $1/20$ of the intra-cluster bandwidth in some cases \cite{ahmad2014shuffle,benson2010network,vahdat2010Scale}. It is efficient and practical to consider the balancing problem of the storage and repair bandwidth under realistic network topology. In \cite{tree2010Li,heteroAware2014Wang}, the authors proposed tree-based topology-aware repair schemes considering networks with heterogeneous link capacities. In \cite{Sipos2018NetworkAware}, Sipos \emph{et al.} proposed a general network aware framework to reduce the repair bandwidth in heterogeneous and dynamic networks.

In \cite{Hou2018Rack,Hu2016double, Hu2017}, the authors investigated multi-cluster (rack) models to reduce the cross-cluster bandwidth, where data from nodes in each cluster were collected and transmitted with a relay node to repair a fail one.  Coding strategies were proposed to minimize the cross-cluster repair bandwidth, which were deployed to verify the performance in hierarchical data centers in \cite{Hou2018Rack,Hu2017}. In \cite{Abd2017storage,Prakash2017}, the authors also investigated cluster DSSs with relay nodes which were not only used for node repair, but also used for data collection. In \cite{shao2018RackAware}, the authors proposed a data placement method to reduce the cross-cluster bandwidth on data reconstruction in the cluster DSS model without relay nodes. The DSS model with two clusters was considered in \cite{Akhlaghi2010,Two_rack2013}. In \cite{sohn2017capacity}, the authors first proposed algorithms to characterize the capacity of the cluster DSS model also with no relay nodes under the assumption that all the other alive nodes are used to repair a failed one, which maximizes the system capacity as proven in \cite{sohn2018capacity}. However, this led to high \emph{reconstruction read cost} which was defined in \cite{Huang2013} as the number of helper nodes to recover a failed one. It would be practical and flexible to consider the cluster DSS model where the number of helper nodes is not restricted to the maximum. Although the system capacity would be smaller, we gain more flexibility in the repair process. For example, multiple node failures can be analysed under flexible repair constrains, which was also considered in \cite{Abd2017storage}.

In our earlier paper \cite{Wangjz2018}, we analysed the capacity of the cluster DSS model under more flexible constraints, where the newcomer did not have to download data from all the other alive nodes. In addition, the storage servers (nodes) and networks vary in realistic storage systems, which is the motivation for researching heterogeneous DSSs \cite{yu2015tradeoff,Sipos2018NetworkAware}. As a general model for cluster DSSs, it is practical to consider the CSN-DSS model \cite{Kub2000Ocean,yu2015tradeoff}, which is instructive for construct efficient coding scheme adaptive to various networks. In \cite{Wangjz2018}, we introduced and analysed partly the CSN-DSS model. However, the properties of CSN-DSSs are not investigated in detail. The final system capacity and tradeoff with separate nodes are not characterized either, which will be formulated in the present paper.

In Section~\ref{sec_Pre}, the CSN-DSS model is introduced, where the capacity and tradeoff problems are formulated. In Section~\ref{sec:NoSN}, we sketch the properties of cluster DSSs proved in \cite{Wangjz2018}, which are also useful in analysing CSN-DSSs. The main contributions of this paper are as follows. The CSN-DSS model is analysed in Section~\ref{sec:withS}, where we prove the applicability of Algorithm~1 and 2 when adding a separate node in Theorem~\ref{theorem_MC3}. When the location of the added separate node varies, the capacity is analysed in Theorem~\ref{thm_MCj}. Consequently, the final capacity of the CSN-DSS model is derived in Theorem~\ref{theorem_MCSN}. Afterward, the tradeoffs between node storage and repair bandwidth are characterized for the cluster DSS and CSN-DSS models in Section~\ref{sec_tradoff}. Based on the tradeoff bounds, a regenerating code construction for the CSN-DSS model is investigated in Section~\ref{sec:consWithS}. In Section~\ref{sec_compare}, we analyse the influence of adding a separate node to the system capacity theoretically. We prove that adding a separate node will reduce or keep the capacity of a cluster DSS, depending on the system node parameters.

\section{Preliminaries}\label{sec_Pre}

\subsection{The model of distributed storage system with clusters and separate nodes (CSN-DSS)}\label{subsec_CSND}

\begin{figure}
  \centering
  \includegraphics[width=0.25\textwidth]{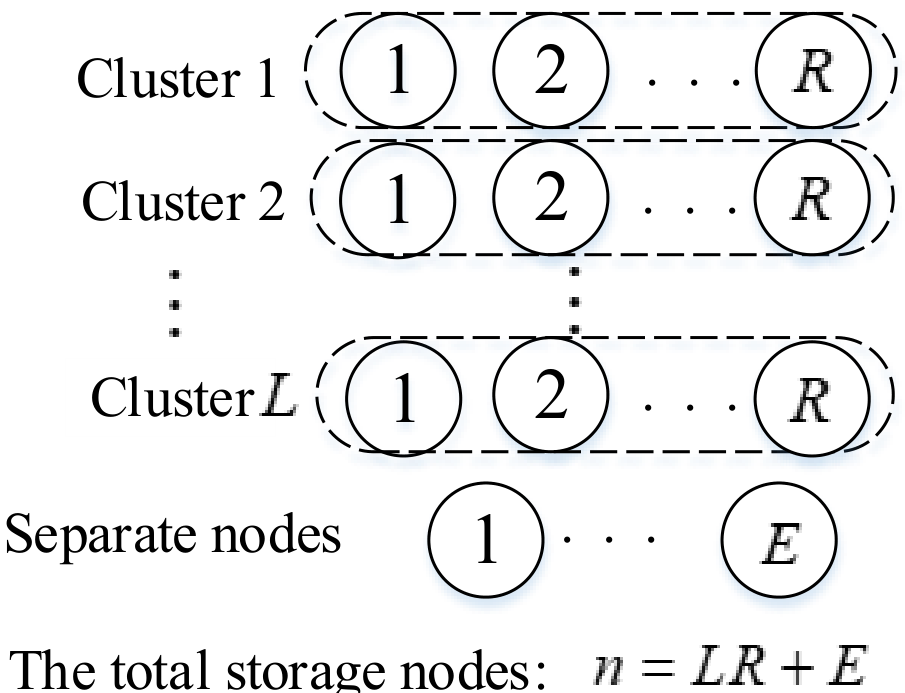}
  \caption{The CSN-DSS system model}\label{fig_CSNmodel}
\end{figure}

As Figure~\ref{fig_CSNmodel} shows, the CSN-DSS model consists of $n=LR+E$ storage nodes in total ($L$ clusters and $E$ separate nodes). Each cluster contains $R$ nodes. Assume the original data of size $\mathcal{M}$ are encoded with erasure coding and stored in $n$ nodes each of size $\alpha$. The $n$ nodes satisfy the $(n,k)$ MDS\footnote{Maximum distance separate (MDS) codes achieve optimality in terms of redundancy
and error tolerance} property meaning that any $k$ nodes out of $n$ suffice to reconstruct the original data. When a node fails, there are two types of repair pattern: \emph{exact repair} and \emph{functional repair}. In exact repair, the lost data must be exactly recovered. On the other hand, we only demand the $n$ nodes after each repair keep the MDS property \cite{Survey2011} in functional repair. We assume the repair procedure will not change the location of failed nodes. Thus the number of model nodes will not change with node failure and repair. This paper handles the functional repair situation and only considers one node failure.

When repairing a \textbf{cluster node}, the newcomer downloads $\beta_I$ symbols from each of the $d_I$ intra-cluster nodes and $\beta_C$ symbols from each of the $d_C$ cross-cluster nodes. We define the total number of helper nodes as
\begin{equation*}
  d\triangleq d_I+d_C.
\end{equation*}
The same as the restrictions introduced in\cite{Rashmi2011Optimal} inequality (1),  we also assume $k\leq d\leq n-1$ in the present paper.

As explained in \cite{Wangjz2018}, transmissions among intra-cluster nodes are much cheaper and faster, so it is natural to download more data from intra-cluster nodes and less from cross-cluster nodes, namely, $\beta_I\geq \beta_C$. Moreover, all the intra-cluster nodes are utilized in the repair procedure, namely $d_I=R-1$ in this paper, since using intra-cluster nodes is efficient and preferential in general cases.

As each separate node can be seen as a special cluster only with one node and all the other nodes are cross-cluster, we assume the newcomer downloads $\beta_C$ symbols from each of $d$ other nodes to repair a failed \textbf{separate node}. Note that only $d$ helper nodes are used no matter where the failed one is.

In the CSN-DSS model, $(\alpha, d_I, \beta_I, d_C, \beta_C)$ and $(n,k,L,R,E)$ are called the \textbf{storage/repair} and \textbf{node parameters} respectively for simplicity. Additionally, when repairing a fail cluster node, the total intra-cluster and is defined as
$$\gamma_I\triangleq d_I\beta_I.$$\noindent
Meanwhile, the cross-cluster bandwidth is defined as
$$\gamma_C \triangleq d_C\beta_C.$$
The total repair bandwidth for a separate node is $\gamma_S\triangleq d\beta_C.$ The traditional homogeneous DSS of \cite{Dimakis2010} is retrieved here if $\beta_I=\beta_C$.

\subsection{Information Flow Graph (IFG)}\label{subsec_IFG}

\begin{figure}
  \centering
  \includegraphics[width=0.4\textwidth]{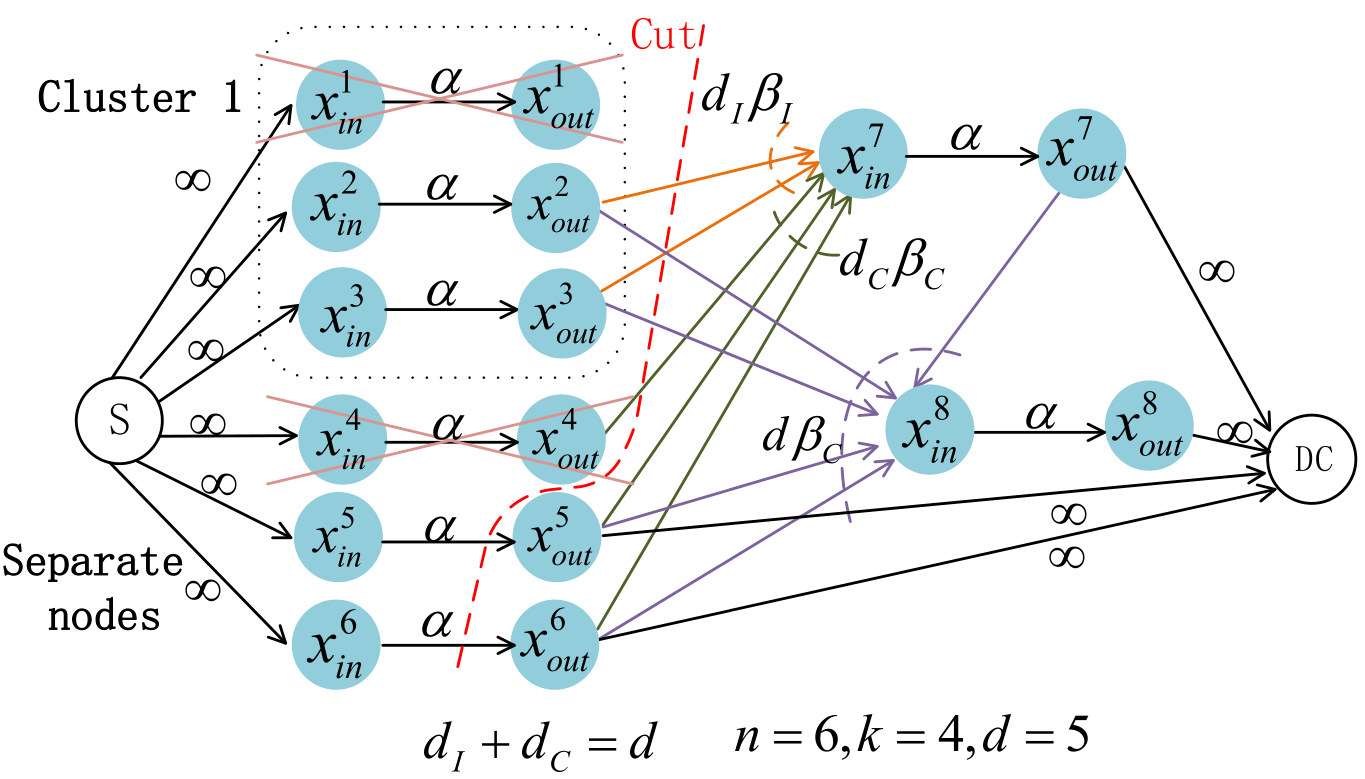}
  \caption{An IFG of the CSN-DSS model}\label{fig_IFGCSN}
\end{figure}

In~\cite{Dimakis2010}, the performance of DSSs was analysed with the information flow graph (IFG) consisting of three types of nodes: data source $S$, storage nodes $x_{in}^i$, $x_{out}^i$, and data collector $DC$ (see Figure~\ref{fig_IFGCSN}). We use $x^i$ to denote a physical storage node represented by a storage input node $x_{in}^i$ and an output node $x_{out}^i$ in the IFG, where data are pre-handled before transmission. The capacity of edge $x_{in}^i\rightarrow x_{out}^i$ is $\alpha$ (the node storage size).

At the initial time, source $S$ emits $n$ edges with infinite capacity to $\{x_{in}^i\}_{i=1}^n$, representing the encoded data are stored in $n$ nodes. Subsequently, $S$ changes to inactive, while the $n$ storage nodes become active. The node $x^j$ $(1\leq i \leq n)$ gets inactive if it has failed. In the failure/repair process, a new node $x^{n+1}$ is added by connecting edges with $d$ active nodes, where the capacity of each edge is $\beta$. The value of $\beta$ varies in the CSN-DSS model (see Figure~\ref{fig_IFGCSN}). The IFG maintains $n$ active nodes after each repair procedure. To keep the $(n,k)$ MDS property, $DC$ selects arbitrary $k$ active nodes to reconstruct the original date, as shown by the edges from $k$ active nodes with infinite capacity.

Figure~\ref{fig_IFGCSN} shows an IFG of the CSN-DSS model, where the cluster node $x^1$ has failed firstly. Then the newcomer $x^7$ is created, downloading $\beta_I$ symbols from each of $x^2$ and $x^3$ in cluster 1, and $\beta_C$ symbols from each of $x^4$, $x^5$ and $x^6$ out of cluster 1. Subsequently, $x^4$ is failed, $x^8$ is generated, connecting with five active nodes.

For an IFG, a (directed) cut between source $S$ and $DC$ is defined as a subset of edges, satisfying the condition that every directed path from $S$ to $DC$ contains at least one edge in the subset. The \textbf{min-cut} is the cut between $S$ and $DC$ where the total sum of the edge capacities is smallest \cite{Dimakis2010}.

\subsection{Problem Formulation}\label{subsec_PM}

As introduced in \cite{Wangjz2018}, for a CSN-DSS model with $(n,k,L,R,E)$, the main problem is to characterize the feasible region of points $(\alpha, d_I, \beta_I, d_C, \beta_C)$ to store file of size $\mathcal{M}$ reliably. Similarly to \cite{Dimakis2010}, this problem is solved through analysing the min-cuts of all possible IFGs corresponding to this CSN-DSS model. According to the max flow bound \cite{Ahlswede2006,Li2003} in network coding, to ensure reliable storage, the file size $\mathcal{M}$ will not greater than the system \textbf{capacity}
\begin{small}\begin{eqnarray*}
\mathbb{C}\triangleq \text{min-cut of }G^*,
\end{eqnarray*}\end{small}\noindent
where $G^*$ is the IFG with the minimum min-cut. That is to say
\begin{small}\begin{equation}\label{equ_bound}
\mathbb{C}\geq\mathcal{M}
\end{equation}\end{small}\noindent
need to be satisfied, which was also proved in \cite{Dimakis2010}. When the node parameters are given, $\mathbb{C}$ is obtained as a function of $(\alpha, d_I, \beta_I, d_C, \beta_C)$. Thus the tradeoff between $\alpha$ and $(d_I, \beta_I, d_C, \beta_C)$ can be derived.

\subsection{Terminologies and Definitions} \label{subsec_TermDef}
In this subsection, we introduce some terms defined in \cite{Wangjz2018} to investigate the capacity of the CSN-DSS model through analyse the min-cuts of the corresponding IFGs.

\textbf{Topological ordering and the min-cut:} As introduced in \cite{Networkflows1993}, the topological ordering of vertices in a directed acyclic graph is an ordering that if there exists a path from $v_i$ to $v_j$ then $i<j$. We use $\{x_{out}^{t_i}\}_{i=1}^k$ to denote the $k$ topologically ordered nodes connecting to $DC$. As proven in~\cite{Dimakis2010}, the minimum min-cut is achieved when nodes $\{x^{t_i}\}_{i=1}^k$ are all newcomers and $x^{t_i}$ connect to all the $i-1$ nodes $\{x^{t_j}\}_{j=1}^{i-1}$. In the CSN-DSS model, to find the minimum min-cut, we also assume that $x^{t_i}$ connect to all the former nodes $\{x^{t_j}\}_{j=1}^{i-1}$, which was also proved in \cite{sohn2018capacity}. The \textbf{min-cut} can be obtained by cutting $\{x^{t_i}\}_{i=1}^k$ one by one in the topological ordering, which is analysed in Subsection~\ref{subsec:Mincutcal}. In fact, when cutting $x^{t_i}$, since the $k$ output nodes connect to $DC$ with edges of infinite capacity, we only need to compare the capacity of edge $x_{in}^{t_i}\rightarrow x_{out}^{t_i}$ and the total capacity of the edges emanating to $x_{in}^{t_i}$. Then we choose the minor one for cutting, which is called a \textbf{part-cut value}.

For instance, the data collector connects to $x_{out}^6, x_{out}^5, x_{out}^7, x_{out}^8$ in Figure~\ref{fig_IFGCSN}, which are topologically ordered. Assume that the red dashed line is the final cut line. In this case, when cutting $x^7$, we need to compare the capacity of edge $x_{in}^{7}\rightarrow x_{out}^{7}$ and the total capacity of the edges emanating to $x_{in}^7$, where the latter one is minor if $2\beta_I+\beta_C\leq \alpha$. Note that, not all the edges emanating to $x_{in}^{7}$ are cut and counted in the part-cut value,
which depends on the topological ordering analysed in Subsection~\ref{subsec:Mincutcal}. On the other hand, if $2\beta_I+\beta_C > \alpha$, the cut line will cross the edge $x_{in}^7\rightarrow x_{out}^7$ and the part-cut value will be $\alpha$. When all the $k$ nodes connecting to $DC$ are cut, the min-cut is calculated by summing the $k$ part-cut values (see formula (\ref{equ_MC})).

\textbf{Repair sequence and selected nodes}:  The topological ordering of $k$ output nodes $\{x_{out}^{t_i}\}_{i=1}^k$ corresponds to a \textbf{repair sequence} of original nodes which are called \textbf{selected nodes}. For example, the numbered nodes in Figure~\ref{fig_sequence} are 7 selected nodes where the numbers indicate a repair sequence. In homogeneous distributed storage systems \cite{Dimakis2010}, the storage nodes are undifferentiated, thus the min-cuts are independent of repair sequences. However, due to the heterogeneity of intra-cluster and cross-cluster bandwidths in the CSN-DSS model, various repair sequences lead to different min-cuts, and we represent the repair sequence with following definitions.

\textbf{Selected node distribution and cluster order:} For a CSN-DSS model with $(n,k,L,R,E)$, we relabel the clusters by the amount of selected nodes in a non-increasing order without loss of generality. For example, in Figure~\ref{fig_sequence}, cluster~1 contains 3 selected nodes (the most) and cluster~3 contains 0 selected nodes (the least). The \textbf{selected node distribution} is denoted with $\textbf{s}=(s_0,s_1,...,s_L)$, where $s_0$ is the amount of separate selected nodes, and $s_i\ (1\leq i\leq L)$ represents the amount of selected nodes in cluster $i$. Meanwhile, the set of all possible selected node distributions is denoted as

\begin{align*}\label{equ_S}
  \mathcal{S}\triangleq\Big\{\textbf{s}=(s_0,s_1,...,s_L):&\ s_{i+1}\leq s_{i},\ 0\leq s_i\leq R,\ \\
  &\text{for}\ 1\leq i\leq L;\ 0\leq s_0\leq S;\ \sum_{i=0}^L s_i=k\Big\}.
\end{align*}
Additionally, the {\bf repair sequence} is represented by the \textbf{cluster order} $\bm{\pi}=(\pi_1,\pi_2,...,\pi_k)$, where $\pi_i\ (1\leq i\leq k)$ equals the index of the cluster containing newcomer $x^{t_i}$. We set $\pi_i=0$ when node $i$ is separate. As the nodes in one cluster are undifferentiated, we only need to record the cluster index. For a certain  $\textbf{s}=(s_0, s_1, s_2,..., s_L)$, the set of all possible cluster orders is specified as
\begin{align*}
\Pi(\textbf{s})\triangleq\Big\{\bm{\pi}=(\pi_1,..., \pi_k):\ \sum_{j=1}^k \mathbb{I}(\pi_j= i)=s_i\ \text{for }  0\leq i \leq L\Big\},
\end{align*}
\noindent
where
\begin{align*}\mathbb{I}(\pi_j=i)=\begin{cases}
                                   1, & \text{if\ } \pi_j=i,\\
                                   0, & \text{otherwise} .
                                  \end{cases}
\end{align*}

In Figure~\ref{fig_sequence}, $\textbf{s}=(1,3,3,0)$ represents that the $DC$ connects one separate node and  the nodes from cluster 1, 2 and 3 are 3, 3, 0 respectively. Meanwhile, a possible cluster order for $\textbf{s}$ is $\bm{\pi}=(1,1,1,2,2,2,0)$, indicated by the numbers from $1$ to $7$. As Figure~\ref{fig_sequence} shows, the $k$ selected nodes are numbered sequentially for convenience. Additionally, the cluster nodes are also ordered column by column cross clusters, for simplicity of description, we also say that node 1 and 4 are in the first column of the model, node 2 and 5 are in the second column, and node 3 and 6 are in the third column, meaning that the columns of cluster nodes are labeled by 1 to $R$ from left to right implicitly. These descriptions are used in the following sections.

Moreover, for a given cluster order $\bm{\pi}$, assume node $i$ is the $h_{\bm{\pi}}(i)$-th one in its own cluster, where we define the \textbf{relative location} of node $i$ as
\begin{equation}\label{equ_hi}
h_{\bm{\pi}}(i)\triangleq\sum_{j=1}^i \mathbb{I}(\pi_j=\pi_i),
\end{equation}\noindent
for $1\leq i\leq k$. Note that the \textbf{relative location} specifies the precedence of selected nodes in each cluster. It is obvious that node $i$ is in the $h_{\bm{\pi}}(i)$-th column. For instance, Figure~\ref{fig_sequence} illustrates $\bm{\pi}=(1,1,1,2,2,2,0)$ and the corresponding $h_{\bm{\pi}}(i)$ sequence $(1,2,3,1,2,3,1)$ calculated with (\ref{equ_hi}). For node $3$, $h_{\bm{\pi}}(3)=\mathbb{I}(\pi_1=\pi_3)+\mathbb{I}(\pi_2=\pi_3)+\mathbb{I}(\pi_3=\pi_3)=1+1+1=3$ where $\pi_3=1$, and therefore node $3$ is in the third column.

\begin{figure}[t]
    \centering
    \includegraphics[width=0.27\textwidth]{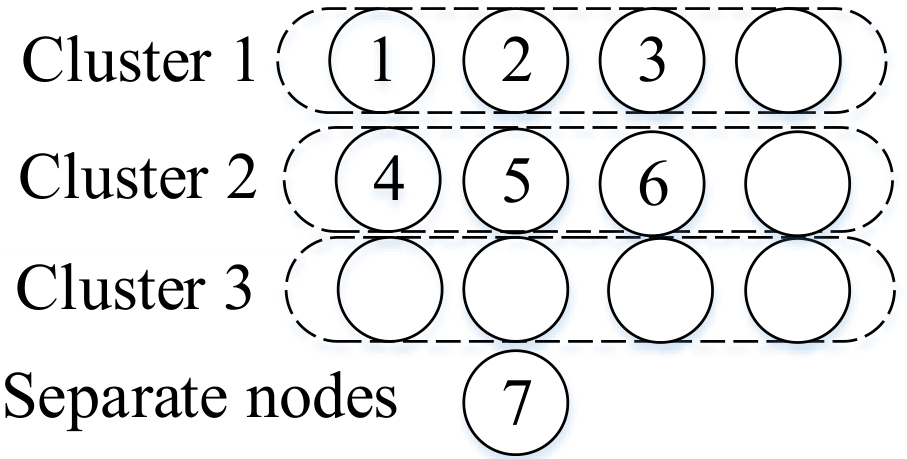}
    \caption{For a CSN-DSS model, the selected node distribution is $\textbf{s}=(1, 3, 3, 0)$ and the cluster order $\bm{\pi}=(1,1,1,2,2,2,0)$ as the numbered nodes show.} \label{fig_sequence}
  \end{figure}

\subsection{Calculating the min-cut value}\label{subsec:Mincutcal}

As explained in Subsection~\ref{subsec_TermDef}, for given $\textbf{s}$ and $\bm{\pi}$, the min-cut value is obtained by calculate the $k$ part-cut values step by step. When calculating the $i$-th part-cut value, we define the $i$-th part incoming weight with formula (\ref{equ_wi}). The final min-cut value is obtained by formula (\ref{equ_MC}).

\begin{figure}[t]
  \centering
    \includegraphics[width=0.3\textwidth]{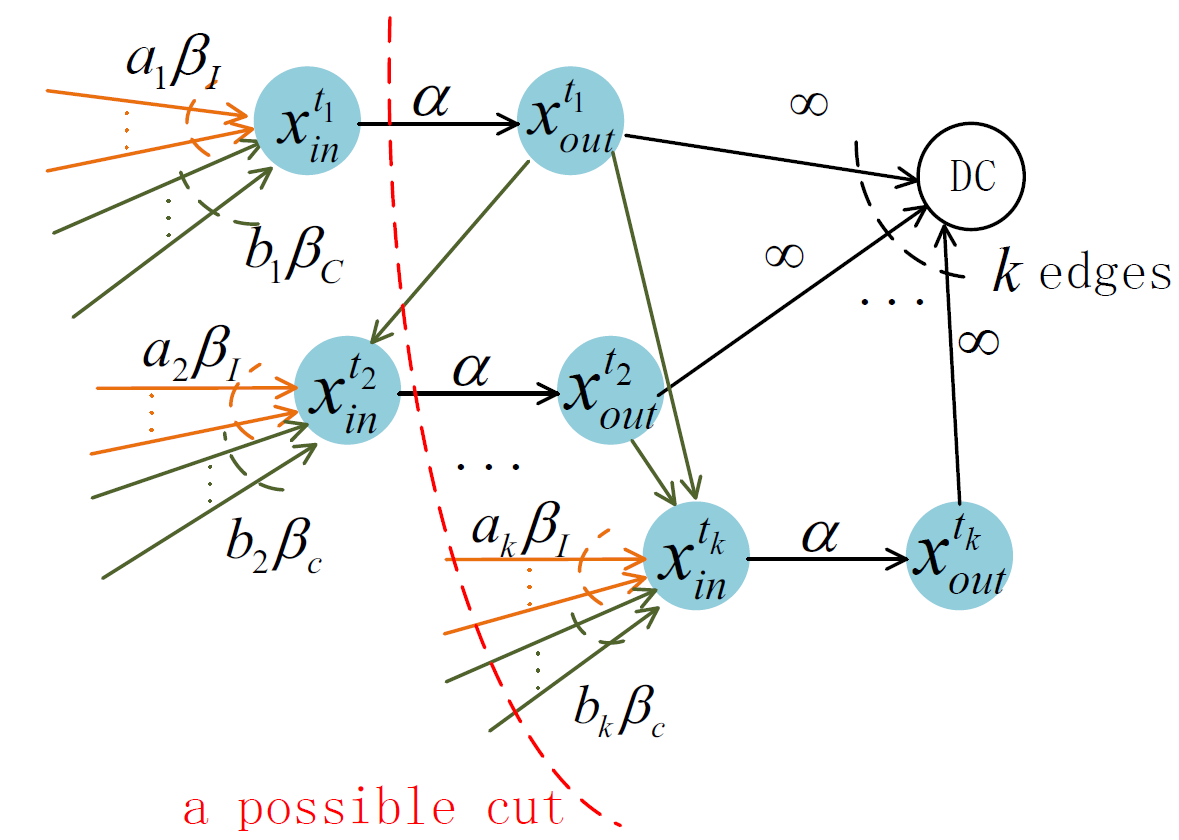}
    \caption{Calculating the min-cut value} \label{fig_partcut}
\end{figure}

As introduced in \cite{Dimakis2010,Wangjz2018}, after cutting the $k$ topologically ordered nodes $\{x^{t_i}\}_{i=1}^k$ (see Figure~\ref{fig_partcut}) iteratively, the nodes of the IFG $G$ are divided into two disjoint sets $V$ and $\overline{V}$. The min-cut (denoted by $\mathcal{U}$) is the set of edges emanating from $V$ to $\overline{V}$. In the beginning, $V$ consists of source $S$ and the original nodes $x^i(1\leq i\leq n)$, and $\overline{V}$ consists of $DC$. When cutting node $x^{t_i}\ (1\leq i\leq k)$, $x_{in}^{t_i}$ and $x_{out}^{t_i}$ are included to $V$ or $\overline{V}$ based on the following process.

\noindent\textbf{Step 1:} When considering node $x^{t_1}$, the first topologically ordered node, there are two possible cases.
\begin{itemize}
  \item If $x_{in}^{t_1}\in V$, the edge $x_{in}^{t_1}\rightarrow x_{out}^{t_1}$ must be in $\mathcal{U}$ and the first part-cut value is $\alpha$.
  \item If $x_{in}^{t_1}\in \overline{V}$, since there are $d=d_I+d_C$ incoming edges for node $x_{in}^{t_1}$, the topologically first node in $\overline{V}$, all the $d$ edges must be contained by $\mathcal{U}$, consisting of $d_I$ and $d_C$ edges from intra-cluster and cross-cluster nodes respectively. In this case, the part-cut value is $d_I\beta_I+d_C\beta_C$.
\end{itemize}

\noindent\textbf{Step 2:} Now consider node $x^{t_i}(1\leq i\leq k)$:
\begin{itemize}
  \item If $x_{in}^{t_i}\in V$, edge $x_{in}^{t_i}\rightarrow x_{out}^{t_i}$ should be in $\mathcal{U}$.
  \item If $x_{in}^{t_i}\in \overline{V}$, the $d=d_I+d_C$ incoming edges of node $x_{in}^{t_i}$ consist of edges from $V$ and $\overline{V}$ respectively. And only the edges from $V$ are contained by $\mathcal{U}$.

      \noindent $\circ$ If $x^{t_i}$ is in a cluster, among the incoming edges from $V$, we use $a_i$ and $b_i$ to represent the number of edges \textbf{from intra-cluster} and  \textbf{cross-cluster} nodes respectively. Obviously, we have $0\leq a_i\leq d_I$ and $0\leq b_i\leq d_C$. As $x^{t_i}$ connect to all the former $\{x^{t_j}\}_{j=1}^{i-1}$ nodes (introduced in Subsection~\ref{subsec_TermDef}), when $i$ increases by $1$, either $a_i$ or $b_i$ will decrease by $1$ until either one reduced to $0$, depending on the locations of $\{x^{t_j}\}_{j=1}^{i}$.

      \noindent $\circ$ If $x^{t_i}$ is separate, $c_i$ represents the number of incoming edges from $V$. Since $d\geq k$ and $1\leq i\leq k$, it is obvious that
    \begin{equation}\label{equ_ci}
        c_i=d-(i-1).
    \end{equation}
    For given $\textbf{s}$ and $\bm{\pi}$, the $i$-th \textbf{part incoming weight} is defined as
    \begin{small}
    \begin{equation} \label{equ_wi}
        w_i(\textbf{s},\bm{\pi})\triangleq\begin{cases}
                    a_i(\textbf{s},\bm{\pi})\beta_I+b_i(\textbf{s},\bm{\pi})\beta_C, \ &\text{node }i\text{ is in a cluster},\\
                    c_i(\textbf{s},\bm{\pi})\beta_C,\ &\text{node }i\text{ is separate}
               \end{cases}.
    \end{equation}\end{small}
    When $\textbf{s}$ is fixed, we also write $w_i(\textbf{s},\bm{\pi})$ as $w_i(\bm{\pi})$ for simplicity. Meanwhile, we also use $a_i(\bm{\pi}), b_i({\bm{\pi}}), c_i({\bm{\pi}})$ for specific $\bm{\pi}$.

\end{itemize}

Subsequently, the min-cut for the given $\textbf{s}$ and  $\bm{\pi}$ is obtained as
\begin{equation}\label{equ_MC}
MC(\textbf{s},\bm{\pi})=\sum_{i=1}^k\min\{\alpha, w_i(\textbf{s},\bm{\pi})\}.
\end{equation}
The capacity of a given CSN-DSS model can be obtained by comparing the min-cuts of IFGs corresponding to selected node distributions $\textbf{s}\in \mathcal{S}$ and cluster orders $\bm{\pi}\in \Pi(\textbf{s})$. In Section~\ref{sec:NoSN}, we will sketch the main results of the capacity of the cluster DSS model, analysed in \cite{Wangjz2018} in detail.

\section{The capacity of the cluster DSS model}\label{sec:NoSN}

In \cite{Wangjz2018}, we investigated the capacity of the cluster DSS model. The main results are sketched in this section, which are generalized to the CSN-DSS model in Section~\ref{sec:withS} and compared with CSN-DSSs in Section~\ref{sec_tradoff} and \ref{sec_compare}. For a cluster DSS model with $(n,k,L,R,E=0)$, the min-cuts of all possible IFGs are compared in two steps on $\bm{\pi}$ and $\textbf{s}$, corresponding to the \textbf{vertical order algorithm} and \textbf{horizontal selection algorithm} respectively.
\begin{itemize}
  \item \textbf{Step 1:} Fix the selected node distribution $\textbf{s}$, we analyse the min-cuts for various $\bm{\pi}\in \Pi(\textbf{s})$ in Proposition~\ref{prop_MC}. The vertical order algorithm is named accordingly because the values of cluster order is generated among the clusters alternately, seeming vertically generated in Figure~\ref{fig:algPiandS}.

  \item \textbf{Step 2:} Fix the cluster order generating algorithm, we analyse the min-cuts for different $\textbf{s}$ in Proposition~\ref{prop_MC2}.
      The horizontal selection algorithm is named accordingly because the output values is generated by selecting the nodes in one cluster until no nodes left. Then select from the next cluster until there are $k$ selected nodes, see Figure~\ref{fig:algPiandS}.
\end{itemize}

\subsection{Vertical order algorithm for $d_I=R-1$}\label{subsec:vertical}

For a cluster DSS model with $(n,k,L,R,E=0)$, Proposition~\ref{prop_MC} specifies the cluster order $\bm{\pi}$ minimizing the min-cut $MC(\textbf{s},\bm{\pi})$ for an arbitrary selected node distribution $\textbf{s}$.  As introduced in \cite{Wangjz2018}, the authors of \cite{sohn2017capacity,sohn2018capacity} proposed this algorithm under the assumption that the number of helper nodes is $n-1$, namely, $d_I=R-1$ and $d_C=n-R$, which maximizes the system capacity as proven in\cite{sohn2018capacity}. However, this assumption leads to high reconstruction read cost (all alive nodes are used), which was defined in \cite{Huang2013} as the number of helper nodes to recover a failed one. We analysed the algorithm in more general cases with new methods in \cite{Wangjz2018}. The number of cross-cluster helper nodes, $d_C$, satisfies
\begin{align}\label{equ_dC}
k-R+1\leq d_C\leq n-R
\end{align}
and does not have to be $n-R$ of the constraint in \cite{sohn2017capacity}, which follows from the condition that $ k\leq d_I+d_C \leq n-1$ and $d_I=R-1$.


\begin{algorithm}
\caption{Vertical order algorithm in the CSN-DSS model}\label{alg_vs}
\begin{algorithmic}[1]
\Require Selected node distribution $\textbf{s}=(s_0,s_1,...,s_L).$ 
\Ensure Cluster order $\bm{\pi}^*=(\pi_1^*,...,\pi_k^*).$\\
Initial cluster label $j \gets 1;$
\For{$i=1$ to k}
    \If {\text{the $i$-th selected node is a separate node}}
        \State $\pi^*_i\gets 0;$ $continue;$
    \EndIf
    \If{$s_j=0$}
    \State $j=1;$
    \Else
    \State $\pi_i^*\gets j;$ $s_j\gets s_i-1;$ $j\gets (j\mod L)+1;$  \EndIf
\EndFor
\end{algorithmic}
\end{algorithm}

In Figure~\ref{fig:algPiandS}, the cluster order $\bm{\pi}^*=(1,2,1,2,1,0,1)$ is assigned vertically from the first to the fourth column, as the selected node number shows. For given $\textbf{s}$ and $\bm{\pi}$, $MC(\textbf{s},\bm{\pi})$ is obtained by (\ref{equ_MC}), which is calculated with
$$w_i(\bm{\pi})=a_i(\bm{\pi})\beta_I+b_i(\bm{\pi})\beta_C \ (1\leq i\leq k).$$
We proof a property for $a_i(\bm{\pi})$ (the coefficient of $\beta_I$) in Lemma~\ref{lemma_ai}, which can also be used to calculate $a_i(\bm{\pi})$.

\begin{figure}[!t]
    \centering
    \includegraphics[width=0.3\textwidth]{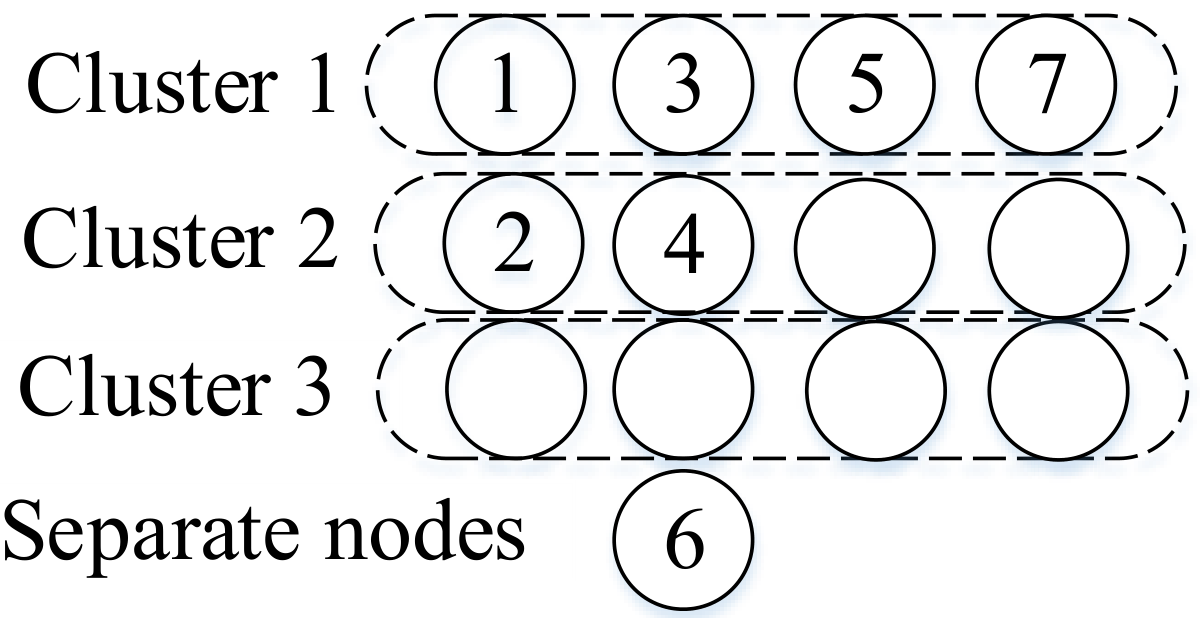}
    \caption{The cluster order $\bm{\pi}^*=(1,2,1,2,1,0,1)$ and selected node distribution $\textbf{s}^*=(1,4,2,1)$ are generated by Algorithm~1 and 2.}\label{fig:algPiandS}
\end{figure}

\begin{lemma}[\cite{Wangjz2018}]\label{lemma_ai}
For a cluster DSS model, when the selected node distribution $\textbf{s}=(0,s_1,...,s_L)$ is fixed, the elements of multi-set\footnote{The multi-set is a general definition of set, allowing multiple instances of its elements.} $[a_i(\bm{\pi})]_{i=1}^k=[a_1(\bm{\pi}),..., a_k(\bm{\pi})]$ for each cluster order $\bm{\pi}\in\Pi(\textbf{s})$ are the same. Additionally, $a_i(\bm{\pi})=d_I+1-h_{\bm{\pi}}(i)$ for $1\leq i\leq k$.
\end{lemma}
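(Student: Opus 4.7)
The plan is to compute $a_i(\bm{\pi})$ directly from the construction of the worst-case IFG and the hypothesis $d_I = R-1$. Recall from Subsection~\ref{subsec_TermDef} that in this IFG the $i$-th selected newcomer $x^{t_i}$ connects to all $i-1$ earlier selected newcomers $\{x^{t_j}\}_{j<i}$, and is topped up to $d = d_I + d_C$ helpers using original nodes. Under $d_I = R-1$, every cluster newcomer is forced to use all $R-1$ other nodes of its cluster as its intra-cluster helpers.

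The first key step is to observe that $x_{out}^{t_j} \in \overline{V}$ for every $j \leq i-1$, regardless of whether the corresponding $x_{in}^{t_j}$ was placed in $V$ or $\overline{V}$ during the part-cut algorithm: once the output copy is on the $DC$-side, an edge $x_{out}^{t_j} \to x_{in}^{t_i}$ with $x_{in}^{t_i} \in \overline{V}$ lies entirely in $\overline{V}$ and never contributes to the cut. Hence $a_i(\bm{\pi})$ counts exactly those intra-cluster helpers of $x^{t_i}$ that are still original (non-newcomer) nodes of cluster $\pi_i$.

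The second step is a direct count. Among the $R-1$ intra-cluster helpers of $x^{t_i}$, the ones that are earlier selected newcomers are precisely $\{x^{t_j} : j<i,\ \pi_j = \pi_i\}$, and by the definition of $h_{\bm{\pi}}$ in equation~(\ref{equ_hi}) their number is $h_{\bm{\pi}}(i) - 1$. The remaining intra-cluster helpers, which are still original, therefore number
\begin{equation*}
a_i(\bm{\pi}) \;=\; (R-1) - (h_{\bm{\pi}}(i)-1) \;=\; d_I + 1 - h_{\bm{\pi}}(i),
\end{equation*}
which is the claimed formula. For the multi-set invariance, I would note that for every $\bm{\pi} \in \Pi(\textbf{s})$ the values $\{h_{\bm{\pi}}(i) : \pi_i = \ell\}$ form the set $\{1,2,\ldots,s_\ell\}$, since $h_{\bm{\pi}}$ merely enumerates the order of appearance within cluster $\ell$. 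Consequently
\begin{equation*}
\bigl[a_i(\bm{\pi})\bigr]_{i=1}^k \;=\; \bigcup_{\ell=1}^L \bigl\{d_I,\, d_I-1,\, \ldots,\, d_I+1-s_\ell\bigr\},
\end{equation*}
which depends on $\textbf{s}$ but not on the particular $\bm{\pi}$.

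The argument is essentially bookkeeping; the only subtle point is recognizing that the $V$/$\overline{V}$ assignment of earlier $x_{in}^{t_j}$'s is irrelevant to $a_i$, because only $x_{out}^{t_j}$ appears as a tail of edges into $x_{in}^{t_i}$ and $x_{out}^{t_j}$ is always on the $\overline{V}$ side. I do not anticipate any real obstacle beyond making this observation carefully and invoking $d_I = R-1$ to reduce the intra-cluster helper set to a purely combinatorial count.
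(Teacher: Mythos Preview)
Your argument is correct. The paper does not actually prove Lemma~\ref{lemma_ai}; it is quoted from the authors' earlier work~\cite{Wangjz2018}, so there is no in-paper proof to compare against. Your derivation follows exactly the intended logic of Subsection~\ref{subsec:Mincutcal}: since $DC$ connects to every $x_{out}^{t_j}$ with infinite-capacity edges, each $x_{out}^{t_j}$ must lie in $\overline{V}$, so edges $x_{out}^{t_j}\to x_{in}^{t_i}$ never cross the cut; combined with $d_I=R-1$, the intra-cluster edges that do cross are precisely those from the $R-h_{\bm{\pi}}(i)$ original (never-failed) nodes of cluster $\pi_i$, giving $a_i(\bm{\pi})=d_I+1-h_{\bm{\pi}}(i)$. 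The multi-set claim then follows immediately because $\{h_{\bm{\pi}}(i):\pi_i=\ell\}=\{1,\dots,s_\ell\}$ for every $\bm{\pi}\in\Pi(\textbf{s})$.
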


\begin{proposition}[\cite{Wangjz2018}]\label{prop_MC}
For a cluster DSS model and any fixed $\textbf{s}\in \mathcal{S}$, the cluster order $\bm{\pi}^*$ generated by Algorithm~1 minimizes the min-cut, meaning that
$$MC(\textbf{s},\bm{\pi}^*)\leq MC(\textbf{s},\bm{\pi})$$
holds for any $\bm{\pi}\in\Pi(\textbf{s})$.
\end{proposition}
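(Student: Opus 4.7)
The plan is to prove the proposition by an adjacent-exchange argument on $\bm{\pi}$, leveraging Lemma~\ref{lemma_ai} to reduce the problem to a question about the sequence $\{h_{\bm{\pi}}(i)\}_{i=1}^{k}$. First, I would make the dependence of each part incoming weight on $\bm{\pi}$ fully explicit. Lemma~\ref{lemma_ai} gives $a_i(\bm{\pi}) = R - h_{\bm{\pi}}(i)$ (using $d_I = R-1$). For $b_i(\bm{\pi})$, routing as many of the $i - h_{\bm{\pi}}(i)$ previously selected cross-cluster predecessors as possible (up to $d_C$) as cross-cluster helpers of $x^{t_i}$ puts their output copies in $\overline{V}$, yielding $b_i(\bm{\pi}) = \max\{0,\, d_C + h_{\bm{\pi}}(i) - i\}$. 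Setting $g(h,i) \triangleq (R-h)\beta_I + \max\{0,\, d_C + h - i\}\beta_C$, we obtain $w_i(\bm{\pi}) = g(h_{\bm{\pi}}(i), i)$, so $MC(\textbf{s},\bm{\pi})$ depends on $\bm{\pi}$ only through the sequence $\{h_{\bm{\pi}}(i)\}_{i=1}^{k}$.

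A direct check shows that transposing two consecutive entries $\pi_i, \pi_{i+1}$ lying in distinct clusters simply exchanges $h_{\bm{\pi}}(i)$ with $h_{\bm{\pi}}(i+1)$ and leaves $h_{\bm{\pi}}(j)$ intact for $j\notin\{i,i+1\}$. The technical core of the argument is then the one-step exchange lemma: if $h = h_{\bm{\pi}}(i) > h_{\bm{\pi}}(i+1) = h'$, then
\begin{align*}
\min\{\alpha, g(h',i)\} &+ \min\{\alpha, g(h,i+1)\}\\
&\le\; \min\{\alpha, g(h,i)\} + \min\{\alpha, g(h',i+1)\}.
\end{align*}
The $\beta_I$-contributions on the two sides coincide (each equals $(2R-h-h')\beta_I$), while the $\beta_C$-contributions are pairs of $\max\{0,\cdot\}$ values with a common sum; the pair on the right is more spread out than the pair on the left (the absolute gaps are $h-h'+1$ versus $|h-h'-1|$), so by Schur-convexity of $\max\{0,\cdot\}$ the right-hand $\beta_C$-sum dominates the left-hand one before clipping. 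Absorbing the outer $\min\{\alpha,\cdot\}$ requires a short case split on which of the four $g$-values exceed $\alpha$, but in every combination the inequality survives because $\min\{\alpha,\cdot\}$ is concave and non-decreasing.

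To conclude, I would observe that Algorithm~\ref{alg_vs} produces a $\bm{\pi}^*$ whose $h$-sequence is non-decreasing: column $c$ is completed across all clusters with $s_j \ge c$ before column $c+1$ is started, so $\bm{\pi}^*$ admits no descent $h_{\bm{\pi}^*}(i) > h_{\bm{\pi}^*}(i+1)$. Any $\bm{\pi} \in \Pi(\textbf{s})$ with $\bm{\pi} \ne \bm{\pi}^*$ contains a descent (necessarily with $\pi_i \ne \pi_{i+1}$, since same-cluster consecutive entries satisfy $h_{\bm{\pi}}(i+1) = h_{\bm{\pi}}(i)+1$); repeatedly applying the exchange lemma converts $\bm{\pi}$ into a descent-free sequence without ever increasing $MC$, and that sequence has the same min-cut as $\bm{\pi}^*$ because $g(h,\cdot)$ depends on $\bm{\pi}$ only through $h$ (so reorderings within an equal-$h$ block are immaterial). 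The main obstacle I anticipate is the case analysis in the exchange lemma: the convexity from the inner $\max\{0,\cdot\}$ and the concavity from the outer $\min\{\alpha,\cdot\}$ pull in opposite directions, so every combination of clipped/unclipped at positions $i$ and $i+1$, both before and after the swap, has to be checked by hand to make sure the Schur-convex gain is not erased by the clip.
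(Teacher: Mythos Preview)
Your adjacent-exchange argument is sound and would yield a complete proof. Note, however, that this proposition is not proved in the present paper---it is cited from the authors' earlier work~\cite{Wangjz2018}---so a direct line-by-line comparison with ``the paper's proof'' is not possible from this document. The only hints here are in Appendix~\ref{app_thm_MC3}, where quantities $\phi_i(\bm{\pi})$ (apparently $a_i+b_i$) appear together with the pointwise bounds $\phi_i(\overline{\bm{\pi}}^*)=d-i+1$ and $\phi_i(\overline{\bm{\pi}})\ge d-i+1$; this suggests the original argument may work by controlling $a_i+b_i$ position-by-position rather than by pairwise exchange, but that is speculation. Your bubble-sort route is in any case a clean, self-contained alternative.

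On your anticipated obstacle: the case analysis is lighter than you fear, because more structure is available than just the sum inequality. With $h>h'$, write $u_1=g(h',i)$, $u_2=g(h,i+1)$ for the post-swap values and $v_1=g(h,i)$, $v_2=g(h',i+1)$ for the pre-swap values. From $\beta_I\ge\beta_C$ and the $1$-Lipschitz, non-decreasing nature of $t\mapsto\max\{0,t\}$ one checks directly that
\[
u_2 \;\le\; \min\{v_1,v_2\}\;\le\;\max\{v_1,v_2\}\;\le\; u_1,
\]
while your Schur-convexity observation gives $u_1+u_2\le v_1+v_2$. With this chain there are only five intervals for $\alpha$, and in each the desired inequality
\[
\min\{\alpha,u_1\}+\min\{\alpha,u_2\}\;\le\;\min\{\alpha,v_1\}+\min\{\alpha,v_2\}
\]
is a one-line verification (the two extreme intervals give equality; the middle three use $u_2\le\min\{v_1,v_2\}$ or $u_1+u_2\le v_1+v_2$). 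The convex/concave tension you flag does not bite precisely because the post-swap pair is \emph{simultaneously} more spread out and of no larger sum than the pre-swap pair; either fact alone would be insufficient, but together they close every case. One minor correction: it is not true that every $\bm{\pi}\ne\bm{\pi}^*$ contains a descent (several $\bm{\pi}$ can share the same non-decreasing $h$-sequence), but your final sentence already handles this, since any descent-free $\bm{\pi}$ has exactly the sorted $h$-sequence of $\bm{\pi}^*$ and hence the same min-cut.
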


We use
\begin{equation}\label{equ_pis}
 \bm{\pi}^*(\textbf{s})=(\pi^*(\textbf{s})_1,\pi^*(\textbf{s})_2,...,\pi^*(\textbf{s})_k)
\end{equation}
to represent \textbf{the output cluster order} of Algorithm~1 for any input $\textbf{s}\in \mathcal{S}$. Subsequently, Subsection~\ref{subsec:horizontal} analyse the min-cuts corresponding to $\bm{\pi}^*(\textbf{s})$ for various $\textbf{s}\in \mathcal{S}$.

\subsection{Horizontal selection algorithm for $d_I=R-1$}\label{subsec:horizontal}

For the selected node distributions $\textbf{s}\in \mathcal{S}$, we compare the min-cuts corresponding to cluster orders $\bm{\pi}^*(\textbf{s})$
and prove that $\bm{\pi}^*(\textbf{s}^*)$ minimizes the min-cut in Proposition~\ref{prop_MC2}, where $\textbf{s}^*=(s_0^*,...,s_L^*)$ is obtained by Algorithm~2. As introduced before, the cluster order generating algorithm is fixed for different $\textbf{s}$.


\begin{algorithm}
\caption{Horizontal selection algorithm in the CSN-DSS model}
\label{algo:horizontal}
\begin{algorithmic}[1] 
\Require Node parameters: $(n,k,L,R,E)$.
\Ensure Selected node distribution $\textbf{s}^*=(s_0^*, s_1^*,...,s_L^*)$. 
\State $s_0^* \gets E$
\For{$i = 1 \to L$}
    \If{$i \leq \left\lfloor \frac{k-s_0^*}{R}\right\rfloor$}
     $s_i^* \gets R$
    \EndIf
    \If{$i = \left\lfloor \frac{k-s_0^*}{R}\right\rfloor+1$}
     $s_i^* \gets k-s_0^*-\left\lfloor \frac{k-s_0^*}{R}\right\rfloor R$
    \EndIf
    \If{$i>\left\lfloor \frac{k-s_0^*}{R}\right\rfloor+1$}
     $s_i^*\gets 0$
    \EndIf
  \EndFor
\end{algorithmic}
\end{algorithm}

As introduced in \cite{Wangjz2018}, when $s_0^*=E=0$, this algorithm reduces to the cluster version ($d=n-1$)proposed in \cite{sohn2017capacity}. We also analyse this algorithm under more general assumptions as introduced in Subsection~\ref{subsec:vertical}. Additionally, we generalize the horizontal selection algorithm to the CSN-DSS model with $s_0^*=1$, which is analysed in Section~\ref{sec:withS}. Figure~\ref{fig:algPiandS} shows an example for Algorithm~2, where $k=7$, $R=4$ and $s_0^*=1$, thus $s_1^*=4$, $s_2^*=k-s_0^*-\left\lfloor \frac{k-s_0^*}{R}\right\rfloor R=6-4=2$. Subsequently, a property of $a_i(\bm{\pi}^*(\textbf{s}^*))$ and $b_i(\bm{\pi}^*(\textbf{s}^*))$, the coefficients of $\beta_I$ and $\beta_C$, is proved in Lemma~\ref{lemma_aibi}.

\begin{lemma}[\cite{Wangjz2018}]\label{lemma_aibi}
For a cluster DSS model, the coefficients of $\beta_I$ and $\beta_C$ satisfy
\begin{align}\label{equ_aibi*}
a_i(\bm{\pi}^*(\textbf{s}^*))+b_i(\bm{\pi}^*(\textbf{s}^*))=d_I+d_C+1-i,
\end{align}
for $1\leq i \leq k$, where $\textbf{s}^*$ is generated by Algorithm~2, and $\bm{\pi}^*(\cdot)$ is defined by (\ref{equ_pis}).
\end{lemma}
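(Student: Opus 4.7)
My plan is to decompose the claimed identity into its intra-cluster and cross-cluster parts and then sum them. The intra-cluster coefficient $a_i(\bm{\pi}^*(\textbf{s}^*))$ is already handed to us by Lemma~\ref{lemma_ai}, which gives $a_i(\bm{\pi}^*(\textbf{s}^*)) = d_I + 1 - h_{\bm{\pi}^*(\textbf{s}^*)}(i)$. So the real content lies in establishing the matching formula $b_i(\bm{\pi}^*(\textbf{s}^*)) = d_C - \bigl(i - h_{\bm{\pi}^*(\textbf{s}^*)}(i)\bigr)$.

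For $b_i$, I would count the $i-1$ previously-selected nodes by cluster: exactly $h_{\bm{\pi}^*(\textbf{s}^*)}(i) - 1$ of them share a cluster with $x^{t_i}$ and are therefore intra-cluster helpers (every co-cluster node is a helper because $d_I = R-1$), while the remaining $i - h_{\bm{\pi}^*(\textbf{s}^*)}(i)$ lie in other clusters. Since we are free to choose the IFG $G^*$ that minimizes the cut, I would select $x^{t_i}$'s $d_C$ cross-cluster helpers so as to include every one of those $i - h_{\bm{\pi}^*(\textbf{s}^*)}(i)$ already-selected external nodes; these contribute $0$ to $w_i$ (their outputs are in $\overline{V}$), while the remaining $d_C - (i - h_{\bm{\pi}^*(\textbf{s}^*)}(i))$ cross-cluster helpers are original nodes in $V$ and each contributes $\beta_C$, yielding the announced $b_i$.

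The step I expect to be the main obstacle is verifying that this selection is actually realisable, that is, $i - h_{\bm{\pi}^*(\textbf{s}^*)}(i) \leq d_C$ for every $1 \leq i \leq k$. Here I would exploit the explicit shape $\textbf{s}^* = (0, R, \ldots, R, r, 0, \ldots, 0)$ with $r = k - \lfloor k/R\rfloor R$ produced by Algorithm~2, together with Algorithm~1's round-robin schedule across the non-empty clusters. A direct two-phase analysis (first through $r$ complete rounds over $\lfloor k/R\rfloor + 1$ clusters and then through $R - r$ rounds over the remaining $\lfloor k/R\rfloor$ clusters, modulo the edge case $r=0$) shows that $i - h_{\bm{\pi}^*(\textbf{s}^*)}(i)$ attains its maximum at $i = k$, where $h_{\bm{\pi}^*(\textbf{s}^*)}(k) = R$ and the value equals $k - R$. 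Combined with the standing range~(\ref{equ_dC}), which gives $d_C \geq k - R + 1$, this produces the strict bound $i - h_{\bm{\pi}^*(\textbf{s}^*)}(i) \leq k - R < d_C$ and secures the feasibility of the helper choice.

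Adding $a_i(\bm{\pi}^*(\textbf{s}^*))$ and $b_i(\bm{\pi}^*(\textbf{s}^*))$ then causes $h_{\bm{\pi}^*(\textbf{s}^*)}(i)$ to cancel, leaving $d_I + d_C + 1 - i$, which is the desired identity. Everything except the uniform bound on $i - h_{\bm{\pi}^*(\textbf{s}^*)}(i)$ is essentially bookkeeping on top of the min-cut recipe in Subsection~\ref{subsec:Mincutcal}.
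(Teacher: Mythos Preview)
Your argument is correct and is exactly the natural one: invoke Lemma~\ref{lemma_ai} for $a_i$, count cross-cluster predecessors to get $b_i = d_C - (i - h_{\bm{\pi}^*(\textbf{s}^*)}(i))$, and verify the feasibility bound $i - h_{\bm{\pi}^*(\textbf{s}^*)}(i)\le k-R < d_C$ via the explicit round-robin structure of $\bm{\pi}^*(\textbf{s}^*)$ together with~(\ref{equ_dC}). Note, however, that the present paper does not actually supply a proof of Lemma~\ref{lemma_aibi}; it is quoted from the companion paper~\cite{Wangjz2018}, so there is no in-paper argument to compare your approach against.
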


\begin{proposition}[\cite{Wangjz2018}]\label{prop_MC2}
For a cluster DSS model, cluster order $\bm{\pi}^*(\textbf{s}^*)$ minimizes the min-cut, meaning that
\begin{equation*}
  MC(\textbf{s}^*, \bm{\pi}^*(\textbf{s}^*))\leq MC(\textbf{s}, \bm{\pi}^*(\textbf{s}))
\end{equation*}
holds for all $\textbf{s}\in \mathcal{S}$ with $s_0=0$, where $\textbf{s}^*$ is generated by Algorithm~2 and $\bm{\pi}^*(\cdot)$ is defined by (\ref{equ_pis}).
\end{proposition}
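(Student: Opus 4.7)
My plan is to reduce the proposition to a termwise inequality on the coefficients $a_i$, and then prove the reduced inequality via majorization of $\textbf{s}^*$ over every admissible $\textbf{s}$.

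First I would extend Lemma~\ref{lemma_aibi} to arbitrary $\textbf{s}\in\mathcal{S}$ with $s_0=0$: the helper-choice analysis of Subsection~\ref{subsec:Mincutcal}, applied under the constraint $d_C\ge k-R+1$ from~(\ref{equ_dC}), gives $a_i(\bm{\pi}^*(\textbf{s}))+b_i(\bm{\pi}^*(\textbf{s}))\ge d_I+d_C+1-i$, with equality whenever all $i-1$ previous newcomers fit in the $d_C$ cross-cluster helper slots (true for $\textbf{s}^*$ by Lemma~\ref{lemma_aibi}). Substituting into~(\ref{equ_wi}) and using $\beta_I\ge\beta_C$, one obtains
\begin{equation*}
w_i(\bm{\pi}^*(\textbf{s}))\ge(d_I+d_C+1-i)\beta_C+a_i(\bm{\pi}^*(\textbf{s}))(\beta_I-\beta_C),
\end{equation*}
and the right side is non-decreasing in $a_i$. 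Since $\min\{\alpha,\cdot\}$ is also non-decreasing, the proposition reduces to proving the termwise bound $a_i(\bm{\pi}^*(\textbf{s}^*))\le a_i(\bm{\pi}^*(\textbf{s}))$ for every $1\le i\le k$.

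For this reduced bound, Lemma~\ref{lemma_ai} identifies $a_i=d_I+1-h_{\bm{\pi}^*(\cdot)}(i)$, so it suffices to prove $h_{\bm{\pi}^*(\textbf{s}^*)}(i)\ge h_{\bm{\pi}^*(\textbf{s})}(i)$ for every $i$. Direct inspection of Algorithm~1 shows that $(h_{\bm{\pi}^*(\textbf{s})}(i))_{i=1}^k$ is the sequence of column indices encountered when the Young diagram of $\textbf{s}$ (with rows of lengths $s_1,\dots,s_L$) is listed in column-major order, hence
\begin{equation*}
h_{\bm{\pi}^*(\textbf{s})}(i)=\min\Bigl\{c\ge 1:\sum_{j=1}^L\min(c,s_j)\ge i\Bigr\}.
\end{equation*}
Algorithm~2 with $s_0^*=E=0$ outputs the non-increasing distribution that majorizes every $\textbf{s}\in\mathcal{S}$ with $s_0=0$; because $x\mapsto(x-c)_+$ is convex, the Hardy--Littlewood--Polya inequality gives $\sum_j(s_j^*-c)_+\ge\sum_j(s_j-c)_+$ and equivalently $\sum_j\min(c,s_j^*)\le\sum_j\min(c,s_j)$ for every $c\ge1$, which immediately yields $h_{\bm{\pi}^*(\textbf{s}^*)}(i)\ge h_{\bm{\pi}^*(\textbf{s})}(i)$. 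Summing $\min\{\alpha,w_i\}$ over $i$ then delivers $MC(\textbf{s}^*,\bm{\pi}^*(\textbf{s}^*))\le MC(\textbf{s},\bm{\pi}^*(\textbf{s}))$.

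The main technical obstacle lies in the extension of Lemma~\ref{lemma_aibi} to $\textbf{s}\ne\textbf{s}^*$. Specifically, when some position satisfies $i-h_{\bm{\pi}^*(\textbf{s})}(i)>d_C$ (which can happen when $\textbf{s}$ spreads across too many clusters), the cross-cluster helper pool overflows and the equality $a_i+b_i=d_I+d_C+1-i$ breaks, forcing $b_i$ upward; I must argue that this only reinforces, never violates, the affine lower bound on $w_i$. A minor technical point is verifying the column-major identification of Algorithm~1, which is combinatorial but needs care at the moments a cluster becomes empty and the round-robin cycle shortens.
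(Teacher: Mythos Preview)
Your proposal is correct and aligned with the paper's approach: the paper (via its reference to \cite{Wangjz2018}, cf.\ the remark in Appendix~\ref{app_thm_MC3}, Part~2) also establishes Proposition~\ref{prop_MC2} by proving the termwise inequality $w_i(\bm{\pi}^*(\textbf{s}^*))\le w_i(\bm{\pi}^*(\textbf{s}))$ for every $i$, and then summing $\min\{\alpha,\cdot\}$.

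Where your write-up differs is in how the termwise inequality is obtained. You first decouple the $\beta_I$ and $\beta_C$ contributions via the affine bound $w_i\ge(d+1-i)\beta_C+a_i(\beta_I-\beta_C)$, reduce to $a_i^{*}\le a_i$, translate that into $h_{\bm{\pi}^*(\textbf{s}^*)}(i)\ge h_{\bm{\pi}^*(\textbf{s})}(i)$ via Lemma~\ref{lemma_ai}, and finally settle the column-index comparison by majorization and Hardy--Littlewood--P\'olya. The paper's treatment (in \cite{Wangjz2018}) proceeds by a more direct case analysis on the structure of Algorithms~1 and~2 to get the same termwise conclusion. Your majorization formulation is cleaner and more conceptual: it isolates exactly why $\textbf{s}^*$ is extremal (it majorizes every admissible $\textbf{s}$), and the closed form $h_{\bm{\pi}^*(\textbf{s})}(i)=\min\{c:\sum_j\min(c,s_j)\ge i\}$ makes the monotonicity in the majorization order transparent. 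The paper's route, by contrast, stays closer to the algorithmic description and avoids introducing external tools. Both approaches hinge on the same two facts you correctly identified: that $a_i+b_i\ge d+1-i$ holds with equality for $\textbf{s}^*$ (your extension of Lemma~\ref{lemma_aibi}), and that the overflow case $i-h_{\bm{\pi}^*(\textbf{s})}(i)>d_C$ only strengthens the affine lower bound on $w_i$, never weakens it---so the obstacle you flagged is genuine but harmless, exactly as you surmised.
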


As introduced in Subsection~\ref{subsec_PM}, the capacity of cluster DSSs is the minimum min-cut which is achieved by $\bm{\pi}^*(\textbf{s}^*)$ and $\textbf{s}^*$ generated by Algorithm~1 and 2. Note that, the algorithms are also applicative to analyse the capacity of CSN-DSSs with one separate node, which will be investigated in the following Section~\ref{sec:withS}.

\section{The capacity of CSN-DSSs}\label{sec:withS}

As introduced in Section~\ref{sec:NoSN}, the capacity of the cluster DSS model is achieved by cluster order $\bm{\pi}^*(\textbf{s}^*)$ generated with Algorithm~1, where  $\textbf{s}^*$ is generated by Algorithm~2. In this section, we first analyse the min-cuts corresponding to cluster orders with one separate node in Theorem~\ref{theorem_MC3}, where the applicability of Algorithm~1 and 2 is proved. Subsequently, when the location of the added separate node varies, the corresponding min-cuts are analysed in Theorem~\ref{thm_MCj}, with which the capacity of the CSN-DSS model is formulated in Theorem~\ref{theorem_MCSN} finally.

In Theorem~\ref{theorem_MC3}, we assume the separate node is at any given location $j$ ($1\leq j\leq k$) in every cluster orders and compare the min-cuts combining the two aspects corresponding to Proposition~\ref{prop_MC} and Proposition~\ref{prop_MC2}.

\begin{theorem}\label{theorem_MC3}
  For a CSN-DSS model with the separate selected node at location $j$ ($1\leq j\leq k$), the cluster order $\bm{\pi}^*(\textbf{s}^*)$ minimizes the min-cut, meaning that
  \begin{equation}
    MC(\textbf{s}^*, \bm{\pi}^*(\textbf{s}^*))\leq MC(\textbf{s}, \bm{\pi}),
  \end{equation}
  holds for all $\textbf{s}\in \mathcal{S}$ with $s_0=1$ and $\bm{\pi}\in \Pi(\textbf{s})$ with $\pi_j=0$, where $\textbf{s}^*$ is generated by Algorithm~2 and $\bm{\pi}^*(\cdot)$ is defined by (\ref{equ_pis}).
\end{theorem}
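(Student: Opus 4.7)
My plan is to mirror the two-level strategy of Propositions~\ref{prop_MC} and~\ref{prop_MC2}, isolating the contribution of the separate node and then invoking essentially the same exchange machinery on the surviving cluster positions. The starting observation is that once $\pi_j=0$ is fixed, the part-cut value at position~$j$ equals $\min\{\alpha,c_j\beta_C\}=\min\{\alpha,(d-j+1)\beta_C\}$ by (\ref{equ_ci}). This value depends only on $j$, not on $\textbf{s}$ (beyond requiring $s_0=1$) nor on the cluster indices chosen at the other positions. Hence
\begin{equation*}
MC(\textbf{s},\bm{\pi}) = \min\{\alpha,(d-j+1)\beta_C\} + \sum_{i\neq j}\min\{\alpha,w_i(\textbf{s},\bm{\pi})\},
\end{equation*}
and the theorem reduces to showing that $(\textbf{s}^*,\bm{\pi}^*(\textbf{s}^*))$ minimizes the sum on the right.

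For every position $i\neq j$, a direct count shows that the separate prior at~$j$ (when $j<i$) contributes exactly one cross-cluster helper to $\overline{V}$ and no intra-cluster helper. Consequently the identity $a_i(\bm{\pi})=d_I+1-h_{\bm{\pi}}(i)$ still holds verbatim, which is the separate-node analogue of Lemma~\ref{lemma_ai}: for fixed $\textbf{s}$ with $s_0=1$, the multi-set $\{a_i(\bm{\pi}):i\neq j\}$ depends only on $\textbf{s}$. With this in place, the vertical step (analogue of Proposition~\ref{prop_MC}) becomes an assignment problem: distribute the fixed multi-set $\{a_i\}_{i\neq j}$ over the irregular support $\{1,\dots,k\}\setminus\{j\}$ so as to minimise $\sum_{i\neq j}\min\{\alpha,w_i\}$, where $w_i=(d+1-i)\beta_C+a_i(\beta_I-\beta_C)$. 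I would adopt the exchange strategy of Proposition~\ref{prop_MC}: whenever $\bm{\pi}$ deviates from the round-robin output of Algorithm~1, swap a minimal pair of out-of-order cluster indices and use concavity of $x\mapsto\min\{\alpha,x\}$, together with the invariance of the $a_i$-multi-set, to verify that the two affected part-cut contributions do not decrease in sum under the swap.

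The horizontal step (analogue of Proposition~\ref{prop_MC2}) then chooses $\textbf{s}$. Starting from any $\textbf{s}\in\mathcal{S}$ with $s_0=1$ whose cluster counts do not already match Algorithm~2's output $\textbf{s}^*$, I would move one selection from a smaller cluster to a larger one, re-run Algorithm~1 on the result, and use the updated $a_i$-multi-set together with Lemma~\ref{lemma_aibi} to verify that the min-cut does not increase. Iterating drives every admissible $\textbf{s}$ to $\textbf{s}^*$, and composition with the vertical step gives $MC(\textbf{s}^*,\bm{\pi}^*(\textbf{s}^*))\leq MC(\textbf{s},\bm{\pi})$ for every admissible pair, which is the theorem.

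The main obstacle I expect is the bookkeeping for the exchange argument in the presence of the "hole" at position~$j$: a swap that straddles $j$ can shift several $h_{\bm{\pi}}(i)$ values for $i>j$ simultaneously, and each shift interacts with the $\alpha$-threshold in a piecewise linear way. A careful case split---by whether each affected $w_i$ lies strictly below, at, or strictly above $\alpha$---combined with the fact that $\sum_{i\neq j}w_i$ is invariant under any such swap should close the argument. Once this bookkeeping is in place, the constant separate contribution $\min\{\alpha,(d-j+1)\beta_C\}$ never reopens any settled comparison, so the extended proof collapses back to the arguments already carried out in \cite{Wangjz2018}.
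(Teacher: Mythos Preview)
Your strategy---isolate the constant contribution at position $j$ and then rerun the machinery of Propositions~\ref{prop_MC} and~\ref{prop_MC2} on the remaining $k-1$ cluster slots---is exactly what the paper does. The paper packages it more cleanly: rather than carrying a ``hole'' at index $j$, it defines an auxiliary cluster-only order $\overline{\bm{\pi}}$ of length $k{-}1$ (your $\bm{\pi}$ with position $j$ deleted and the tail shifted left by one), expresses each $w_i(\bm{\pi})$ for $i\neq j$ as $w_{i'}(\overline{\bm{\pi}})$ or $w_{i'}(\overline{\bm{\pi}})-\beta_C$, and then applies Propositions~\ref{prop_MC} and~\ref{prop_MC2} to the modified sequence $w'_i(\overline{\bm{\pi}})$. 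This sidesteps your irregular-support bookkeeping entirely: no swap can straddle $j$ because $j$ has been excised from the index set, so the obstacle you flag never arises.

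One technical point you gloss over that the paper handles explicitly: your formula $w_i=(d{+}1{-}i)\beta_C+a_i(\beta_I-\beta_C)$, and with it the claimed invariance of $\sum_{i\neq j}w_i$ under swaps, is valid only while $b_i>0$. For a general $\bm{\pi}$ the cross-cluster coefficient can saturate at zero before position $k$ (this is allowed by the standing hypothesis $d_C\geq k-R+1$, not $d_C\geq k-1$). The paper tracks this with a cutoff index $p$, the first position at which $b_p(\overline{\bm{\pi}})=0$, and splits the analysis into $j\geq p_{\max}$ versus $j<p_{\max}$ before invoking Propositions~\ref{prop_MC} and~\ref{prop_MC2}. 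Your case split on the $\alpha$-threshold does not capture this saturation; you need a separate split on whether $b_i$ has hit zero. Once that is added, your argument and the paper's coincide.
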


\begin{proof}
In \cite{Wangjz2018}, we sketch the main idea of this proof, which will be completed here. We will reduce this proof to Proposition~\ref{prop_MC} and Proposition~\ref{prop_MC2}. As the $j$-th selected node is separate and fixed for each $\bm{\pi}$, we only need to consider the part of selected cluster nodes by analysing the influence of adding a separate selected node. It is convenient to represent the cluster order $\bm{\pi}$ with another cluster order $\overline{\bm{\pi}}$ without separate nodes, as formula (\ref{equ_newpi}) shows (see Figure~\ref{fig_SC} (b), (c)). The part incoming weights $w_i(\bm{\pi})$ $(1\leq i\leq k)$ are then expressed with $w_i(\overline{\bm{\pi}})$, and this theorem is proved by analysing $w_i(\overline{\bm{\pi}})$ with similar methods used in Proposition~\ref{prop_MC} and \ref{prop_MC2}.
See \ref{app_thm_MC3} for more details.
\end{proof}

\begin{rmk}
After adding a separate selected node, other $k-1$ selected cluster nodes generated by Algorithm~1 and 2 are ordered similarly to the situation without separate nodes. It can be verified that Lemma~\ref{lemma_ai} and Lemma~\ref{lemma_aibi} still hold for the cluster selected nodes in the CSN-DSSs with one separate node. The added separate node will not influence the properties of the other $k-1$ cluster selected nodes, which can be proved with the same methods to Lemma~\ref{lemma_ai} and Lemma~\ref{lemma_aibi}. We omit these proofs due to space limitation and use Lemma~\ref{lemma_ai} and Lemma~\ref{lemma_aibi} directly.
\end{rmk}

In order to analyse the relationship between the min-cuts and the separate selected node location, let $MC^*_j$ denote the min-cut corresponding to $\textbf{s}^*$ and $\bm{\pi}^*(\textbf{s}^*)$ mentioned in Theorem~\ref{theorem_MC3}, where $s^*_0=1$ and $\pi^*(\textbf{s}^*)_j=0$, namely,
\begin{align}\label{equ_MCj}
  MC_j^*\triangleq MC(\textbf{s}^*, \bm{\pi}^*(\textbf{s}^*))
\end{align}
for $1\leq j\leq k$. The corresponding cluster order is represented with $\bm{\pi}^{(j)}$, namely,
\begin{align}\label{equ_pij}
\bm{\pi}^{(j)}&\triangleq\bm{\pi}^*(\textbf{s}^*)\text{ with }\pi^*(\textbf{s}^*)_j=0.
\end{align}
\begin{rmk}\label{rmk_pi} Note that the separate selected node location is not identified by $\bm{\pi}^*(\textbf{s}^*)$ explicitly. For notational simplicity, we use $\bm{\pi}^*(\textbf{s}^*)$ to denote the cluster order without separate nodes if not explicitly state. On the other hand, let $\bm{\pi}^{(j)}$ denote the cluster order with one separate selected node at location $j$. Both $\bm{\pi}^*(\textbf{s}^*)$ and $\bm{\pi}^{(j)}$ are generated by Algorithm~1 and 2.
\end{rmk}

Subsequently, we compare min-cuts $MC_j^*$ and $MC_{j+1}^*$ and prove that $MC_j^*\geq MC_{j+1}^*$ for $1\leq j\leq k-1$ in the following Theorem~\ref{thm_MCj}, with which the final capacity of the CSN-DSS model is derived in Theorem~\ref{theorem_MCSN}.

\begin{theorem}\label{thm_MCj}
  For a CSN-DSS model with one separate node, the min-cuts of IFGs corresponding to the cluster order $\bm{\pi}^{(j)}$ satisfy that
  $$MC_j^*\geq MC_{j+1}^*,$$
  for $1\leq j\leq k-1$, where $\bm{\pi}^{(j)}$ is defined by (\ref{equ_pij}) with the separate node at location $j$ and $MC_j^*$ is defined by (\ref{equ_MCj}).
\end{theorem}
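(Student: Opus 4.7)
The plan is to show that $\bm{\pi}^{(j)}$ and $\bm{\pi}^{(j+1)}$ are related by a single adjacent swap at positions $j$ and $j+1$, so that the difference $MC_j^* - MC_{j+1}^*$ collapses to a four-term comparison that can be settled by concavity of $x \mapsto \min(\alpha,x)$. Inspecting Algorithms~1 and~2, the $k-1$ cluster assignments are produced by the same cyclic rule in both cases; only the position reserved for the separate node differs. Writing the common cluster value as $c_j$, one has $\pi^{(j)}_j = 0$, $\pi^{(j)}_{j+1} = c_j$, while $\pi^{(j+1)}_j = c_j$, $\pi^{(j+1)}_{j+1} = 0$, and all other entries coincide.

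First I would verify that $w_i(\bm{\pi}^{(j)}) = w_i(\bm{\pi}^{(j+1)})$ for every $i \notin \{j, j+1\}$. For separate positions this is immediate from~(\ref{equ_ci}). For cluster positions $i > j+1$, the swap preserves the number of occurrences of cluster $c_j$ in $\{1,\dots,i\}$ (exactly one occurrence lies in $\{j,j+1\}$ under either ordering), so $h_{\bm{\pi}}(i)$ is unchanged and, by the extension of Lemmas~\ref{lemma_ai} and~\ref{lemma_aibi} to CSN-DSSs noted in the remark after Theorem~\ref{theorem_MC3}, both $a_i$ and $b_i$ are unchanged. Positions $i < j$ are trivially unchanged. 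Hence $MC_j^* - MC_{j+1}^*$ reduces to the contribution at positions $j$ and $j+1$ alone.

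Next I would compute those four weights explicitly. Let $h$ denote the number of occurrences of cluster $c_j$ in positions $1,\dots,j-1$ plus one; this value equals both $h_{\bm{\pi}^{(j+1)}}(j)$ and $h_{\bm{\pi}^{(j)}}(j+1)$. Using the generalized Lemmas~\ref{lemma_ai} and~\ref{lemma_aibi}, the relevant weights are
\begin{align*}
w_j(\bm{\pi}^{(j)}) &= (d-j+1)\beta_C, \\
w_{j+1}(\bm{\pi}^{(j)}) &= (d_I+1-h)\beta_I + (d_C+h-j-1)\beta_C, \\
w_j(\bm{\pi}^{(j+1)}) &= (d_I+1-h)\beta_I + (d_C+h-j)\beta_C, \\
w_{j+1}(\bm{\pi}^{(j+1)}) &= (d-j)\beta_C.
\end{align*}
Setting $D := w_{j+1}(\bm{\pi}^{(j+1)})$ and $B := w_{j+1}(\bm{\pi}^{(j)})$, one obtains $w_j(\bm{\pi}^{(j)}) = D + \beta_C$ and $w_j(\bm{\pi}^{(j+1)}) = B + \beta_C$; the key algebraic identity is $B - D = (d_I+1-h)(\beta_I - \beta_C) \geq 0$, which uses $\beta_I \geq \beta_C$ together with $h \leq R = d_I + 1$.

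The final step is concavity: since $f(x) := \min(\alpha,x)$ is concave and nondecreasing, $D \leq B$ and $\beta_C \geq 0$ yield $f(D+\beta_C) - f(D) \geq f(B+\beta_C) - f(B)$, equivalently $f(D+\beta_C) + f(B) \geq f(B+\beta_C) + f(D)$; translating back through the definitions delivers $MC_j^* \geq MC_{j+1}^*$. The main obstacle will be the bookkeeping in the first step: rigorously checking the invariance of $h_{\bm{\pi}}(i)$ for $i > j+1$ under the swap and invoking the CSN-DSS extensions of Lemmas~\ref{lemma_ai} and~\ref{lemma_aibi} consistently, together with handling the degenerate case $h = d_I + 1$ where equality holds; the remaining algebra and the concavity argument are short.
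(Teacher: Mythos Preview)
Your proposal is correct and shares the paper's overall decomposition: both arguments first observe that $\bm{\pi}^{(j)}$ and $\bm{\pi}^{(j+1)}$ differ only by swapping the entries at positions $j$ and $j+1$, so that $w_i$ agrees for all $i\notin\{j,j+1\}$, and then reduce the comparison $MC_j^*\geq MC_{j+1}^*$ to the four weights at those two positions. Your explicit formulas and the identity $w_j(\bm{\pi}^{(j+1)})-w_{j+1}(\bm{\pi}^{(j)})=\beta_C$ match the paper's equations (\ref{equ_lc1})--(\ref{equ_lc4}) exactly.

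Where you diverge is in the last step. The paper establishes that $w_j(\bm{\pi}^{(j+1)})$ is the largest and $w_{j+1}(\bm{\pi}^{(j+1)})$ the smallest of the four values, then splits into two cases according to the relative order of $w_j(\bm{\pi}^{(j)})$ and $w_{j+1}(\bm{\pi}^{(j)})$, and within each case performs a five-way enumeration on the position of $\alpha$ among the four weights. You instead package the four weights as $D,\,D+\beta_C,\,B,\,B+\beta_C$ with $B\geq D$ and invoke the concavity of $x\mapsto\min(\alpha,x)$ once to get $f(D+\beta_C)+f(B)\geq f(B+\beta_C)+f(D)$. This is a genuine simplification: it replaces roughly a page of case analysis by a one-line appeal to the decreasing-increments property of concave functions, and it makes transparent why equal sums (your $D+\beta_C+B=B+\beta_C+D$, the paper's (\ref{equ_lc4})) together with the ``more spread'' pair $\{D,B+\beta_C\}$ having smaller $\min(\alpha,\cdot)$-sum is the only content of the inequality. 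The paper's enumeration buys nothing extra here; your route is cleaner and just as rigorous.
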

\begin{proof}
To prove this theorem, the part incoming weights $w_i\big(\bm{\pi}^{(j)}\big)$ and $w_i\big(\bm{\pi}^{(j+1)}\big)$ ($1\leq i \leq k$) are compared one by one. By analysing cluster orders $\bm{\pi}^{(j)}$ and $\bm{\pi}^{(j+1)}$, we find that $$w_i\big(\bm{\pi}^{(j)}\big)=w_i\big(\bm{\pi}^{(j+1)}\big)$$ for $i\in [k]\setminus \{j,j+1\}$\footnote{Let [k] denote the integer set $\{1,2,...k\}$.}. Hence, we only need to compare $w_j\big(\bm{\pi}^{(j)}\big)$, $w_{j+1}\big(\bm{\pi}^{(j)}\big)$ and $w_j\big(\bm{\pi}^{(j+1)}\big)$, $w_{j+1}\big(\bm{\pi}^{(j+1)}\big)$. We enumerate all the possible cases of the above four components to complete this proof. See \ref{app_lem_MCj} for more details.
\end{proof}

\begin{figure*}[!t]
\center
  \subfloat[$\bm{\pi}^*(\textbf{s}^*)={(1,2,1,2,1,2,1)}$]{\includegraphics[width=0.27\textwidth]{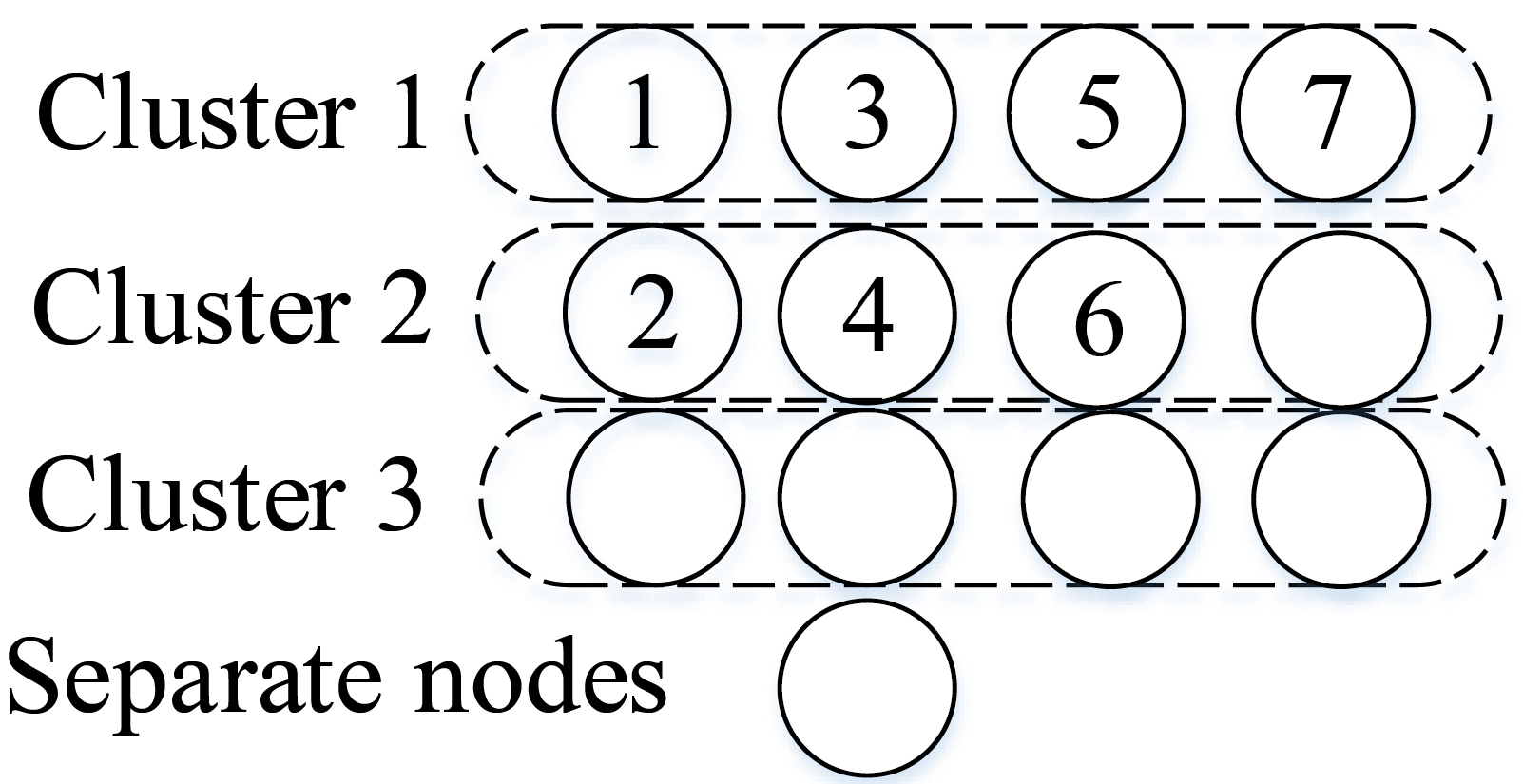}}
  \hspace{0.02\textwidth}
  \subfloat[$\bm{\pi}^{(6)}={(1,2,1,2,1,0,1)}$]{\includegraphics[width=0.27\textwidth]{OneSeparateCluster2.png}}
  \hspace{0.02\textwidth}
  \subfloat[$\bm{\pi}^{(7)}={(1,2,1,2,1,1,0)}$]{\includegraphics[width=0.27\textwidth]{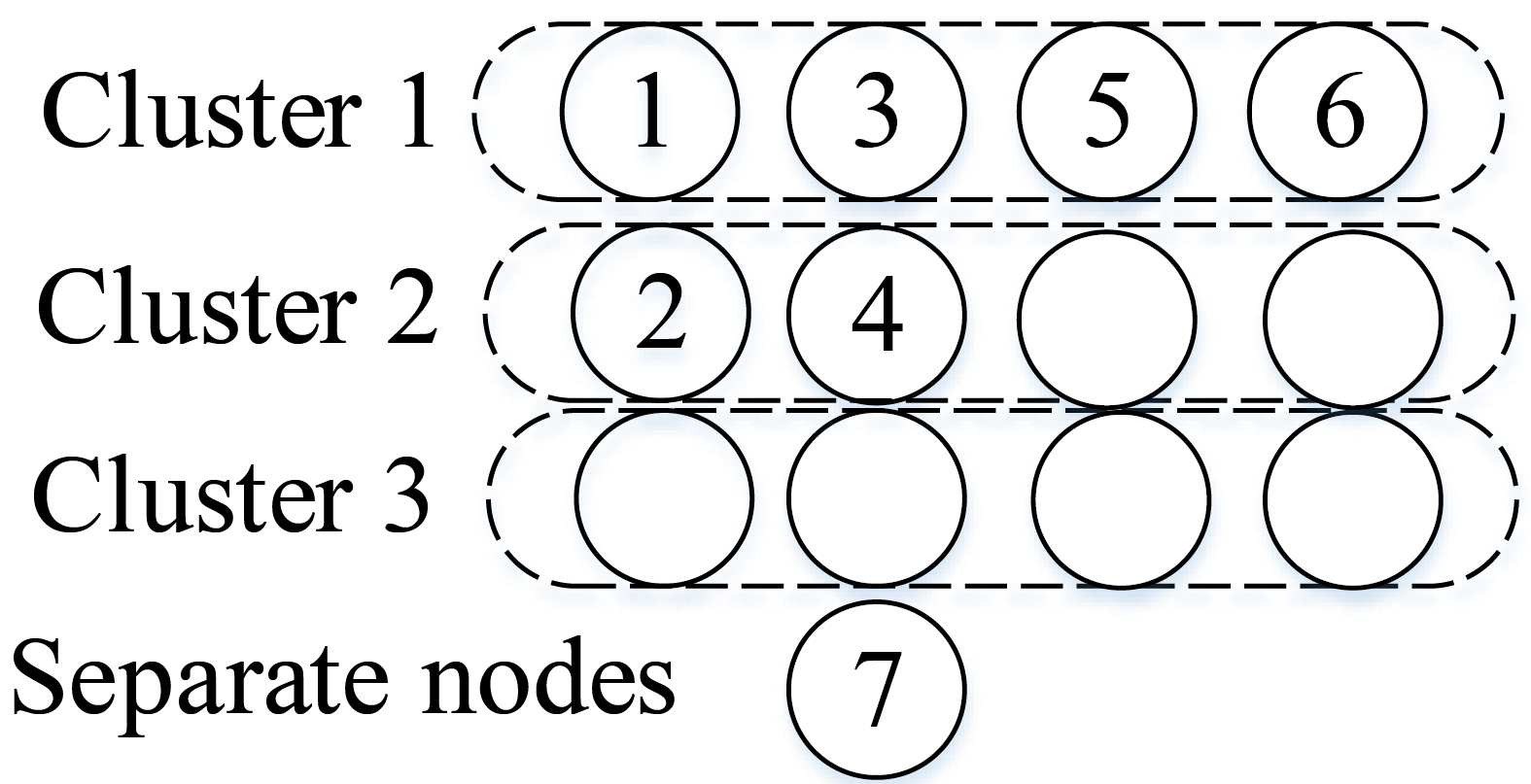}}
  \caption{In (a), the cluster order is $\bm{\pi}^*(\textbf{s}^*)={(1,2,1,2,1,2,1)}$ for $\textbf{s}^*=(0,4,3)$. In (b) and (c), the cluster orders $\bm{\pi}^{(6)}$ and $\bm{\pi}^{(7)}$ are corresponding to $\textbf{s}=(1,4,3,0)$, where the separate selected node locations are $6$ and $7$, respectively.}\label{fig_theoremOneSep}
\end{figure*}

With Theorem~\ref{thm_MCj}, the minimum min-cut with one separate selected node is derived. We obtain the capacity of the CSN-DSS model in Theorem~\ref{theorem_MCSN} through comparing $MC_j^*$ and $MC(\textbf{s}^*,\bm{\pi}^*(\textbf{s}^*))$, where $\bm{\pi}^*(\textbf{s}^*)$ without the separate node is generated by Algorithm~1 and 2, as introduced in Remark~\ref{rmk_pi}.

\begin{theorem} \label{theorem_MCSN}
For a CSN-DSS model with one separate node, the system capacity is
\begin{align*}
MC_{k}^*=\sum_{i=1}^{k}\min\big\{w_i\big(\bm{\pi}^{(k)}\big), \alpha\big\},
\end{align*}
where $\bm{\pi}^{(k)}$ is defined by (\ref{equ_pij}) .
\end{theorem}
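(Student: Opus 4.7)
The plan is to combine Theorem~\ref{theorem_MC3} and Theorem~\ref{thm_MCj}, which together handle the $s_0=1$ sub-family, with a short comparison against the $s_0=0$ sub-family. By Theorem~\ref{theorem_MC3}, for each fixed separate-node location $j\in\{1,\dots,k\}$ the minimum min-cut over all IFGs with $s_0=1$ and $\pi_j=0$ equals $MC_j^*$, attained by $\bm{\pi}^{(j)}$. Theorem~\ref{thm_MCj} then chains these into $MC_1^*\geq MC_2^*\geq\cdots\geq MC_k^*$, so $MC_k^*$ is the minimum min-cut across the entire $s_0=1$ sub-family.

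The only distributions not yet covered belong to $s_0=0$. When $s_0=0$ the separate node sends no edge to $DC$, so the relevant sub-IFG is isomorphic to one arising from a cluster DSS with parameters $(L,R)$ and $k$ selected cluster nodes; Propositions~\ref{prop_MC} and~\ref{prop_MC2} (applied with $E$ temporarily set to $0$) then identify the minimum over this sub-family as $MC(\widetilde{\textbf{s}},\bm{\pi}^*(\widetilde{\textbf{s}}))$, where $\widetilde{\textbf{s}}$ is the Algorithm~2 output with $\widetilde{s}_0=0$. Hence the system capacity reduces to $\min\{MC_k^*,\,MC(\widetilde{\textbf{s}},\bm{\pi}^*(\widetilde{\textbf{s}}))\}$, and the whole theorem comes down to establishing $MC_k^*\leq MC(\widetilde{\textbf{s}},\bm{\pi}^*(\widetilde{\textbf{s}}))$.

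For this last inequality I would build a transitional cluster order $\bm{\pi}''$ by taking $\bm{\pi}^*(\widetilde{\textbf{s}})$ and overwriting its $k$-th entry with the separate-node label $0$. Because each part incoming weight $w_i$ depends only on the selected nodes in positions $<i$, we get $w_i(\bm{\pi}'')=w_i(\bm{\pi}^*(\widetilde{\textbf{s}}))$ for $i<k$; at $i=k$, $\bm{\pi}''$ contributes the separate-node weight $(d+1-k)\beta_C$, while $\bm{\pi}^*(\widetilde{\textbf{s}})$ contributes $a_k\beta_I+b_k\beta_C$ with $a_k+b_k=d+1-k$ by Lemma~\ref{lemma_aibi}, and $\beta_I\geq\beta_C$ yields $w_k(\bm{\pi}'')\leq w_k(\bm{\pi}^*(\widetilde{\textbf{s}}))$. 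Monotonicity of $\min\{\cdot,\alpha\}$ then gives $MC(\bm{\pi}'')\leq MC(\bm{\pi}^*(\widetilde{\textbf{s}}))$. Since $\bm{\pi}''$ lies in the $s_0=1$ sub-family with separate node at position $k$, Theorem~\ref{theorem_MC3} gives $MC_k^*\leq MC(\bm{\pi}'')$, and stitching the two inequalities finishes the proof.

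The main obstacle is the reduction in the second step: one has to justify carefully that applying the cluster-DSS results of Propositions~\ref{prop_MC} and~\ref{prop_MC2} to the $s_0=0$ sub-family of a CSN-DSS is legitimate, i.e., that the presence of an unused separate node in the physical network does not alter the part incoming weights of the selected cluster nodes. Once this reduction is secured, the intermediate-order trick in step three is a one-line inequality, and step one follows directly from the two preceding theorems.
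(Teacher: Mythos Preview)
Your proposal is correct and follows the same overall skeleton as the paper: first use Theorem~\ref{theorem_MC3} and Theorem~\ref{thm_MCj} to identify $MC_k^*$ as the minimum over the $s_0=1$ sub-family, then reduce the $s_0=0$ sub-family to the cluster-DSS Propositions~\ref{prop_MC} and~\ref{prop_MC2}, and finally compare $MC_k^*$ with the cluster-only optimum via an intermediate cluster order.

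The difference lies entirely in this last comparison. The paper does \emph{not} overwrite position $k$; instead it picks the specific index $j=k-\lfloor k/R\rfloor$ and observes that the Algorithm~1/2 output $\bm{\pi}^{(j)}$ (with $s_0^*=1$, separate node at $j$) coincides with the cluster-only optimum $\bm{\pi}^*(\textbf{s}^*)$ at every position $t\neq j$. This coincidence, together with Lemma~\ref{lemma_aibi} at position $j$, gives $MC(\textbf{s}^*,\bm{\pi}^*(\textbf{s}^*))\geq MC_j^*$, and then Theorem~\ref{thm_MCj} chains down to $MC_k^*$. Your route instead overwrites position $k$ of the cluster-only optimum to obtain an auxiliary order $\bm{\pi}''$ that is \emph{not} necessarily $\bm{\pi}^{(k)}$, bounds it below by $MC_k^*$ via Theorem~\ref{theorem_MC3}, and bounds it above termwise using Lemma~\ref{lemma_aibi} only at $i=k$. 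Your argument is slightly more economical: it avoids the combinatorial verification that the two Algorithm~1/2 outputs agree away from position $j=k-\lfloor k/R\rfloor$ (which the paper asserts and illustrates by example but does not prove in detail), at the cost of invoking Theorem~\ref{theorem_MC3} once more as a black box. Both arguments hinge on the same inequality $a_k\beta_I+b_k\beta_C\geq(a_k+b_k)\beta_C$ with $a_k+b_k=d+1-k$, so neither is deeper than the other.
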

\begin{proof}
As proven in Theorem~\ref{thm_MCj}, $MC_{k}^*$ is the minimum min-cut with one separate selected node.
Proposition~\ref{prop_MC2} proves that $MC(\textbf{s}^*,\bm{\pi}^*(\textbf{s}^*))$ is the minimum min-cut corresponding to cluster orders without separate nodes. Hence, we only need to compare $MC_{k}^*$ and $MC(\textbf{s}^*,\bm{\pi}^*(\textbf{s}^*))$.

For the cluster order $\bm{\pi}^*(\textbf{s}^*)$ without separate nodes, we can always find a cluster order $\bm{\pi}^{(j)}$ with one fixed separate selected node at location $j$, satisfying that
\begin{align}\label{equ_pit}
\pi^*(\textbf{s}^*)_t=\pi^{(j)}_t,
\end{align}
for $1\leq t\leq k \text{ and } t\neq j$, where $j=k-\left\lfloor \frac{k}{R}\right\rfloor$. Then
\begin{align}
w_t(\bm{\pi}^*(\textbf{s}^*))=w_t\big(\bm{\pi}^{(j)}\big),
\end{align}
for $1\leq t\leq k \text{ and } t\neq j$. When $t=j$,
\begin{align*}
w_j(\bm{\pi}^*(\textbf{s}^*))&=a_j(\bm{\pi}^*(\textbf{s}^*))\beta_I+b_j(\bm{\pi}^*(\textbf{s}^*))\beta_C\\
&\overset{(a)}{\geq} \big(a_j(\bm{\pi}^*(\textbf{s}^*))+b_j(\bm{\pi}^*(\textbf{s}^*))\big)\beta_C\\
&\overset{(b)}{=}(d_I+d_C+1-j)\beta_C=w_j\big(\bm{\pi}^{(j)}\big),
\end{align*}
where (a) is because of $\beta_I\geq \beta_C$ and (b) is based on Lemma~\ref{lemma_aibi}. Hence,
\begin{align*}
MC(\textbf{s}^*,\bm{\pi}^*(\textbf{s}^*))\geq MC_j^*.
\end{align*}
Then
$MC_k^* \leq MC_j^* \leq MC(\textbf{s}^*,\bm{\pi}^*(\textbf{s}^*)).$

As shown in Figure~\ref{fig_theoremOneSep}, for the cluster order $\bm{\pi}^*(\textbf{s}^*)=(1,2,1,2,1,2,1)$ in (a), we can find a cluster order $\bm{\pi}^{(6)}={(1,2,1,2,1,0,1)}$ in (b) with the separate node at location $j=k-\left\lfloor \frac{k}{R}\right\rfloor=7-\left\lfloor \frac{7}{4}\right\rfloor=6$, satisfying condition (\ref{equ_pit}). Additionally, $w_6(\bm{\pi}^*(\textbf{s}^*))=(d_I-2)\beta_I+(d_C-3)\beta_C \geq (d_I+d_C-5)\beta_C =w_6\big(\bm{\pi}^{(6)}\big)$. Hence, $MC_6^* \leq MC(\textbf{s}^*, \bm{\pi}^*(\textbf{s}^*))$. In Figure~\ref{fig_theoremOneSep} (c), the locations of separate selected nods is $k=7$. Based on Theorem~\ref{thm_MCj}, $MC_7^*\leq MC_6^*$. Then $MC_7^*\leq MC_6^* \leq MC(\textbf{s}^*,\bm{\pi}^*(\textbf{s}^*))$.
\end{proof}

\section{Tradeoffs for the CSN-DSS and cluster DSS model}\label{sec_tradoff}

In Theorem~\ref{theorem_MCSN}, the capacity of a CSN-DSS model with one separate node is specified as $MC_k^*$, the left part of formula (\ref{equ_finalboundS}). As introduced in Subsection~\ref{subsec_PM}, by analysing
\begin{align}\label{equ_finalboundS}
\sum_{i=1}^{k}\min\big\{w_i\big(\bm{\pi}^{(k)}\big), \alpha\big\}\geq \mathcal{M},
\end{align}\noindent
the tradeoff bound between node storage $\alpha$ and repair bandwidth parameters $(d_I, \beta_I, d_C, \beta_C)$ will be characterized, which is figured out in formula (\ref{equ_CSNalpha}) and calculated with formula (\ref{equ_wn*}).

Let $w_i^* \triangleq w_i\big(\bm{\pi}^{(k)}\big)$ for simplicity and $w_i^*$ can be figured out as
\begin{small}\begin{align}\label{equ_wn*}
w_{k-i+1}^*=\begin{cases}
       \left(R-\left\lceil\frac{i}{\left\lfloor\frac{k-1}{R}\right\rfloor+1}\right\rceil\right)\beta_I+\left(d_C-i+\left\lceil \frac{i}{\left\lfloor\frac{k-1}{R}\right\rfloor+1}\right\rceil\right)\beta_C,\\\\
        \qquad\qquad\qquad 1\leq i \leq \left(\left\lfloor\frac{k-1}{R}\right\rfloor+1\right)\left(k-1-\left\lfloor\frac{k-1}{R}\right\rfloor R\right)\\
       \\
       \left(\left\lfloor\frac{k-i-1}{\left\lfloor\frac{k-1}{R}\right\rfloor}\right\rfloor\right)\beta_I+\left(d_C+R-i-\left\lfloor\frac{k-i-1}{\left\lfloor\frac{k-1}{R}\right\rfloor}\right\rfloor\right)\beta_C, \\\\
       \qquad\qquad\ \left(\left\lfloor\frac{k-1}{R}\right\rfloor+1\right)\left(k-1-\left\lfloor\frac{k-1}{R}\right\rfloor R\right)<i\leq k-1\\ \\
       (R+d_C-k)\beta_C, \qquad\qquad\qquad i=k
      \end{cases}.
\end{align}\end{small}\noindent
In formula (\ref{equ_wn*}), the subscript of $w^*$ is set as $k-i+1$ for convenience to ensure
$$w^*_1\leq w^*_2\leq ... \leq w^*_k$$
which will not change $MC_{k}^*$. In fact, we can get $w^*_2\leq ... \leq w^*_k$ by using Algorithm~1 and 2. In addition, because of Lemma~\ref{lemma_aibi} and $\beta_I\geq \beta_C$, we can also find that $w_1^*=(d_I+d_C+1-k)\beta_C$ is the minimum among $w_i^*(1\leq i\leq k)$.

Assume $\alpha^*$ is the minimum $\alpha$ satisfying (\ref{equ_finalboundS}). With similar methods in \cite{Wangjz2018}, the tradeoff bound is obtained (See Figure~\ref{fig_tradeoffDiffk} for example).
For $i=2,...,k$,
\begin{small}\begin{align}\label{equ_CSNalpha}
\alpha^*=\frac{\mathcal{M}-\sum_{j=1}^{i-1}w^*_j}{k-i+1},
\end{align}\end{small}\noindent
for $\mathcal{M}\in \big(\sum_{j=1}^{i-1}w^*_j+(k-i+1)w^*_{i-1}, \sum_{j=1}^{i} w^*_j+(k-i)w^*_i\big]$.
When $i=1$, $\alpha^*=\mathcal{M}/k$ for $\mathcal{M}\in [0,w^*_1]$.

As introduced in \cite{Wangjz2018}, the tradeoff bound of cluster DSSs holds the same expression with (\ref{equ_CSNalpha}), but the values of $w^*_j$ are found differently as
\begin{small}\begin{align}\label{equ_w*}
w_{k-i+1}^*=\begin{cases}
       \left(R-\left\lceil\frac{i}{\left\lfloor\frac{k}{R}\right\rfloor+1}\right\rceil\right)\beta_I+\left(d_C-i+\left\lceil \frac{i}{\left\lfloor\frac{k}{R}\right\rfloor+1}\right\rceil\right)\beta_C, \\\\
       \qquad\qquad\qquad 1\leq i \leq \left(\left\lfloor\frac{k}{R}\right\rfloor+1\right)\left(k-\left\lfloor\frac{k}{R}\right\rfloor R\right)\\
       \\
       \left(\left\lfloor\frac{k-i}{\left\lfloor\frac{k}{R}\right\rfloor}\right\rfloor\right)\beta_I+\left(d_C+R-i-\left\lfloor\frac{k-i}{\left\lfloor\frac{k}{R}\right\rfloor}\right\rfloor\right)\beta_C, \\\\ \qquad\qquad\qquad  \left(\left\lfloor\frac{k}{R}\right\rfloor+1\right)\left(k-\left\lfloor\frac{k}{R}\right\rfloor R\right)<i\leq k.
      \end{cases}
\end{align}\end{small}\noindent

\begin{rmk}
When $\beta_I=\beta_C$, Equation (\ref{equ_wn*}) and (\ref{equ_w*}) both reduce to $$w_{k-i+1}^*=(R+d_C-i)\beta_C$$ for $1\leq i\leq k$, according with the tradeoff bounds in \cite{Dimakis2010} without differentiating clusters and separate nodes. Note that we assume all alive nodes in the same cluster with the failed one are used, thus $d=d_I+d_C=R-1+d_C$.
\end{rmk}

Numerical comparisons of tradeoff bounds for CSN-DSSs with and without the separate node are illustrated in Figure~\ref{fig_tradeoffDiffk} in Section~\ref{sec_compare} where we also analyse the differences theoretically. Similarly to the regenerating code constructions mentioned in \cite{Survey2011,Fazeli2016}, interference alignment can also be used in the CSN-DSS with one separate node. We gave a code construction example for cluster DSSs without separate nodes in \cite{Wangjz2018}. The construction problem of CSN-DSSs with one separate node is investigated in the following Section~\ref{sec:consWithS}.

\section{Code constructions for CSN-DSSs with one separated nodes}\label{sec:consWithS}

Minimum storage regenerating (MSR) code was proposed in \cite{Dimakis2010}, indicating the code constructions which achieve the minimum storage point in the optimum tradeoff curve between storage and repair bandwidth. As MSR code is optimal in
terms of the redundancy-reliability \cite{Dimakis2010}, there are lots of research works on MSR codes \cite{Survey2011,Fazeli2016,Rashmi2011Optimal,tang2015anew,yang2015system}. In this section, we use the interference alignment scheme \cite{wu2009reducing} to give an MSR code construction. For example, the point $(\beta_C=1,\alpha=2)$ is the minimum storage point of the CSN-DSS model with $(n=5,\ k=3,\ L=2,\ R=2,\ E=1)$ in Figure~\ref{fig:consTradeoffWithS}, where the storage/bandwidth parameter constraints are $\beta_I=2\beta_C$, $d_I=1$, $d_C=3$. Under the parameter constraints of the CSN-DSS with one separate node in Figure~\ref{fig:consTradeoffWithS}, we will introduce an MSR code construction achieving the point $(\beta_C=1,\alpha=2)$.

\begin{figure}[t]
    \centering
    \includegraphics[width=0.4\textwidth]{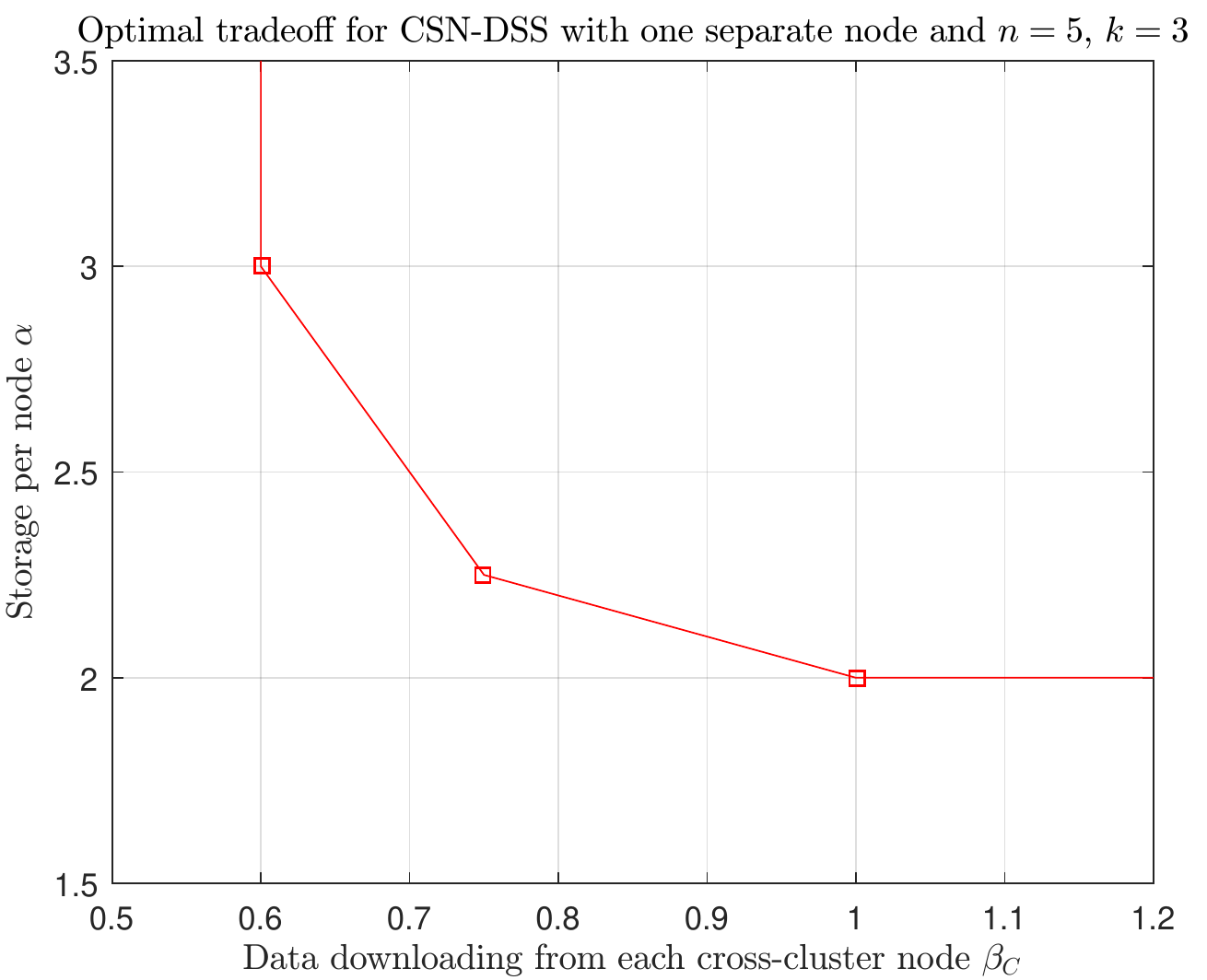}
    \caption{The optimal tradeoff curve between node storage $\alpha$ and each cross-cluster bandwidth $\beta_C$, for the CSN-DSS model, where $\beta_I=2\beta_C$, $d_I=1$, $d_C=3$ and $\mathcal{M}=6$.} \label{fig:consTradeoffWithS}
 \end{figure}

\begin{figure}[t]
    \centering
    \includegraphics[width=0.3\textwidth]{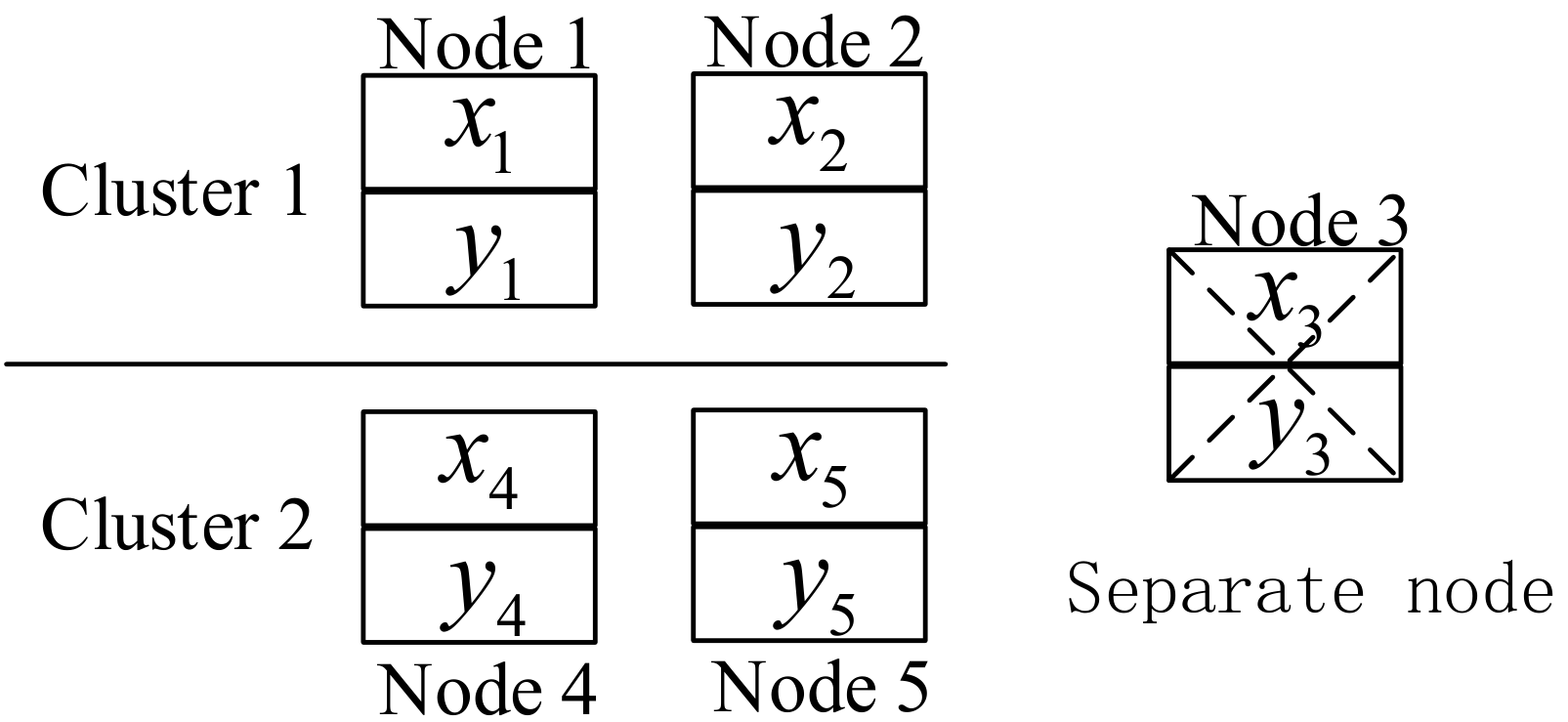}
    \caption{The code construction illustration for the CSN-DSS model, where $\alpha=2$, $\beta_I=2$, $\beta_C=1$, $d_I=1$, $d_C=3$ and $\mathcal{M}=6$.} \label{fig_consWiths}
  \end{figure}

Figure~\ref{fig_consWiths} shows the system node configurations, where there are one separate node and $2$ clusters each with $2$ nodes. We assume the original file consists of $\mathcal{M}=6$ symbols. The encoding and repair procedures are introduced as follows.

\noindent\textbf{Encoding procedure:} The original file consists of $\mathcal{M}=6$ symbols represented by $x_1$, $x_2$, $x_3$, $y_1$, $y_2$, $y_3$ and stored in node 1, 2 and 3 as Figure~\ref{fig_consWiths} shows. We use two $(5,3)$-MDS codes to encode $(x_1,x_2,x_3)$ and $(y_1,y_2,y_3)$ respectively. Let
\begin{align*}
(x_1,x_2,x_3,x_4,x_5)&=(x_1,x_2,x_3)\left[\textbf{I}_{3\times 3}|\textbf{A}\right],\\
(y_1,y_2,y_3,y_4,y_5)&=(y_1,y_2,y_3)\left[\textbf{I}_{3\times 3}|\textbf{B}\right],
\end{align*}
where $\textbf{I}_{3\times 3}$ is an identity matrix. $\textbf{A}=(a_{ij})_{3\times 2}$ and $\textbf{B}=(b_{ij})_{3\times 2}$ are the encoding matrices. Then
$$x_4=a_{11}x_1+a_{21}x_2+a_{31}x_3,\text{ } x_5=a_{12}x_1+a_{22}x_2+a_{32}x_3,$$
$$y_4=b_{11}y_1+b_{21}y_2+b_{31}y_3,\text{ } y_5=b_{12}y_1+b_{22}y_2+b_{32}y_3.$$
The construction of MSR codes is to find the proper $\textbf{A}$ and $\textbf{B}$ satisfying the storage/repair conditions.

\noindent\textbf{Repair procedure:} As constrained by the storage/repair parameters, when Node 1 has failed, the newcomer will download $\beta_I=2$ symbols from Node 2 and $\beta_C=1$ symbol from each of Node 3, 4 and 5 respectively. On the other hand, when the separate Node 3 has failed, the newcomer will download one symbol from each of the four alive nodes. We first assume the separate Node 3 has failed and the four symbols downloaded from Node 1, 2, 4 and 5 respectively are
      \begin{align*}
        &symbol_1=c_{11}x_1+c_{12}y_1 \text{ from Node 1},\\
        &symbol_2=c_{21}x_2+c_{22}y_2 \text{ from Node 2},\\
        &symbol_4=c_{41}x_4+c_{42}y_4 \text{ from Node 4},\\
        &symbol_5=c_{51}x_5+c_{52}y_5 \text{ from Node 5},
      \end{align*}
where $c_{i1}$ and $c_{i2}$ ($i=1,2,4,5$) are called \textbf{repair download parameters}, which are designed beforehand to satisfy condition $(\ref{equ:rankcon})$. Note that the subscript of $symbol$ only represents which cluster it is from. With the equations in the encoding procedure, we get
      \begin{align}
      symbol_1&=c_{11}x_1+c_{12}y_1,\label{eq:sy1}\\
      symbol_2&=c_{21}x_2+c_{22}y_2,\label{eq:sy2}\\
      symbol_4&=c_{41}a_{11}x_1+c_{42}b_{11}y_1+c_{41}a_{21}x_2+c_{42}b_{21}y_2 \notag\\
      &\ \ \ +c_{41}a_{31}x_3+c_{42}b_{31}y_3,\label{eq:sy4}\\
      symbol_5&=c_{51}a_{12}x_1+c_{52}b_{12}y_1+c_{51}a_{22}x_2+c_{52}b_{22}y_2\notag\\
      &\ \ \ +c_{51}a_{32}x_3+c_{52}b_{32}y_3.\label{eq:sy5}
      \end{align}
If the coefficients of $x_i$ and $y_i$ ($1\leq i\leq 3$) in the above 4 equations satisfy that
     \begin{small} \begin{align}\label{equ:rankcon}
      \textbf{rank}\left(\left[\begin{matrix}
             c_{11} & c_{12} \\
             c_{41}a_{11} & c_{42}b_{11}\\
             c_{51}a_{12} & c_{52}b_{12}
           \end{matrix}
      \right]\right)=1,\text{   }
      \textbf{rank}\left(\left[\begin{matrix}
             c_{21} & c_{22} \\
             c_{41}a_{21} & c_{42}b_{21}\\
             c_{51}a_{22} & c_{52}b_{22}
           \end{matrix}
      \right]\right)=1
      \end{align}\end{small}
      and
      \begin{small}
      \begin{align}\label{equ:rankcon2}
      \textbf{rank}\left(\left[\begin{matrix}
             c_{41}a_{31} & c_{42}b_{32}\\
             c_{51}a_{32} & c_{52}b_{32}
           \end{matrix}
      \right]\right)=2,
      \end{align}\end{small}\noindent
we can eliminate $x_1$, $y_1$, $x_2$, $y_2$ in Eq.~\ref{eq:sy4} and Eq.~\ref{eq:sy5} with Eq.~\ref{eq:sy1} and Eq.~\ref{eq:sy2}. Meanwhile, $x_3$ and $y_3$ are solved out.

When a cluster node has failed, a similar repair procedure can be executed. As more symbols can be downloaded from intra-cluster nodes, it may be easier to satisfy conditions (\ref{equ:rankcon}) and (\ref{equ:rankcon2}). In \cite{Survey2011}, the authors proved that there exist MDS codes and repair download parameters satisfying the condition (\ref{equ:rankcon}) and (\ref{equ:rankcon2}). The constructions of MDS codes and repair download parameters covering all possible node failures are more complicated. When the system properties of the CSN-DSS with one separate node are considered, the constructions should be easier to find, but more future works are needed.

\section{Comparison of CSN-DSSs with and without separate nodes}\label{sec_compare}

In Section~\ref{sec_tradoff}, the tradeoff bounds for CSN-DSSs with and without separate nodes are formulated based on the capacities derived in Section~\ref{sec:NoSN} and Section~\ref{sec:withS}. We will compare the capacity and tradeoff bounds of CSN-DSSs with and without the separate node in this Section~\ref{sec_compare}. The tradeoff bounds of cluster DSSs and those after adding a separate node are illustrated in Figure~\ref{fig_tradeoffDiffk}. In Theorem~\ref{theorem_compareS}, we theoretically prove that when each cluster contains $R$ nodes and any $k$ nodes suffice to reconstruct the original file, adding a separate node will keep the capacity if $R|k$, and reduce the capacity otherwise. Two examples are used to illustrate our main proof ideas. Example~\ref{exam_k9} shows the situation where the capacity is reduced. Consequently, the tradeoff bounds move left than those without the separate node (see Figure~\ref{fig_tradeoffDiffk} (b)). On the other hand, adding one separate node will not affect the system capacity in Example~\ref{exam_k8}.

\begin{figure}[t]
  \centering
  \subfloat[$\bm{\pi}^*(\textbf{s}^*)={(1,2,3,1,2,1,2,1,2)}$]{\includegraphics[width=0.23\textwidth]{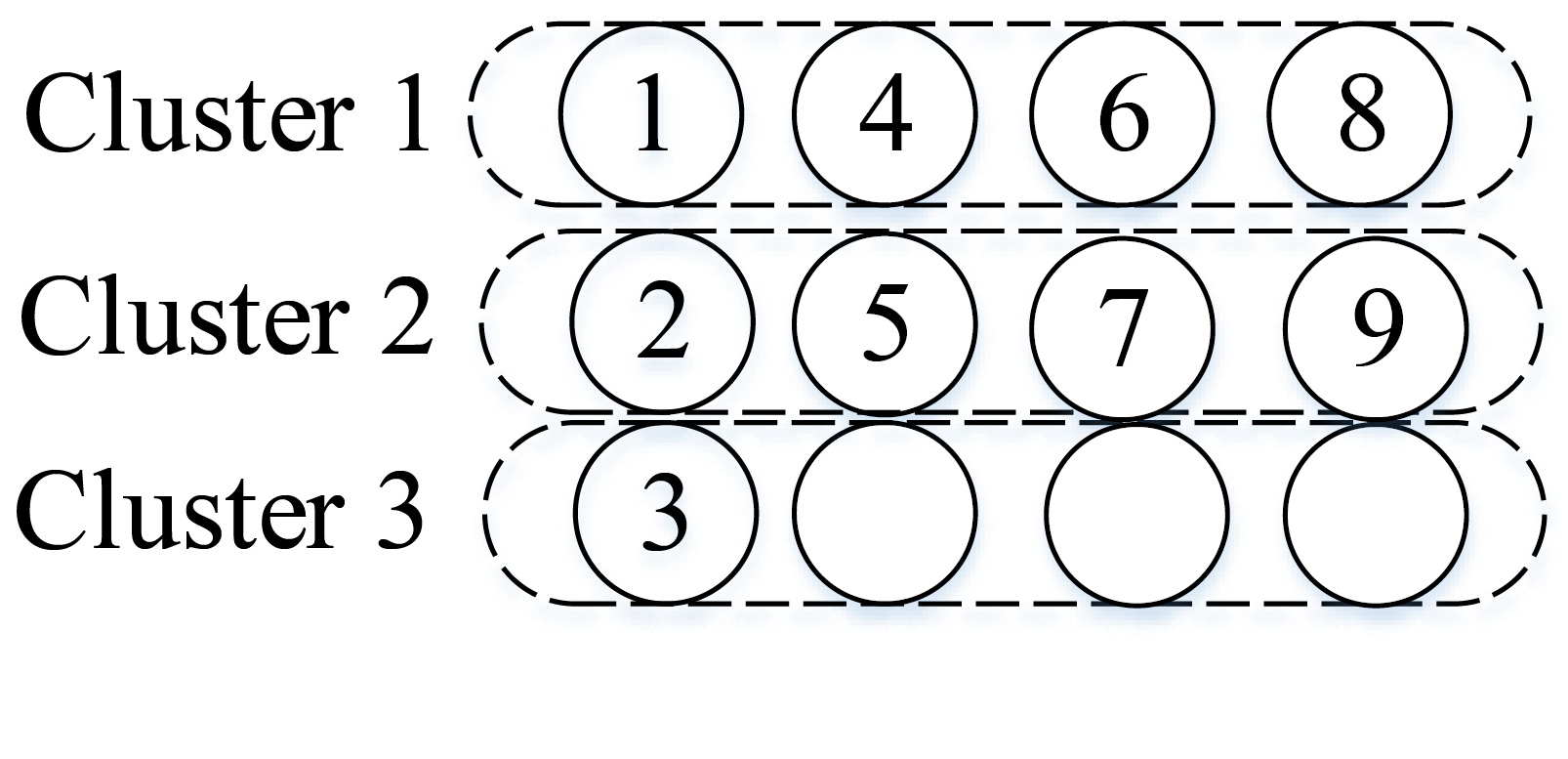}}
  \hspace{0.02\textwidth}
  \subfloat[$\bm{\pi}^{(9)}={(1,2,1,2,1,2,1,2,0)}$]{\includegraphics[width=0.23\textwidth]{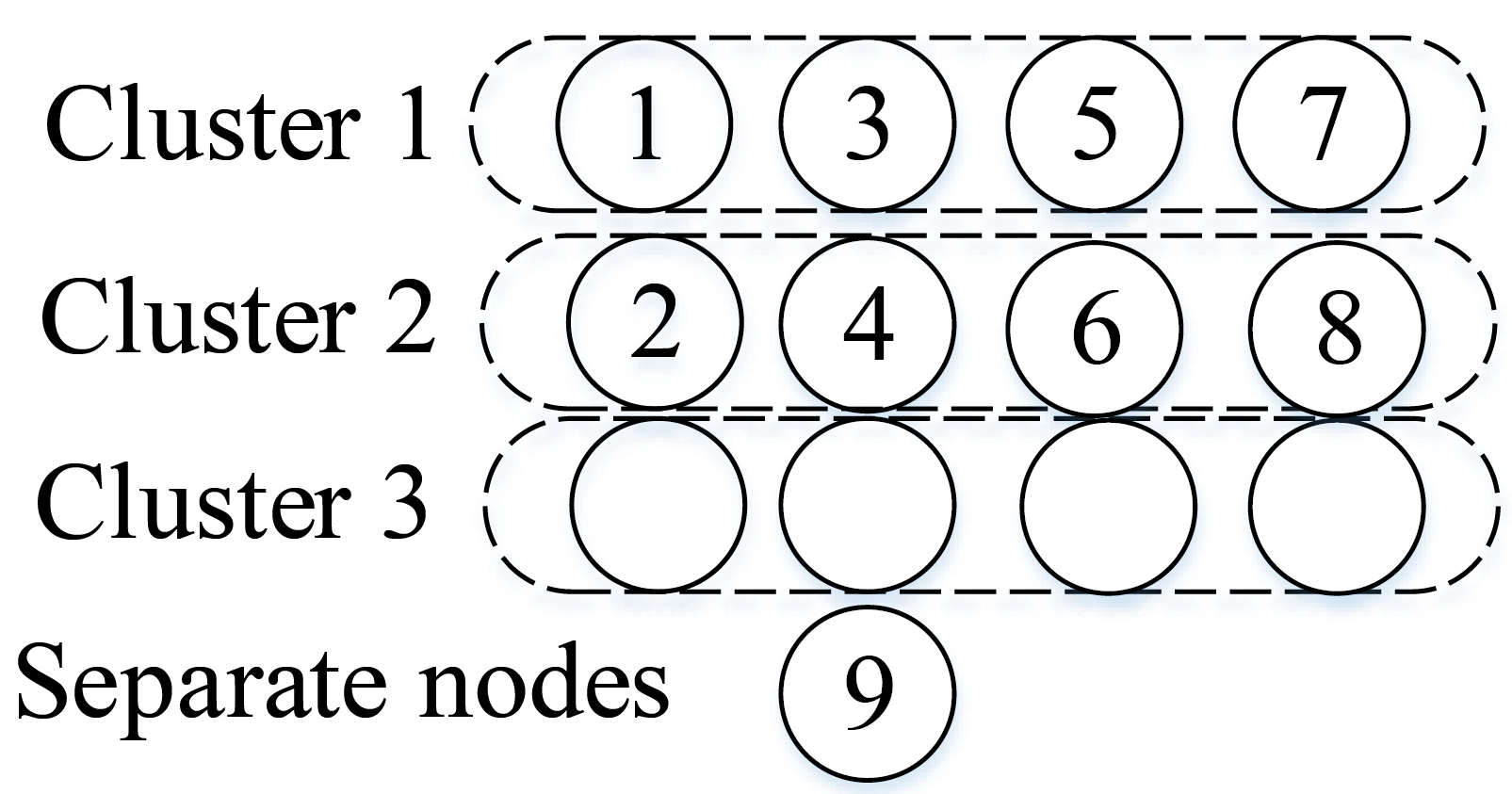}}
  \caption{In (a), the cluster order is $\bm{\pi}^*(\textbf{s}^*)$ for $\textbf{s}^*=(0,4,4,1)$, which achieves the capacity of cluster DSS model. In (b), the cluster order and selected node distribution are $\bm{\pi}^{(9)}$ and $\textbf{s}^*=(1,4,4,0)$ respectively, which achieves the capacity of the CSN-DSS model.}\label{fig:compare9}
\end{figure}

\begin{example}\label{exam_k9}
Figure~\ref{fig:compare9} shows the cluster DSS with $(n=12,\ k=9,\ L=3,\ R=4,\ E=0)$ and the CSN-DSS after adding a separate node. The new CSN-DSS model possesses the same node and storage/repair parameters except $n$ and $E$. Based on Theorem~\ref{theorem_MC3} and Theorem~\ref{theorem_MCSN}, the cluster orders $\bm{\pi}^*(\textbf{s}^*)={(1,2,3,1,2,1,2,1,2)}$ and $\bm{\pi}^{(9)}={(1,2,1,2,1,2,1,2,0)}$ respectively achieve the capacity of these two systems, namely,
\begin{align*}
MC(\textbf{s}^*,\bm{\pi}^*(\textbf{s}^*))=\sum_{i=1}^{k}\min\big\{w_i\big(\bm{\pi}^*(\textbf{s}^*)\big), \alpha\big\}
\end{align*}
and
\begin{align*}
MC_9^*=\sum_{i=1}^{k}\min\big\{w_i\big(\bm{\pi}^{(9)}\big), \alpha\big\}.
\end{align*}
With the methods of calculating $w_i(\bm{\pi})$ (see equation (\ref{equ_wi})), we compare $w_i(\bm{\pi}^*(\textbf{s}^*))$ and $w_i\big(\bm{\pi}^{(9)}\big)$ one by one as follows.
\begin{itemize}
  \item We can verify that $w_i(\bm{\pi}^*(\textbf{s}^*))=w_i\big(\bm{\pi}^{(9)}\big)$ for $i=1,2$.
  \item When $i=4$, although node 4 is in different clusters in Figure~\ref{fig:compare9} (a) and (b), we can get $a_4(\bm{\pi}^*(\textbf{s}^*))=a_4\big(\bm{\pi}^{(9)}\big)=d_I-1$ (based on Lemma~\ref{lemma_ai}) and $b_4(\bm{\pi}^*(\textbf{s}^*))=b_4\big(\bm{\pi}^{(9)}\big)=d_C-2$ (based on Lemma~\ref{lemma_aibi}). Then $w_4(\bm{\pi}^*(\textbf{s}^*))=w_4\big(\bm{\pi}^{(9)}\big)$. Similarly, we get $w_6(\bm{\pi}^*(\textbf{s}^*))=w_6\big(\bm{\pi}^{(9)}\big)$ and $w_8(\bm{\pi}^*(\textbf{s}^*))=w_8\big(\bm{\pi}^{(9)}\big)$.
  \item When $i=5$, node 5 is in the second column in Figure~\ref{fig:compare9} (a) and in the third column in Figure~\ref{fig:compare9} (b). Based on equation (\ref{equ_hi}) and Lemma~\ref{lemma_ai}, $a_5(\bm{\pi}^*(\textbf{s}^*))=d_I-1=a_5\big(\bm{\pi}^{(9)}\big)+1$. Because of Lemma~\ref{lemma_aibi}, $a_5(\bm{\pi}^*(\textbf{s}^*))+b_5(\bm{\pi}^*(\textbf{s}^*))=d_I+d_C+1-5=a_5\big(\bm{\pi}^{(9)}\big)+b_5\big(\bm{\pi}^{(9)}\big)$, then $b_5(\bm{\pi}^*(\textbf{s}^*))=b_5\big(\bm{\pi}^{(9)}\big)-1$. Hence, $w_5(\bm{\pi}^*(\textbf{s}^*))-w_5\big(\bm{\pi}^{(9)}\big)=\beta_I-\beta_C\geq 0$. Similarly, $w_3(\bm{\pi}^*(\textbf{s}^*))-w_3\big(\bm{\pi}^{(9)}\big)\geq 0$ and $w_7(\bm{\pi}^*(\textbf{s}^*))-w_7\big(\bm{\pi}^{(9)}\big)\geq 0$.
  \item When $i=9$, the 9th selected node is in cluster in Figure~\ref{fig:compare9} (a) and separate in Figure~\ref{fig:compare9} (b). Based on Lemma~\ref{lemma_ai} and Lemma~\ref{lemma_aibi}, $a_9(\bm{\pi}^*(\textbf{s}^*))$ and $w_9(\bm{\pi}^*(\textbf{s}^*))=a_9(\bm{\pi}^*(\textbf{s}^*))\beta_I+b_9(\bm{\pi}^*(\textbf{s}^*))\beta_C=(d_C-5)\beta_C$. With equation (\ref{equ_ci}), $w_9\big(\bm{\pi}^{(9)}\big)=(d_I+d_C+1-9)\beta_C=(d_C-5)\beta_C$. Hence, $w_9(\bm{\pi}^*(\textbf{s}^*))=w_9\big(\bm{\pi}^{(9)}\big)$.
 \end{itemize}
\end{example}
Example~\ref{exam_k9} shows that $w_i(\bm{\pi}^*(\textbf{s}^*))=w_i\big(\bm{\pi}^{(9)}\big)$ for $i=1,2,4,6,8,9$ and $w_i(\bm{\pi}^*(\textbf{s}^*))\geq w_i\big(\bm{\pi}^{(9)}\big)$ for $i=3,5,7$. Hence , $MC(\textbf{s}^*,\bm{\pi}^*(\textbf{s}^*))\geq MC_9^*$, and adding one separate node reduces the capacity. Based on equation (\ref{equ_w*}) and (\ref{equ_wn*}), the tradeoff bounds are plotted and compared in Figure~\ref{fig_tradeoffDiffk}~(b), showing that the tradeoff bounds after adding a separate node move left. As introduced in Subsection~\ref{subsec_PM} and proved in \cite{Dimakis2010}, the tradeoff bound is a lower bound for the region of feasible points $(\alpha, d_I, \beta_I, d_C, \beta_C)$ ($(\alpha, \beta_C)$ in Figure~\ref{fig:compare9}) to reliably store the original file of size $\mathcal{M}$. In this situation, adding one separate node reduces the feasible region of reliable storage points.

\begin{figure}[!t]
  \centering
  \subfloat[$k=7$]{\includegraphics[width=0.47\textwidth]{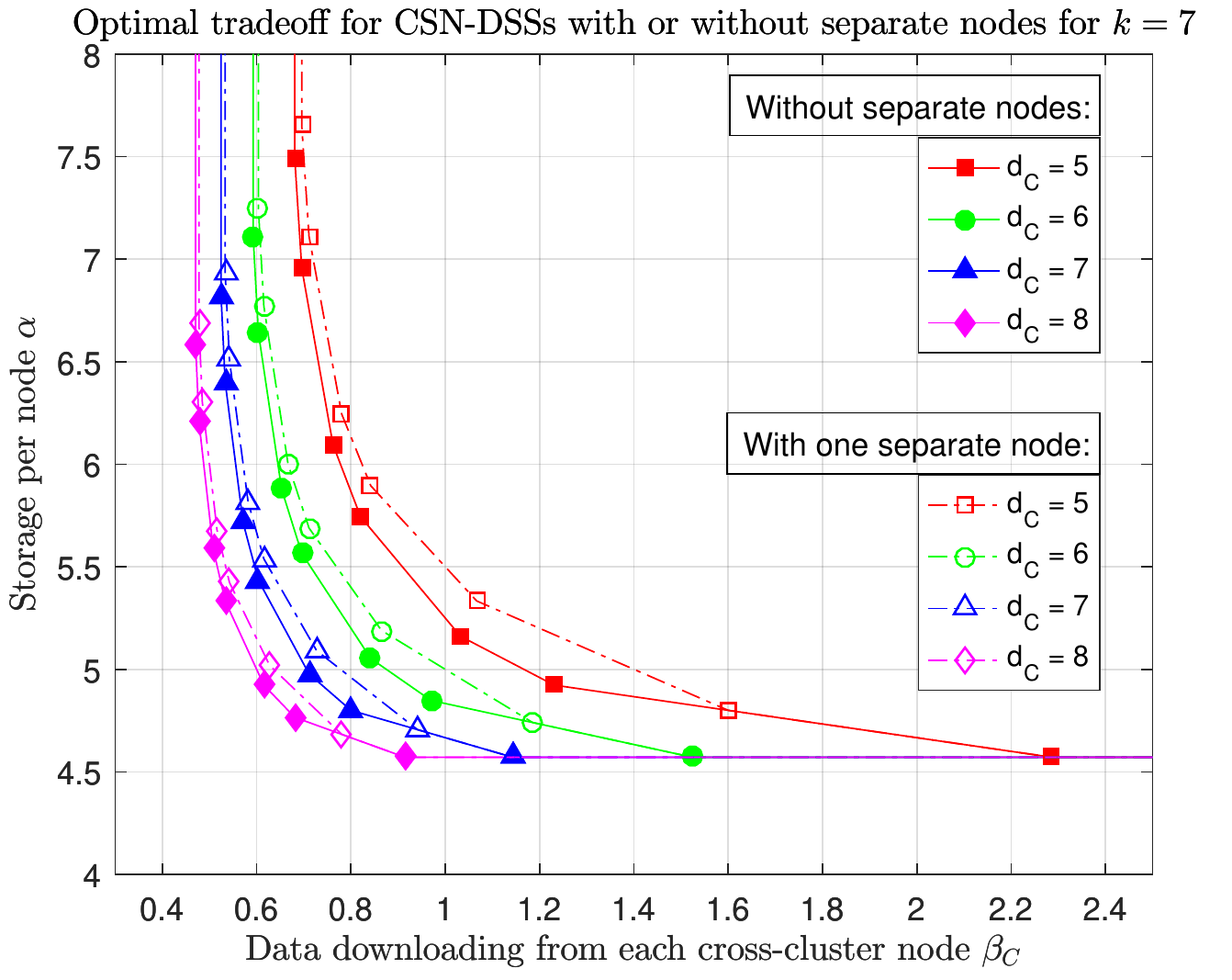}}
  \hspace{0.01\textwidth}
  \subfloat[$k=9$]{\includegraphics[width=0.47\textwidth]{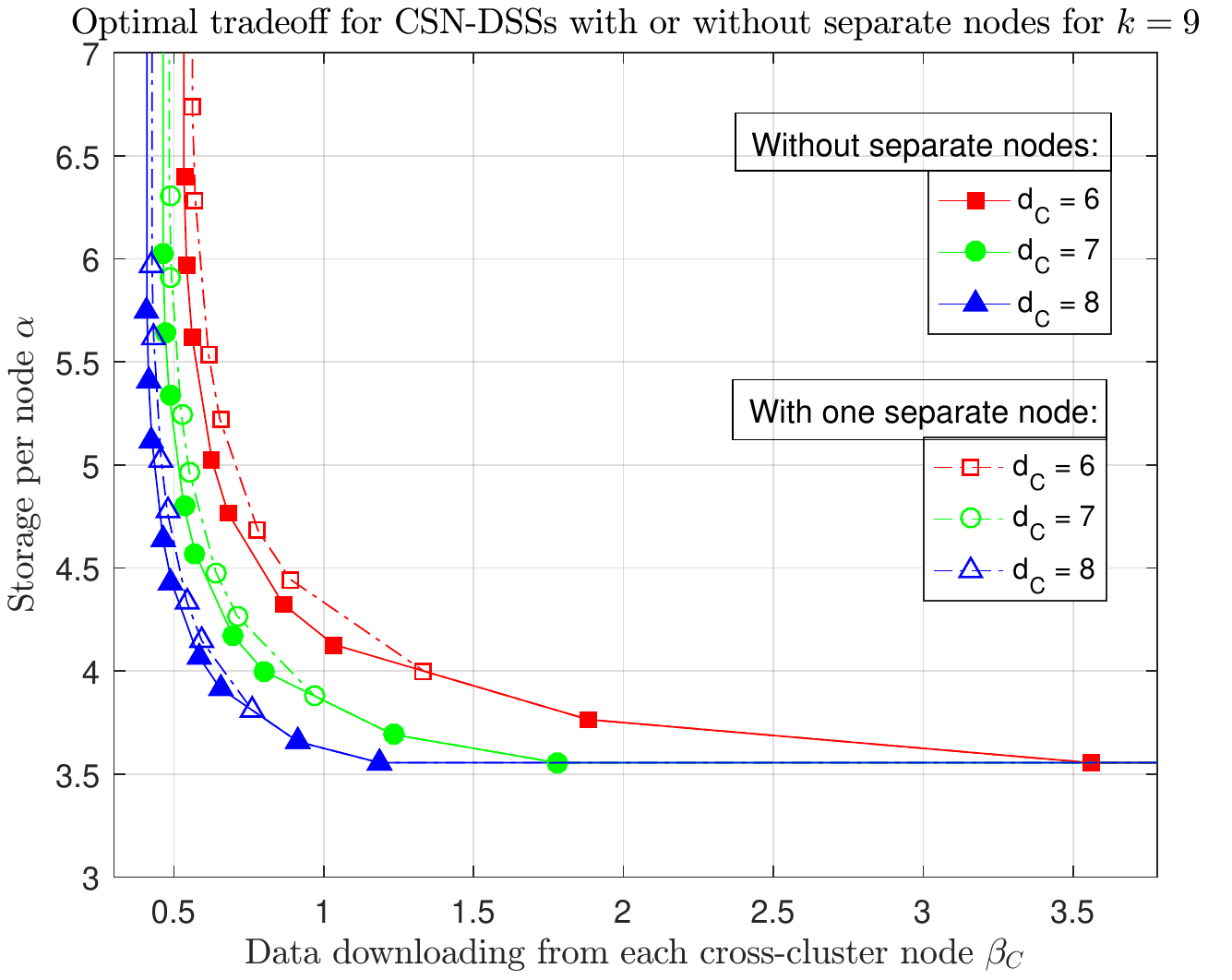}}
  \caption{Optimal tradeoff curves between node storage $\alpha$ and each cross-cluster bandwidth $\beta_C$ for the cluster DSS model with $(n=12,k=7\text{ or } 9,L=3,R=4,E=0)$ and the CSN-DSS model with $(n=13,k=7\text{ or }9,L=3,R=4,E=1)$. The bandwidth constraint is $\tau=\beta_I/\beta_C=2$ and $\mathcal{M}=32$. The tradeoff curves are for different numbers of cross-cluster helper nodes $d_C$ respectively. The solid and dotted lines are for the cluster DSS and CSN-DSS models respectively.}\label{fig_tradeoffDiffk}
\end{figure}

\begin{figure}[!htp]
  \centering
  \subfloat[$\bm{\pi}^*(\textbf{s}^*)={(1,2,1,2,1,2,1,2)}$]{\includegraphics[width=0.23\textwidth]{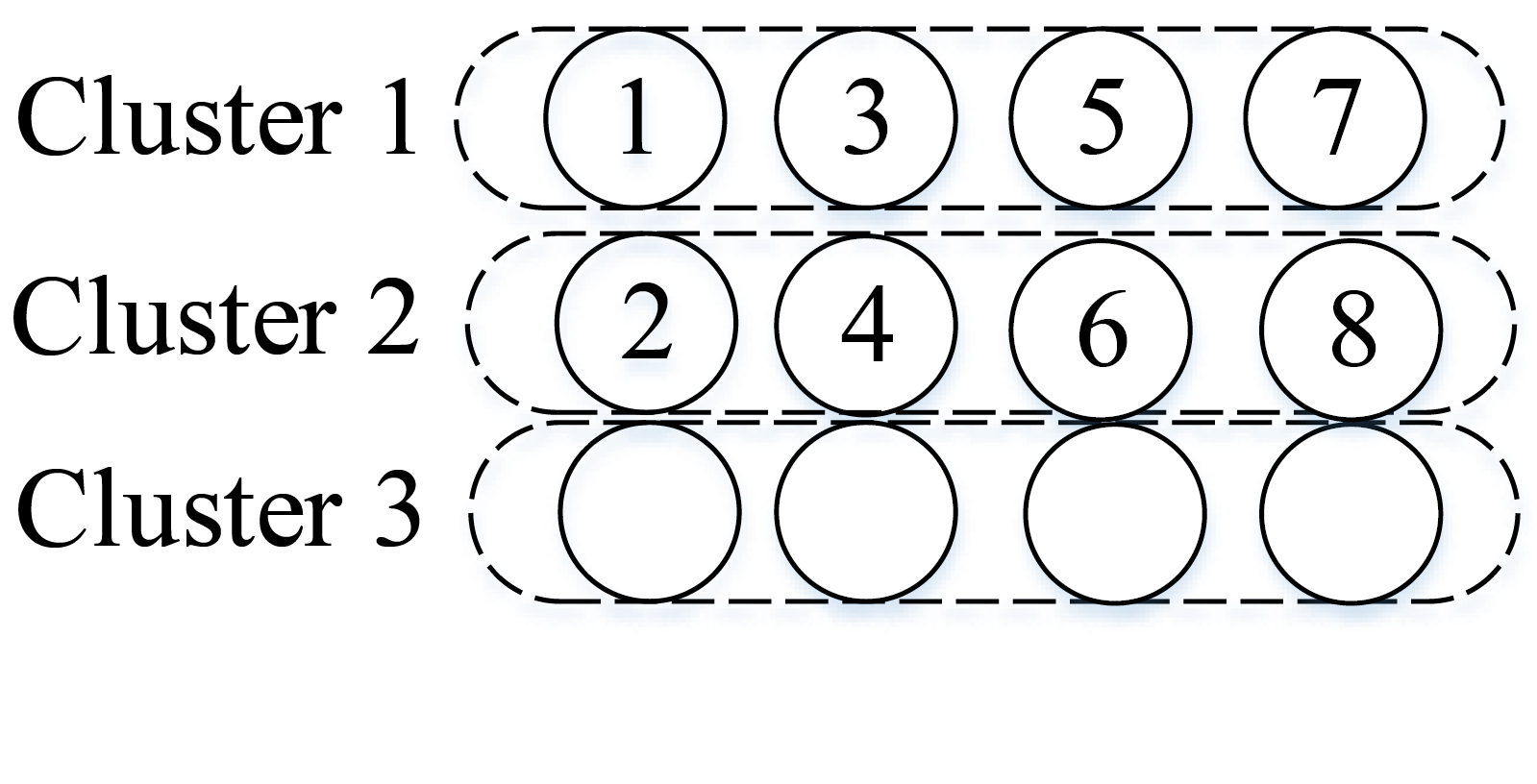}}
  \hspace{0.02\textwidth}
  \subfloat[$\bm{\pi}^{(8)}={(1,2,1,2,1,2,1,0)}$]{\includegraphics[width=0.23\textwidth]{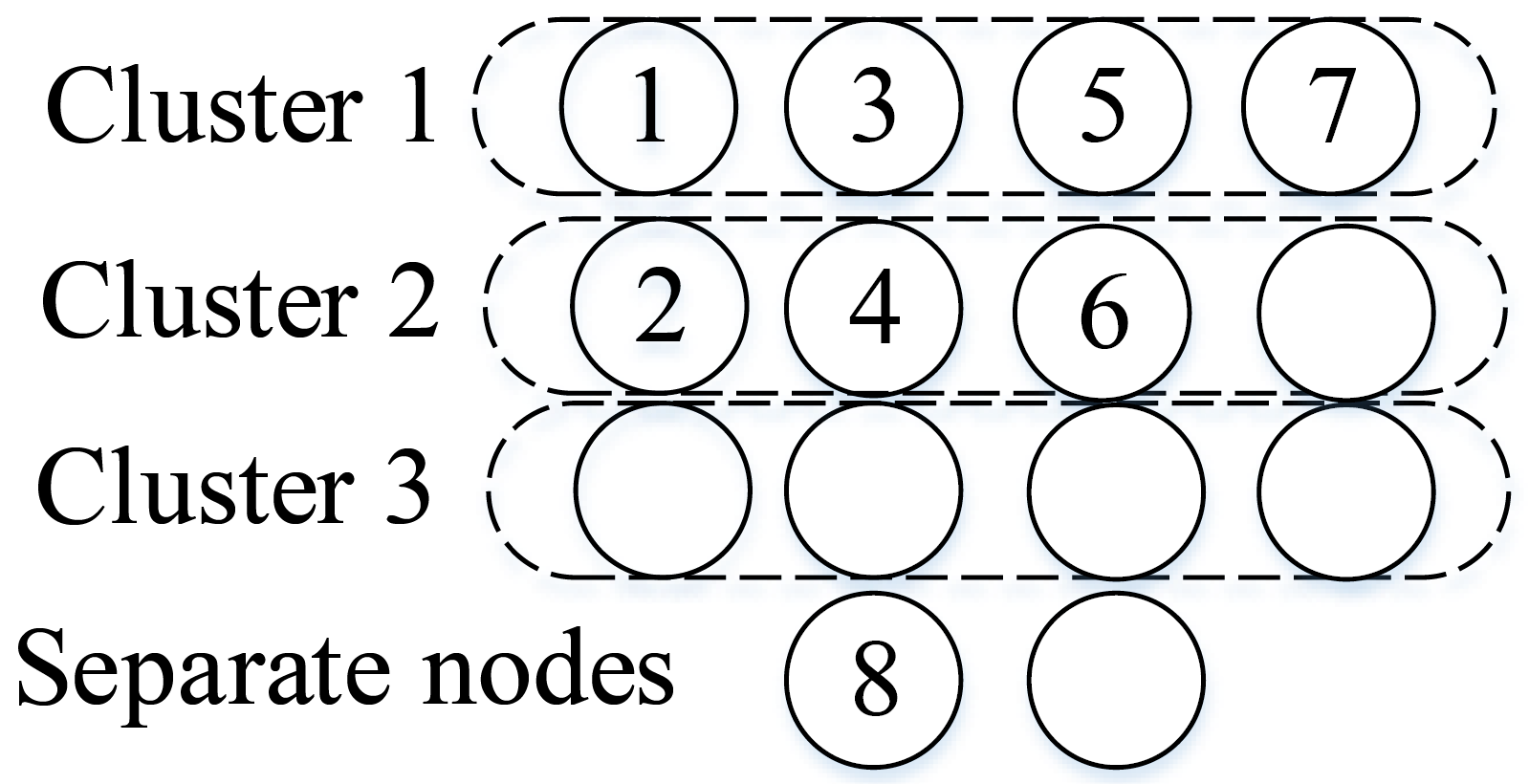}}
  \caption{In (a), the cluster order is $\bm{\pi}^*(\textbf{s}^*)$ for $\textbf{s}^*=(0,4,4,0)$, which achieves the capacity of the cluster DSS model. In (b), the cluster order and selected node distribution are $\bm{\pi}^{(8)}$ and $\textbf{s}^*=(1,4,4,0)$ respectively, which achieves the capacity of the CSN-DSS model.}\label{fig:compare8}
\end{figure}

\begin{example}\label{exam_k8}
Figure~\ref{fig:compare8} provides another example where adding a separate node will not change the system capacity. By comparing the selected nodes in $\bm{\pi}^*(\textbf{s}^*)={(1,2,1,2,1,2,1,2)}$ and $\bm{\pi}^{(8)}={(1,2,1,2,1,2,1,0)}$, it is easy to find that the locations of  the first 7 selected nodes in Figure~\ref{fig:compare8} (a) are the same to those in Figure~\ref{fig:compare8} (b). Hence, $w_i(\bm{\pi}^*(\textbf{s}^*))=w_i\big(\bm{\pi}^{(8)}\big)$ for $1\leq i\leq 7$ and we only need to compare $w_8(\bm{\pi}^*(\textbf{s}^*))$ and $w_8\big(\bm{\pi}^{(8)}\big)$. Based on Lemma~\ref{lemma_ai}, Lemma~\ref{lemma_aibi} and formula (\ref{equ_ci}),
\begin{align*}
&w_8(\bm{\pi}^*(\textbf{s}^*))\\
&=(d_I+1-h_{\bm{\pi}^*(\textbf{s}^*)}(8))\beta_I+(d_I+d_C+1-8-a_i(\bm{\pi}^*(\textbf{s}^*)))\beta_C\\
&=(R+d_C-7)\beta_C
\end{align*}
and
$$w_8\big(\bm{\pi}^{(8)}\big)=(d_I+d_C+1-8)\beta_C=(R+d_C-7)\beta_C,$$
where $d_I=R-1$ is given assumptions in this paper, as introduced before. Then, $w_8(\bm{\pi}^*(\textbf{s}^*))=w_8\big(\bm{\pi}^{(8)}\big)$. Hence, $MC(\textbf{s}^*,\bm{\pi}^*(\textbf{s}^*))=MC_8^*$ and adding one separate node will not change the capacity.
\end{example}

Example~\ref{exam_k9} and Example~\ref{exam_k8} illustrate that the node parameters decide whether adding one separate node will reduce the system capacity. In Theorem~\ref{theorem_compareS}, we investigate this problem theoretically.

\begin{theorem}\label{theorem_compareS}
  In a cluster DSS model with $(n,k,L,R,E=0)$ and the CSN-DSS model with $(n+1,k,L,R,E=1)$ after adding a new separate node, we assume the storage/bandwith parameters $(d_I=R-1,\ \beta_I,\ d_C,\ \beta_C)$ are the same for these two systems, achieving the reliable storage of file with size $\mathcal{M}$.
  \begin{itemize}
  \item If node parameters $R$ and $k$ satisfy
        \begin{align*}
            R \mid k,
        \end{align*}
  namely, $k$ is divisible by $R$, then the new added separate node will not change the capacity.
  \item If $R \nmid k$, the new added separate node will reduce the capacity of the original cluster DSS model.
  \end{itemize}
\end{theorem}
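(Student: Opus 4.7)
The plan is to reduce the capacity comparison to a term-by-term comparison of the part incoming weights $w_i$ appearing in $MC(\textbf{s}^*,\bm{\pi}^*(\textbf{s}^*))$ (cluster DSS, from Proposition~\ref{prop_MC2}) and $MC_k^*=\sum_{i=1}^{k}\min\{w_i(\bm{\pi}^{(k)}),\alpha\}$ (CSN-DSS, from Theorem~\ref{theorem_MCSN}). Writing $k=mR+r$ with $0\leq r<R$, Algorithm~2 gives $\textbf{s}^*_{\text{cluster}}$ consisting of $m$ full clusters of size $R$ plus, when $r\geq 1$, a further cluster of size $r$, while $\textbf{s}^*_{\text{CSN}}$ consists of one separate node together with either $m-1$ full clusters and a cluster of size $R-1$ (when $r=0$) or $m$ full clusters and a cluster of size $r-1$ (when $r\geq 1$, the cluster being empty if $r=1$). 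Combining Lemmas~\ref{lemma_ai} and~\ref{lemma_aibi}, every cluster-node weight can be rewritten as
\begin{equation*}
w_i=(R+d_C-i)\beta_C+(R-h_{\bm{\pi}}(i))(\beta_I-\beta_C),
\end{equation*}
which is non-increasing in the column index $h_{\bm{\pi}}(i)$. For the separate node at position~$k$ we have $w_k^{\mathrm{CSN}}=(d-(k-1))\beta_C=(R+d_C-k)\beta_C$, and since Algorithm~1 always places the $k$-th cluster-DSS selected node in the final column ($h_{\bm{\pi}^*(\textbf{s}^*)}(k)=R$), this yields $w_k^{\mathrm{cluster}}=w_k^{\mathrm{CSN}}$ automatically. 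Everything therefore reduces to establishing the column inequality $h_{\bm{\pi}^*(\textbf{s}^*)}(i)\leq h_{\bm{\pi}^{(k)}}(i)$ for $1\leq i\leq k-1$.

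I would establish this column inequality by running Algorithm~1 cycle-by-cycle on both systems. When $r=0$, the two algorithms fill the first $m(R-1)$ positions in exactly the same columns and then place positions $m(R-1)+1,\ldots,k-1$ in the final column $h=R$ of identical clusters $1,\ldots,m-1$; the only cluster-DSS node absent from the CSN-DSS (the $R$-th node of cluster $m$) carries weight $(R+d_C-k)\beta_C$, which exactly matches the weight of the separate node that replaces it at position $k$, so $w_i^{\mathrm{cluster}}=w_i^{\mathrm{CSN}}$ for every $i$ and the two capacities coincide. When $r\geq 1$, the cluster DSS runs $r$ long cycles of length $m+1$ followed by $R-r$ short cycles of length $m$, whereas the CSN-DSS runs one fewer long cycle because cluster $m+1$ has lost a node to the separate; this one-position offset propagates from $i^{*}=r(m+1)$ onward and yields the closed forms $h_{\bm{\pi}^*(\textbf{s}^*)}(i)=r+\lceil (i-r(m+1))/m\rceil$ and $h_{\bm{\pi}^{(k)}}(i)=r+\lceil (i-r(m+1)+1)/m\rceil$ on $r(m+1)\leq i\leq k-1$, with equality of the two columns to the left of $r(m+1)$. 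A direct ceiling comparison then gives $h_{\bm{\pi}^{(k)}}(i)\geq h_{\bm{\pi}^*(\textbf{s}^*)}(i)$ everywhere, strict at $i=i^{*}$.

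The capacity conclusion is immediate from the termwise inequality $w_i^{\mathrm{cluster}}\geq w_i^{\mathrm{CSN}}$: it is preserved under $\min\{\cdot,\alpha\}$ and summation, so the cluster-DSS capacity dominates the CSN-DSS capacity in both cases, matching Example~\ref{exam_k8} when $R\mid k$. For $R\nmid k$ with $\beta_I>\beta_C$, taking any $\alpha\geq w_{i^{*}}^{\mathrm{cluster}}$ makes the two minima at $i=i^{*}$ coincide with the respective $w$'s, so the two capacities differ by at least $\beta_I-\beta_C>0$, giving the strict reduction illustrated in Example~\ref{exam_k9}. The main obstacle is the bookkeeping of Algorithm~1 across the transition cycle $r$, where the two selected-node distributions first diverge; I would handle it entirely through the closed-form ceilings above, with boundary sanity checks at $i=r(m+1)$ and $i=k-1$ (both always landing in column $R$ in their respective orderings, consistent with Lemma~\ref{lemma_aibi}).
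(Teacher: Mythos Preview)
Your approach is essentially the paper's: a term-by-term comparison of $w_i(\bm{\pi}^*(\textbf{s}^*))$ and $w_i(\bm{\pi}^{(k)})$, reduced via Lemmas~\ref{lemma_ai} and~\ref{lemma_aibi} to a comparison of column indices. Your reformulation through the identity $w_i=(R+d_C-i)\beta_C+(R-h_{\bm{\pi}}(i))(\beta_I-\beta_C)$ and the closed-form ceilings for $h$ is cleaner than the paper's subcase enumeration, but the logic is the same.

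There is one genuine gap. Your claim that ``Algorithm~1 always places the $k$-th cluster-DSS selected node in the final column ($h_{\bm{\pi}^*(\textbf{s}^*)}(k)=R$)'' is false when $k<R$ (i.e.\ $m=0$): all $k$ selected nodes then lie in a single cluster, so $h_{\bm{\pi}^*(\textbf{s}^*)}(k)=k<R$ and $w_k^{\mathrm{cluster}}=(R-k)\beta_I+d_C\beta_C>(R+d_C-k)\beta_C=w_k^{\mathrm{CSN}}$ whenever $\beta_I>\beta_C$. In that regime your offset index $i^{*}=r(m+1)$ equals $k$, which is the separate-node position in $\bm{\pi}^{(k)}$, and your stated interval $[r(m+1),k-1]$ for the ceiling formulas is empty; so your ``strict at $i=i^{*}$'' argument never fires and you have (incorrectly) concluded equality of all $w_i$. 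The paper treats $k<R$ as a separate easy case (its Part~2, Case~1): the columns agree for $i\le k-1$ and the entire reduction comes from the $i=k$ term. You should insert this case explicitly. As a minor aside, your parenthetical sanity check that positions $r(m+1)$ and $k-1$ ``always land in column $R$'' is not correct either (e.g.\ for $k=7$, $R=4$ the cluster-DSS column at $i=6$ is $3$); it plays no role in the argument and should be dropped.
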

\begin{proof}
Let $\bm{\pi}^*(\textbf{s}^*)$ and $\bm{\pi}^{(k)}$ denote the cluster orders achieving the capacity of systems with and without the separate node, as shown in Proposition~\ref{prop_MC2} and Theorem~\ref{theorem_MCSN}. Through analysing cluster orders $\bm{\pi}^*(\textbf{s}^*)$ and $\bm{\pi}^{(k)}$, we compare the part incoming weights $[w_i(\bm{\pi}^*(\textbf{s}^*))]_{i=1}^k$ and $[w_i(\bm{\pi}^{(k)})]_{i=1}^k$ one by one and enumerate all possible cases. In the first part, we investigate the case $R\mid k$, where adding a separate node will not change the capacity corresponding to Example~\ref{exam_k8}. In the second part, we consider the case $R \nmid k$, where the system capacity is reduced (see Example~\ref{exam_k9}). See \ref{app_thm_compareS} for more details.
\end{proof}

Figure~\ref{fig_tradeoffDiffk} (a) and (b) present numerical comparisons between the tradeoff bounds of the cluster DSSs and CSN-DSSs after adding a separate node. It is shown that the tradeoff bounds move right after adding a separate node, implying that the capacities are reduced for different $d_C$ (the number of helper nodes), when $k=7,9$ and other parameters are the same $(\mathcal{M}=32,L=3,R=4,\tau=\beta_I/\beta_C=2)$.  Additionally, as $d_C$ increases, both of the tradeoff bounds with or without the separate node move left, meaning that the feasible region of the reliable storage points increase, which is consistent with the results in \cite{Dimakis2010} and \cite{Wangjz2018}.

In Theorem~\ref{theorem_compareS}, we prove that adding one separate node will reduce or keep the capacity of a cluster DSS, depending on the relationship of node parameters $R$ and $k$. This means that when $k$ is fixed, adding a separate node will not improve the system capacity. Additionally, this paper investigates the capacity of CSN-DSSs with one separate node. The capacity of CSN-DSSs with multiple separate nodes can be analysed similarly, but further theoretical proofs are needed.

\section{Conclusions}\label{sec_conclusion}

In this paper, we characterize the capacity of the CSN-DSS model with one separate node. The tradeoff bounds of CSN-DSSs are compared with cluster DSSs, which is instructive for construct flexible erasure codes adapting to various network conditions. A regenerating code construction strategy is proposed for the CSN-DSS model, achieving the minimum storage point in the tradeoff bound under specific parameters. The influence of adding a separate node is characterized theoretically. We prove that adding one separate node will reduce or keep the capacity of the cluster DSS model, depending on the system node parameters.

\appendices

\section{Proof of Theorem~\ref{theorem_MC3}}\label{app_thm_MC3}

We will reduce this proof to Proposition~\ref{prop_MC} and Proposition~\ref{prop_MC2}. As the separate selected node locations are the same for different $\bm{\pi}$ and $\textbf{s}$, we only need to consider the part of selected cluster nodes by analysing the influence of adding a separate selected node. It is convenient to represent the cluster order $\bm{\pi}$ with another cluster order $\overline{\bm{\pi}}$ without separate nodes, as formula (\ref{equ_newpi}) shows (see Figure~\ref{fig_SC} (b), (c)). The part incoming weights $w_i(\bm{\pi})$ $(1\leq i\leq k)$ are then expressed with $w_i(\overline{\bm{\pi}})$, and this theorem is proved by analysing $w_i(\overline{\bm{\pi}})$ with similar methods used in Proposition~\ref{prop_MC} and \ref{prop_MC2}.

\noindent \textbf{Cluster order assignment:} For any cluster order $\bm{\pi}$ with a separate selected node (the $j$-th one), we can always find a cluster order $\overline{\bm{\pi}}=(\overline{\pi}_1,\overline{\pi}_2,...,\overline{\pi}_k)$ without separate selected nodes satisfying that

\begin{equation}\label{equ_newpi}
  \pi_i=\begin{cases}
          \overline{\pi}_i, & \mbox{if } 1\leq i< j \\
          0, & \mbox{if } i=j \\
          \overline{\pi}_{i-1}, & \mbox{if } j<i\leq k
        \end{cases}.
\end{equation}
Note that the component $\overline{\pi}_k$ will not be used here, thus we only need to analyse the first $k-1$ components of $\overline{\bm{\pi}}$ actually. The corresponding selected node distributions are denoted by $\textbf{s}$ and $\overline{\textbf{s}}$ respectively.
\begin{figure*}[t]
  \centering
  \subfloat
  {\includegraphics[width=0.27\textwidth]{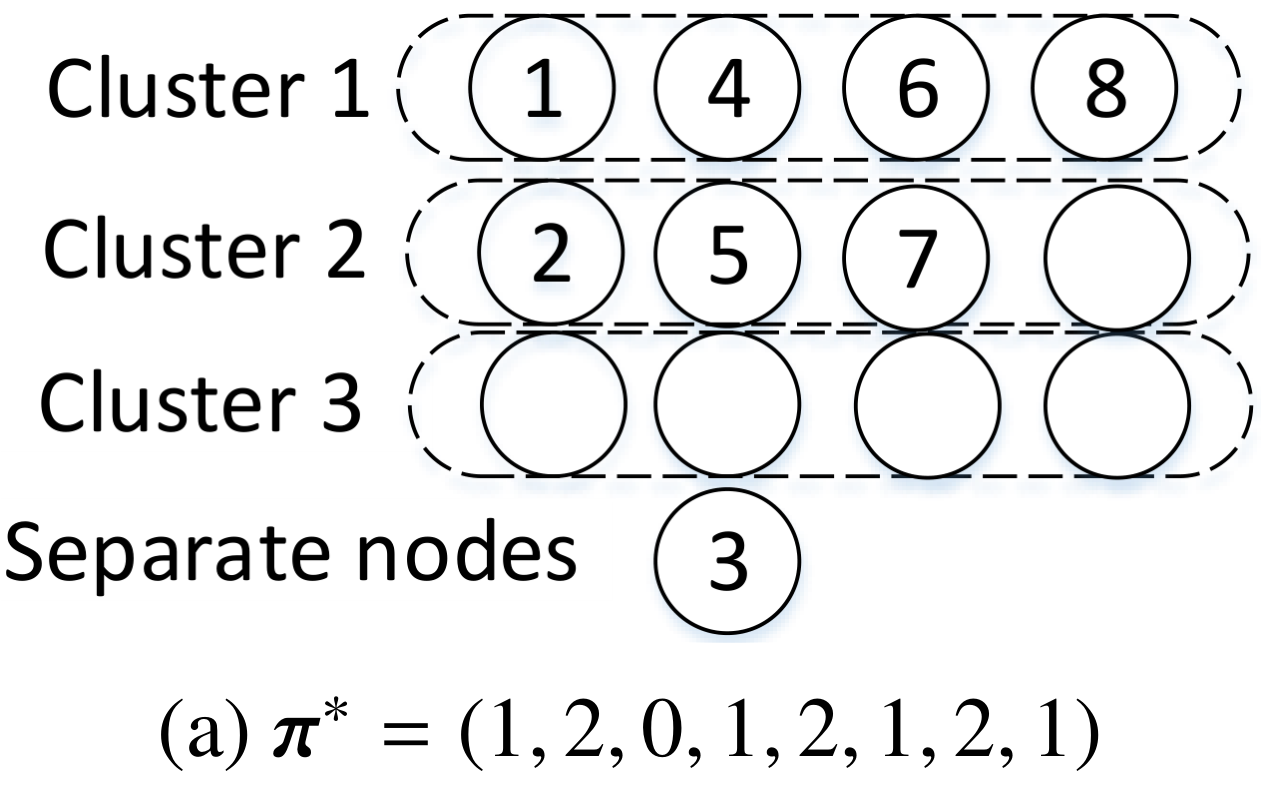}}
  \hspace{0.02\textwidth}
  \subfloat
  {\includegraphics[width=0.27\textwidth]{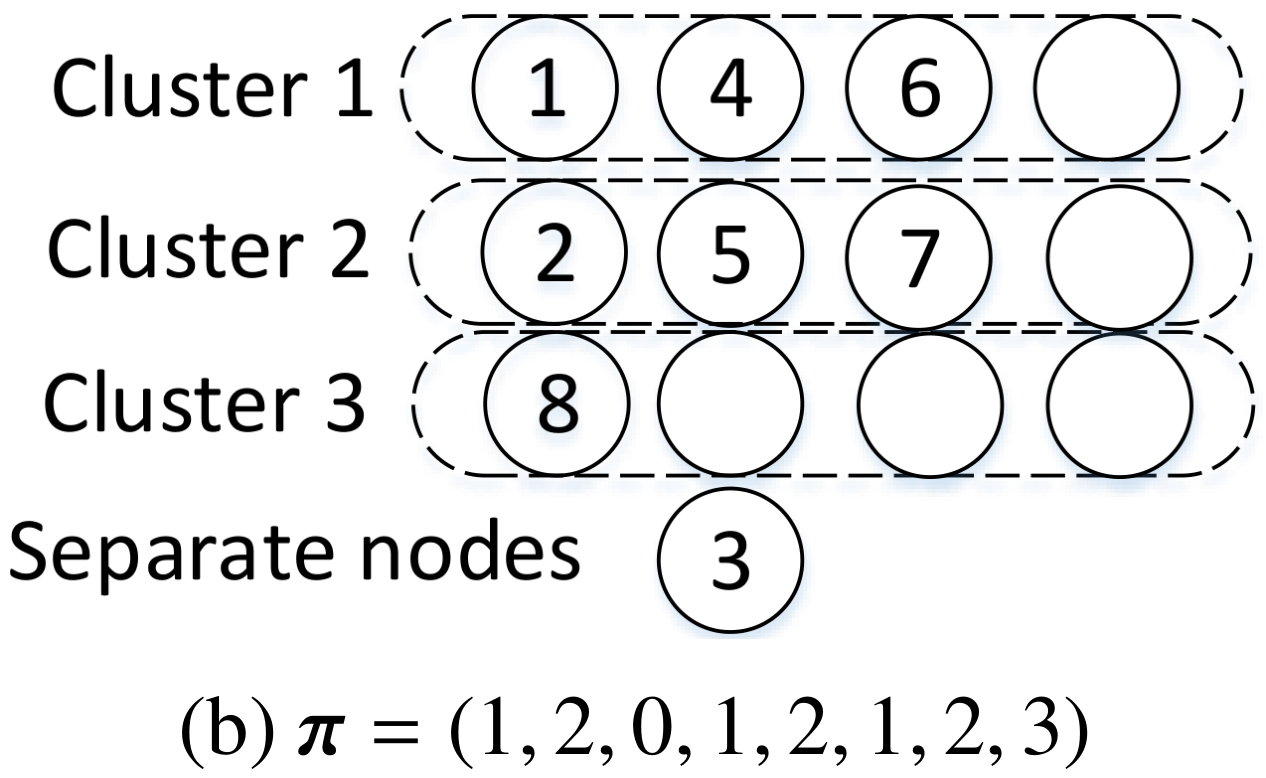}}
  \hspace{0.02\textwidth}
  \subfloat
  {\includegraphics[width=0.27\textwidth]{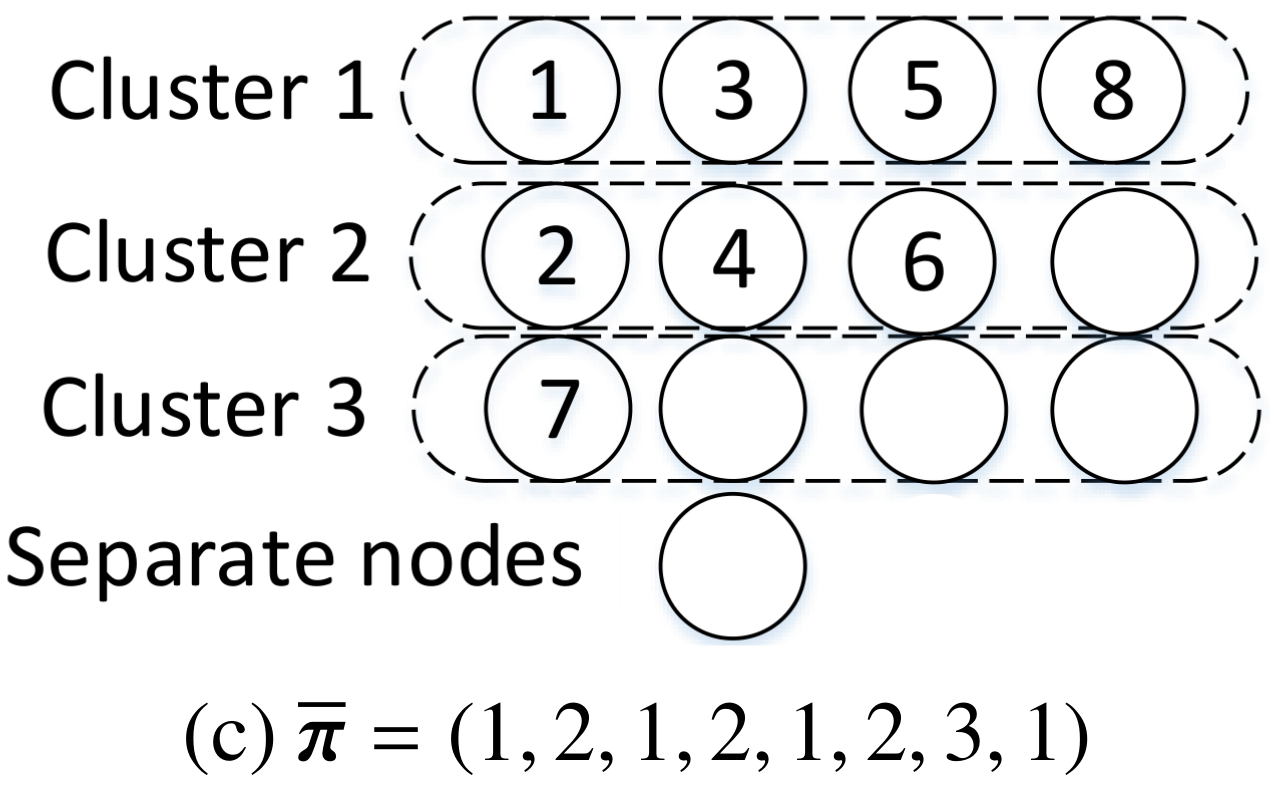}}
  \caption{Three cluster orders $\bm{\pi}^*$, $\bm{\pi}$ and $\overline{\bm{\pi}}$ for the CSN-DSS model.}\label{fig_SC}
\end{figure*}
For example, in Figure~\ref{fig_SC} (b), node 3 is separate in $\bm{\pi}=(1,2,0,1,2,1,2,3)$ which can be represented by the cluster order $\overline{\bm{\pi}}=(1,2,1,2,1,2,3,1)$ in Figure~\ref{fig_SC} (c) as $\pi_1=\overline{\pi}_1$, $\pi_2=\overline{\pi}_2$, $\pi_3=0$ and $\pi_i=\overline{\pi}_{i-1}(i=4,5,6,7,8)$. Figure~\ref{fig_SC} (a) shows the optimal cluster order $\bm{\pi^*}$ and selected node distribution$\textbf{s}^*$ generated by Algorithm~1 and 2, where the third separate selected node is fixed beforehand.

Based on (\ref{equ_wi}), the  part incoming weights for $\bm{\pi}$ are
\begin{small}\begin{equation}
  w_i(\bm{\pi})=\begin{cases}
                  w_i(\overline{\bm{\pi}}), & \mbox{if } 1 \leq i <j \\
                  (d_I+d_C+1-i)\beta_C, & \mbox{if } i=j \\
                  w_{i-1}(\overline{\bm{\pi}})-\beta_C, & \mbox{if } j+1\leq i\leq p\\
                  w_{i-1}(\overline{\bm{\pi}}), & \mbox{if } p+1\leq i\leq k
                \end{cases},
\end{equation}\end{small}\noindent
where $p$ is the integer that $b_{p}(\overline{\bm{\pi}})$ reduces to $0$. As the coefficient of $\beta_C$ won't be negative, the part incoming weight $w_i(\bm{\pi})=w_{i-1}(\overline{\bm{\pi}})=a_{i-1}(\overline{\bm{\pi}})\beta_I$ if $b_{i-1}(\overline{\bm{\pi}})=0$. If $j\geq p$, $w_i(\bm{\pi})=w_{i-1}(\overline{\bm{\pi}})$ for $j+1\leq i \leq k$. The value of $p$ varies according to $\bm{\pi}$ and $\overline{\bm{\pi}}$. Let $p_{max}$ denote the maximum value of $p$ for all $\bm{\pi}$ and $\overline{\bm{\pi}}$.

\noindent$\bullet$ For the case $j\geq p_{max}$,
\begin{small}\begin{align*}
&MC(\textbf{s}, \bm{\pi})=\sum_{i=1}^{k} \min \{w_i(\bm{\pi}), \alpha\}\\
&=\sum_{i=1}^{j-1} \min\{w_i(\overline{\bm{\pi}}), \alpha\}+\min\{c_j\beta_C, \alpha\}+\sum_{i=j+1}^k \min\{w_{i-1}(\overline{\bm{\pi}}), \alpha\}\\
&=\sum_{i=1}^{k-1} \min\{w_i(\overline{\bm{\pi}}), \alpha\}+\min\{c_j\beta_C, \alpha\}.
\end{align*}\end{small}\noindent
As the proof of Proposition~\ref{prop_MC} and Proposition~\ref{prop_MC2} does not depend on the value of $k$, it can be proved that
\begin{small}\begin{eqnarray*}
  \sum_{i=1}^k \min\{w_i(\bm{\pi}^*(\textbf{s}^*)), \alpha\}
  &=&\sum_{i=1}^{k-1} \min\big\{w_i\big(\overline{\bm{\pi}}^*(\textbf{s})\big), \alpha\big\}+\min\{c_j\beta_C, \alpha\} \\
  &\leq&\sum_{i=1}^{k-1} \min\{w_i(\overline{\textbf{s}},\overline{\bm{\pi}}), \alpha\}+\min\{c_j\beta_C, \alpha\}\\
  &=&\sum_{i=1}^k \min\{w_i(\textbf{s}, \bm{\pi}), \alpha\}
\end{eqnarray*}\end{small}

\noindent$\bullet$ When $j<p_{max}$, we will finish the proof in two parts corresponding to Proposition~\ref{prop_MC} and Proposition~\ref{prop_MC2}, respectively.

\noindent\textbf{Part 1:} For arbitrary given $\textbf{s}$ as Proposition~\ref{prop_MC} shows, we have
\begin{small}\begin{align*}
&MC(\textbf{s}, \bm{\pi})=\sum_{i=1}^{k} \min \{w_i(\bm{\pi}), \alpha\}\\
&=\sum_{i=1}^{j-1} \min\{w_i(\overline{\bm{\pi}}), \alpha\}+\min\{c_j\beta_C, \alpha\}\\
&+\sum_{i=j+1}^{p} \min\{w_{i-1}(\overline{\bm{\pi}})-\beta_C, \alpha\}+\sum_{i=p+1}^k \min\{w_{i-1}(\overline{\bm{\pi}}), \alpha\}\\
&=\sum_{i=1}^{j-1} \min\{w_i(\overline{\bm{\pi}}), \alpha\}+\min\{c_j\beta_C, \alpha\}\\
&+\sum_{i=j}^{p-1} \min\{w_{i}(\overline{\bm{\pi}})-\beta_C, \alpha\}+\sum_{i=p}^{k-1} \min\{w_{i}(\overline{\bm{\pi}}), \alpha\},
\end{align*}\end{small}\noindent
where $w_{i}(\overline{\bm{\pi}})=a_i(\overline{\bm{\pi}})\beta_I+b_i(\overline{\bm{\pi}})\beta_C$.
Let
\begin{eqnarray*}
w'_i(\overline{\bm{\pi}})=a'_i(\overline{\bm{\pi}})\beta_I  +b'_i(\overline{\bm{\pi}})\beta_C
=    \begin{cases}
       w_{i}(\overline{\bm{\pi}}), & \mbox{if } 1\leq i\leq j-1  \\
       w_{i}(\overline{\bm{\pi}})-\beta_C, & \mbox{if } j\leq i\leq p-1\\
       w_{i}(\overline{\bm{\pi}}), & \mbox{if } p\leq i\leq k-1
       \end{cases},
\end{eqnarray*}
then
  $a'_i(\overline{\bm{\pi}})=a_i(\overline{\bm{\pi}})$
and
$$b'_i(\overline{\bm{\pi}})=\begin{cases}
                            b_i(\overline{\bm{\pi}}), & \mbox{if } 1\leq i \leq j-1  \\
                            b_i(\overline{\bm{\pi}})-1, & \mbox{if } j\leq i\leq p-1 \\
                            0, & \mbox{if } p\leq i \leq k-1
                          \end{cases}.$$
Although $b'_i(\overline{\bm{\pi}})=b_i(\overline{\bm{\pi}})-1$ for $j\leq i \leq p-1$, the properties of the coefficient sequences will not change. Assume $p^*$ is the value satisfying that

$$b_{p^*}(\bm{\pi}^*)=0,\text{ and }b_{p^*-1}(\bm{\pi}^*)>0.$$
When $1\leq i\leq p^*$ and $j< p^*$,
\begin{equation}
  \phi_i(\overline{\bm{\pi}}^*)=\begin{cases}
                                  d-i+1, & \mbox{if } 1\leq i \leq j-1 \\
                                  d-i, & \mbox{if } j\leq i \leq p^*.
                                \end{cases}
\end{equation}
and
\begin{equation}
  \phi_i(\overline{\bm{\pi}})\geq\begin{cases}
                                  d-i+1, & \mbox{if } 1\leq i \leq j-1 \\
                                  d-i, & \mbox{if } j\leq i \leq p^*.
                                 \end{cases}.
\end{equation}
With similar methods of Proposition~\ref{prop_MC}, it can be proved that
\begin{small}\begin{equation}
  \sum_{i=1}^{k-1} \min\{w'_i(\overline{\bm{\pi}}^*), \alpha\}\leq \sum_{i=1}^{k-1} \min\{w'_i(\overline{\bm{\pi}}), \alpha\}.
\end{equation}\end{small}\noindent
Hence, $MC(\textbf{s}, \bm{\pi}^*)\leq MC(\textbf{s}, \bm{\pi})$.

\noindent\textbf{Part 2:} Next we will prove that $MC(\textbf{s}^*, \bm{\pi}^*(\textbf{s}^*)) \leq MC(\textbf{s}, \bm{\pi}^*(\textbf{s}))$ as proved in Proposition~\ref{prop_MC2}. With Algorithm~1, it's easy to verify that sequence $\big(w'_i\big(\overline{\bm{\pi}}^*(\textbf{s})\big),...,w'_{k-1}\big(\overline{\bm{\pi}}^*(\textbf{s})\big)\big)$ is non-increasing. Although the values of $b'_i\big(\overline{\bm{\pi}}^*(\textbf{s})\big)$ may change when $i>j$, the relation between $w'_i\big(\overline{\bm{\pi}}^*(\textbf{s})\big)$ and $w'_i\big(\overline{\bm{\pi}}^*(\textbf{s}^*)\big)$ won't be influenced. Similarly to the method used in Proposition~\ref{prop_MC2}, we can prove $w'_i\big(\overline{\bm{\pi}}^*(\textbf{s})\big)\geq w'_i\big(\overline{\bm{\pi}}^*(\textbf{s}^*)\big)$ for $1\leq i \leq k-1$.
Hence,
$$MC(\textbf{s}^*, \bm{\pi}^*(\textbf{s}^*)) \leq MC(\textbf{s}, \bm{\pi}^*(\textbf{s})) \leq MC(\textbf{s}, \bm{\pi}).$$

\section{Proof of Theorem~\ref{thm_MCj}}\label{app_lem_MCj}

\begin{figure}[t]
\centering
  \subfloat[$\bm{\pi}^{(4)}={(1,2,1,0,2,1,2,1)}$]{\includegraphics[width=0.23\textwidth]{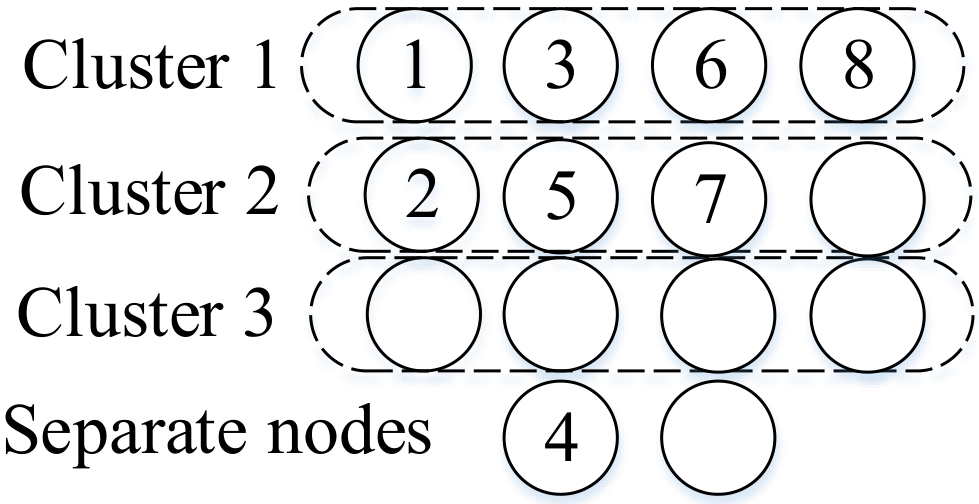}}
  \hspace{0.02\textwidth}
  \subfloat[$\bm{\pi}^{(5)}={(1,2,1,2,0,1,2,1)}$]{\includegraphics[width=0.23\textwidth]{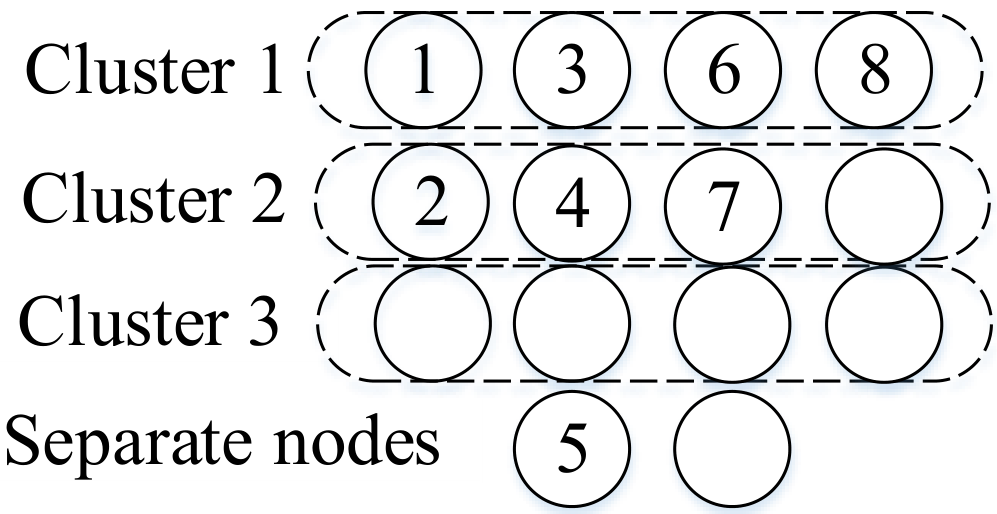}}
  \caption{The cluster orders $\bm{\pi}^{(4)}$ and $\bm{\pi}^{(5)}$ are corresponding to $\textbf{s}=(1,4,3,0)$, where the separate selected node locations are $4$ and $5$, respectively}\label{fig_lemmaOneSep}
\end{figure}

In this proof, we will compare $MC_j^*$ and $MC_{j+1}^*$ for $1\leq j\leq k-1$. As $MC_j^*=\sum_{i=1}^{k}\min\big\{w_i\big(\bm{\pi}^{(j)}\big), \alpha\big\}$, both $w_i\big(\bm{\pi}^{(j)}\big)$ and $\alpha$ need to be taken into consideration. Hence, this proof consists of two aspects. We first compare $\big[w_i\big(\bm{\pi}^{(j)}\big)\big]_{i=1}^k$ and $\big[w_i\big(\bm{\pi}^{(j+1)}\big)\big]_{i=1}^k$ one by one. Through analysing the cluster orders $\bm{\pi}^{(j)}$ and $\bm{\pi}^{(j+1)}$ (i.e., Figure~\ref{fig_lemmaOneSep}), we enumerate all the possible cases of $\big[w_i\big(\bm{\pi}^{(j)}\big)\big]_{i=1}^k$ and $\big[w_i\big(\bm{\pi}^{(j+1)}\big)\big]_{i=1}^k$. Second, we calculate and compare $MC_j^*$ and $MC_{j+1}^*$ concretely in \textbf{Case 1} and \textbf{Case 2} by considering $\alpha$. The details of this proof are as follows.

Through comparing $\bm{\pi}^{(j)}$ and $\bm{\pi}^{(j+1)}$, we can find
\begin{align}
w_i\big(\bm{\pi}^{(j)}\big)=w_i\big(\bm{\pi}^{(j+1)}\big),\label{equ_lc0}
\end{align}
for $i\in [k]\setminus \{j,j+1\}$\footnote{[k] represents the integer set $\{1,2,...k\}$.}. As Figure~\ref{fig_lemmaOneSep} shows, the first three and the last three selected nodes of $\bm{\pi}^{(4)}$ and $\bm{\pi}^{(5)}$ are at the same locations. Then we only need to compare
\begin{small}\begin{align*}
w_j\big(\bm{\pi}^{(j)}\big) &=(d_I+d_C+1-j)\beta_C,\\
w_{j+1}\big(\bm{\pi}^{(j)}\big) &= a_{j+1}\big(\bm{\pi}^{(j)}\big)\beta_I+b_{j+1}\big(\bm{\pi}^{(j)}\big)\beta_C
\end{align*}\end{small}\noindent
and
\begin{small}\begin{align*}
w_j\big(\bm{\pi}^{(j+1)}\big) &= a_j\big(\bm{\pi}^{(j+1)}\big)\beta_I+b_j\big(\bm{\pi}^{(j+1)}\big)\beta_C,\\
w_{j+1}\big(\bm{\pi}^{(j+1)}\big) &=(d_I+d_C-j)\beta_C.
\end{align*}\end{small}\noindent
Based on (\ref{equ_aibi*}) and $\beta_I \geq \beta_C$,
\begin{small}\begin{align}
w_{j+1}\big(\bm{\pi}^{(j)}\big)&\geq a_{j+1}\big(\bm{\pi}^{(j)}\big)\beta_C+b_{j+1}\big(\bm{\pi}^{(j)}\big)\beta_C \notag \\
&=(d_I+d_C-j)\beta_C \notag\\
&=w_{j+1}\big(\bm{\pi}^{(j+1)}\big), \label{equ_lc1}\\
w_{j}\big(\bm{\pi}^{(j+1)}\big)&\geq a_{j}\big(\bm{\pi}^{(j)}\big)\beta_C+b_{j}\big(\bm{\pi}^{(j)}\big)\beta_C \notag\\
&=(d_I+d_C+1-j)\beta_C \notag\\
&=w_{j}\big(\bm{\pi}^{(j)}\big).\label{equ_lc2}
\end{align}\end{small}\noindent
As proven in Lemma~\ref{lemma_ai}, $a_{j+1}\big(\bm{\pi}^{(j)}\big)=d_I+1-h_{\bm{\pi}^{(j)}}(j+1)$ and $a_{j}\big(\bm{\pi}^{(j+1)}\big)=d_I+1-h_{\bm{\pi}^{(j+1)}}(j)$. We can verify that the separate node will not influence the properties of $a_i$ and $b_i$. Hence,
\begin{align}
a_{j+1}\big(\bm{\pi}^{(j)}\big)=a_{j}\big(\bm{\pi}^{(j+1)}\big)
\end{align}
based on (\ref{equ_hi}), the definition of $h_{\bm{\pi}}(\cdot)$. In Figure~\ref{fig_lemmaOneSep}, $a_{5}\big(\bm{\pi}^{(4)}\big)=2=a_{4}\big(\bm{\pi}^{(5)}\big)$. Then
\begin{align}
b_{j}\big(\bm{\pi}^{(j+1)}\big)=b_{j+1}\big(\bm{\pi}^{(j)}\big)+1,
\end{align}
because of formula (\ref{equ_aibi*}) in Lemma~\ref{lemma_aibi}. Hence,
\begin{align}
w_j\big(\bm{\pi}^{(j+1)}\big)-w_{j+1}\big(\bm{\pi}^{(j)}\big)=\beta_C>0. \label{equ_lc3}
\end{align}
On the other hand, $w_j\big(\bm{\pi}^{(j)}\big)-w_{j+1}\big(\bm{\pi}^{(j+1)}\big)=\beta_C$. Thus
\begin{align}
w_j\big(\bm{\pi}^{(j+1)}\big)+w_{j+1}\big(\bm{\pi}^{(j+1)}\big)=w_j\big(\bm{\pi}^{(j)}\big)+w_{j+1}\big(\bm{\pi}^{(j)}\big)\label{equ_lc4}
\end{align}
With (\ref{equ_lc1}) (\ref{equ_lc2}) and (\ref{equ_lc3}), we can find that $w_j\big(\bm{\pi}^{(j+1)}\big)$ is the largest and $w_{j+1}\big(\bm{\pi}^{(j+1)}\big)$ is the smallest among $w_j\big(\bm{\pi}^{(j)}\big)$, $w_{j+1}\big(\bm{\pi}^{(j)}\big)$,$w_{j}\big(\bm{\pi}^{(j+1)}\big)$ and $w_{j+1}\big(\bm{\pi}^{(j+1)}\big)$. Hence, we only need to analyse the following two cases and consider $\alpha$.

\noindent\textbf{Case 1:} $w_j\big(\bm{\pi}^{(j+1)}\big) \geq w_{j}\big(\bm{\pi}^{(j)}\big)\geq w_{j+1}\big(\bm{\pi}^{(j)}\big) \geq w_{j+1}\big(\bm{\pi}^{(j+1)}\big)$.
\begin{enumerate}
  \item When $\alpha \geq w_j\big(\bm{\pi}^{(j+1)}\big)$, based on  (\ref{equ_lc4}),
  \begin{small}
  \begin{align*}
    &\min \big\{w_j\big(\bm{\pi}^{(j+1)}\big), \alpha \big\}+\min \big\{w_{j+1}\big(\bm{\pi}^{(j+1)}\big), \alpha \big\}\\
    &= w_j\big(\bm{\pi}^{(j+1)}\big) + w_{j+1}\big(\bm{\pi}^{(j+1)}\big)\\
    &= w_j\big(\bm{\pi}^{(j)}\big) + w_{j+1}\big(\bm{\pi}^{(j)}\big)\\
    &= \min \big\{w_j\big(\bm{\pi}^{(j)}\big), \alpha \big\} + \min \big\{w_{j+1}\big(\bm{\pi}^{(j)}\big), \alpha \big\}.
  \end{align*}
  \end{small}\noindent
  Hence, $MC_j^*= MC_{j+1}^*.$

  \item When $w_j\big(\bm{\pi}^{(j+1)}\big) \geq \alpha \geq w_{j}\big(\bm{\pi}^{(j)}\big)$,
  \begin{small}
  \begin{align*}
    &\min \big\{w_j\big(\bm{\pi}^{(j+1)}\big), \alpha \big\}+\min \big\{w_{j+1}\big(\bm{\pi}^{(j+1)}\big), \alpha \big\}\\
    &\leq w_j\big(\bm{\pi}^{(j+1)}\big) + w_{j+1}\big(\bm{\pi}^{(j+1)}\big)\\
    &\overset{(a)}{=} w_j\big(\bm{\pi}^{(j)}\big) + w_{j+1}\big(\bm{\pi}^{(j)}\big)\\
    &\overset{(b)}{=}\min \big\{w_j\big(\bm{\pi}^{(j)}\big), \alpha \big\} + \min \big\{w_{j+1}\big(\bm{\pi}^{(j)}\big), \alpha \big\},
  \end{align*}\end{small}\noindent
  where (a) results from (\ref{equ_lc4}) and (b) is based on $\alpha \geq w_{j}\big(\bm{\pi}^{(j)}\big) \geq w_{j+1}\big(\bm{\pi}^{(j)}\big)$. Hence,
  $$MC_j^*\geq MC_{j+1}^*.$$

  \item When $w_{j}\big(\bm{\pi}^{(j)}\big) \geq \alpha \geq w_{j+1}\big(\bm{\pi}^{(j)}\big)$,
  \begin{small}\begin{align*}
    &\min \big\{w_j\big(\bm{\pi}^{(j+1)}\big), \alpha \big\}+\min \big\{w_{j+1}\big(\bm{\pi}^{(j+1)}\big), \alpha \big\}\\
    &= \alpha + w_{j+1}\big(\bm{\pi}^{(j+1)}\big)\\
    &\overset{(a)}{\leq} \alpha + w_{j+1}\big(\bm{\pi}^{(j)}\big)\\
    &\overset{(b)}{=} \min \big\{w_j\big(\bm{\pi}^{(j)}\big), \alpha \big\} + \min \big\{w_{j+1}\big(\bm{\pi}^{(j)}\big), \alpha \big\},
  \end{align*}\end{small}\noindent
  where (a) and (b) are based on $ w_{j+1}\big(\bm{\pi}^{(j+1)}\big)\leq w_{j+1}\big(\bm{\pi}^{(j)}\big)\leq \alpha \leq w_{j}\big(\bm{\pi}^{(j)}\big)$. Hence,
  $$MC_j^*\geq MC_{j+1}^*.$$

  \item When $w_{j+1}\big(\bm{\pi}^{(j)}\big) \geq \alpha \geq w_{j+1}\big(\bm{\pi}^{(j+1)}\big)$,
  \begin{small}\begin{align*}
    &\min \big\{w_j\big(\bm{\pi}^{(j+1)}\big), \alpha \big\}+\min \big\{w_{j+1}\big(\bm{\pi}^{(j+1)}\big), \alpha \big\}\\
    &= \alpha + w_{j+1}\big(\bm{\pi}^{(j+1)}\big)\\
    &\overset{(a)}{\leq} \alpha + \alpha\\
    &\overset{(b)}{=} \min \big\{w_j\big(\bm{\pi}^{(j)}\big), \alpha \big\} + \min \big\{w_{j+1}\big(\bm{\pi}^{(j)}\big), \alpha \big\},
  \end{align*}\end{small}\noindent
  where (a) and (b) are based on $ w_{j+1}\big(\bm{\pi}^{(j+1)}\big)\leq \alpha \leq w_{j+1}\big(\bm{\pi}^{(j)}\big)\leq w_{j}\big(\bm{\pi}^{(j)}\big)$. Hence,
  $$MC_j^*\geq MC_{j+1}^*.$$
  \item When $w_{j+1}\big(\bm{\pi}^{(j+1)}\big) \geq \alpha$,
  \begin{small}\begin{align*}
    &\min \big\{w_j\big(\bm{\pi}^{(j+1)}\big), \alpha \big\}+\min \big\{w_{j+1}\big(\bm{\pi}^{(j+1)}\big), \alpha \big\}\\
    &= \alpha + \alpha\\
    &= \min \big\{w_j\big(\bm{\pi}^{(j)}\big), \alpha \big\} + \min \big\{w_{j+1}\big(\bm{\pi}^{(j)}\big), \alpha \big\},
  \end{align*}\end{small}\noindent
  and $MC_j^*= MC_{j+1}^*.$
\end{enumerate}

\noindent\textbf{Case 2:} $w_j\big(\bm{\pi}^{(j+1)}\big)\geq w_{j+1}\big(\bm{\pi}^{(j)}\big)\geq w_{j}\big(\bm{\pi}^{(j)}\big) \geq w_{j+1}\big(\bm{\pi}^{(j+1)}\big)$.

\noindent When $\alpha \geq w_j\big(\bm{\pi}^{(j+1)}\big)$, $w_j\big(\bm{\pi}^{(j+1)}\big) \geq \alpha \geq w_{j+1}\big(\bm{\pi}^{(j)}\big)$, $w_{j}\big(\bm{\pi}^{(j)}\big) \geq \alpha \geq w_{j+1}\big(\bm{\pi}^{(j+1)}\big)$ and $w_{j+1}\big(\bm{\pi}^{(j+1)}\big) \geq \alpha$, the situations are similar to Case 1: 1, 2, 4, 5, respectively, and can be analysed with the same methods. We only need to consider the situation that $w_{j+1}\big(\bm{\pi}^{(j)}\big) \geq \alpha \geq w_{j}\big(\bm{\pi}^{(j)}\big)\geq w_{j+1}\big(\bm{\pi}^{(j+1)}\big)$. Then
\begin{small}\begin{align*}
  &\min \big\{w_j\big(\bm{\pi}^{(j+1)}\big), \alpha \big\}+\min \big\{w_{j+1}\big(\bm{\pi}^{(j+1)}\big), \alpha \big\}\\
  &= \alpha + w_{j+1}\big(\bm{\pi}^{(j+1)}\big)\\
  &\leq \alpha + w_{j}\big(\bm{\pi}^{(j)}\big)\\
  &=\min \big\{w_{j+1}\big(\bm{\pi}^{(j)}\big), \alpha \big\} +\min \big\{w_j\big(\bm{\pi}^{(j)}\big), \alpha \big\}.
\end{align*}\end{small}\noindent
Hence, $MC_j^*\geq MC_{j+1}^*$ and we finish the proof.

\section{Proof of Theorem~\ref{theorem_compareS}}\label{app_thm_compareS}

  \begin{figure}[t]
  \centering
  \subfloat[$\bm{\pi}^*(\textbf{s}^*)={(1,2,1,2,1,2,1)}$]{\includegraphics[width=0.23\textwidth]{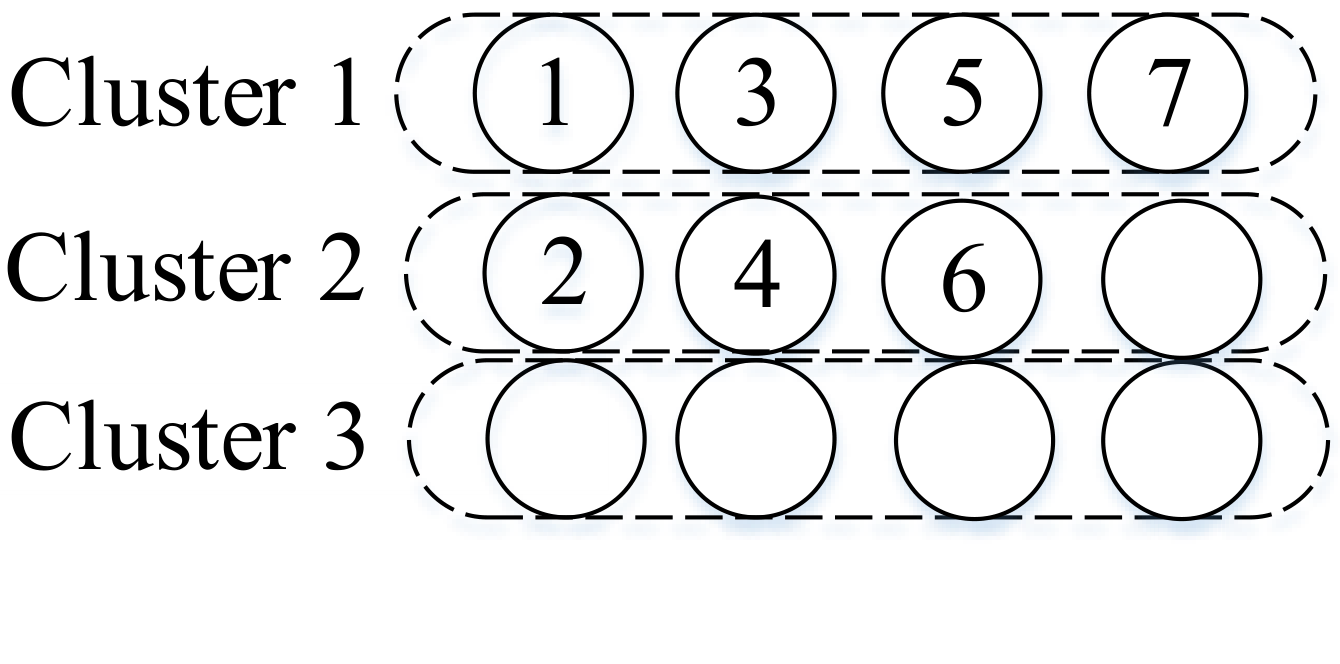}}
  \hspace{0.02\textwidth}
  \subfloat[$\bm{\pi}^{(7)}={(1,2,1,2,1,1,0)}$]{\includegraphics[width=0.23\textwidth]{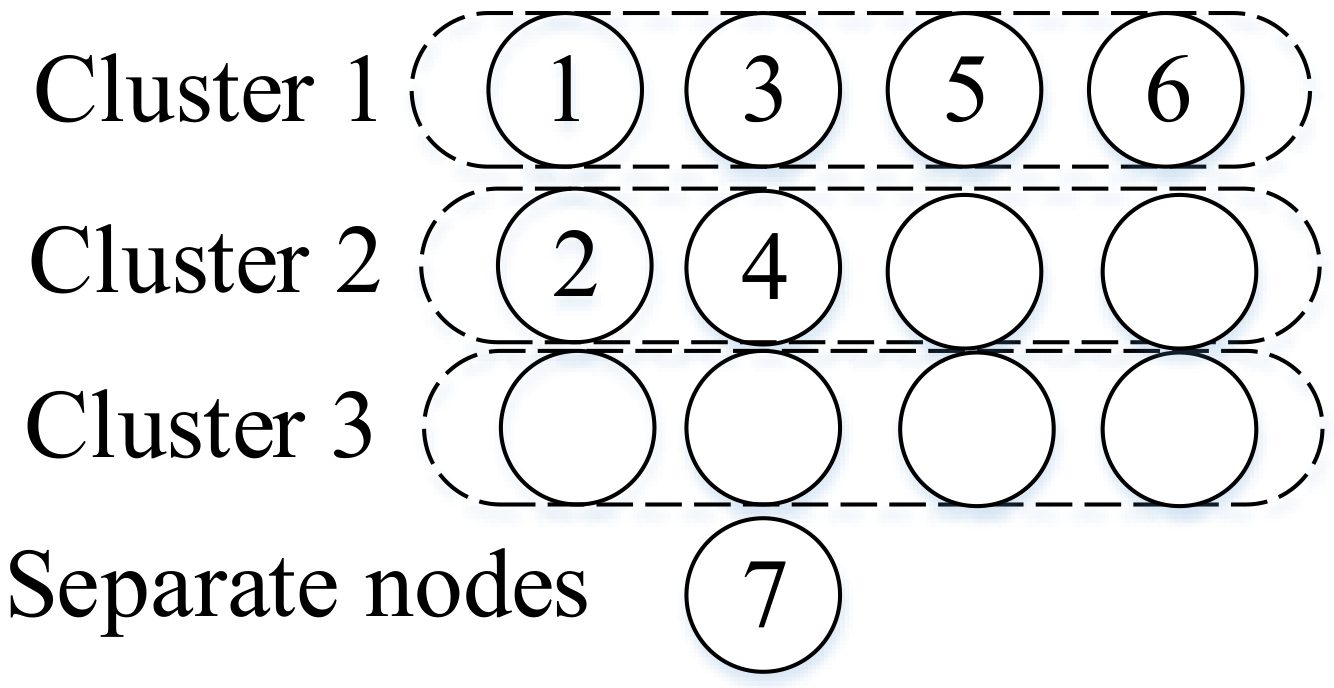}}
  \caption{In (a), the cluster order is $\bm{\pi}^*(\textbf{s}^*)$ for $\textbf{s}^*=(0,4,3,0)$, which achieves the capacity of the cluster DSS model. In (b), the cluster order and selected node distribution are $\bm{\pi}^{(7)}$ and $\textbf{s}^*=(1,4,2,0)$ respectively, achieving the capacity of the CSN-DSS model.}\label{fig:compare7}
\end{figure}

In this proof, we need to compare the capacities of the cluster DSS and the system after adding a separate node, which are represented by $\bm{\pi}^*(\textbf{s}^*)$ and $\bm{\pi}^{(k)}$ respectively, as shown in Proposition~\ref{prop_MC2} and Theorem~\ref{theorem_MCSN}. By analysing cluster orders $\bm{\pi}^*(\textbf{s}^*)$ and $\bm{\pi}^{(k)}$, we compare the part incoming weights $[w_i(\bm{\pi}^*(\textbf{s}^*))]_{i=1}^k$ and $[w_i(\bm{\pi}^{(k)})]_{i=1}^k$ one by one and enumerate all possible cases. We finish the proof in two parts. Part one investigates the case $R\mid k$, where adding a separate node will not change the capacity corresponding to Example~\ref{exam_k8}. In the second part, we consider the case $R \nmid k$, where the system capacity is reduced (see Example~\ref{exam_k9}).

\noindent\textbf{Part 1 ($R\mid k$):} As illustrated in Figure~\ref{fig:compare8}, the components of cluster order $\bm{\pi}^*(\textbf{s}^*)$ and $\bm{\pi}^{(k)}$ are the same except the $k$-th one. Hence, the part incoming weight (defined in Subsection~\ref{subsec_TermDef}) satisfies that
\begin{align*}
w_i(\bm{\pi}^*(\textbf{s}^*))=w_i\big(\bm{\pi}^{(k)}\big),
\end{align*}
for $1\leq i\leq k-1$. We only need to compare $w_k(\bm{\pi}^*(\textbf{s}^*))$ and $w_k\big(\bm{\pi}^{(k)}\big)$. When $R\mid k$, based on the horizontal selection algorithm, we can verify that $h_{\bm{\pi}^*(\textbf{s}^*)}(k)=R$. Because of  Lemma~\ref{lemma_ai} and Lemma~\ref{lemma_aibi}, $a_k(\bm{\pi}^*(\textbf{s}^*))=d_I+1-h_{\bm{\pi}^*(\textbf{s}^*)}(k)=0$
and
$b_k(\bm{\pi}^*(\textbf{s}^*))=d_I+d_c+1-k-a_k(\bm{\pi}^*(\textbf{s}^*))=R-k+d_C.$
On the other hand,
$w_k\big(\bm{\pi}^{(k)}\big)=(d_I+d_C+1-k)\beta_C=(R-k+d_C)\beta_C$
because of formula (\ref{equ_ci}). Hence, $w_k(\bm{\pi}^*(\textbf{s}^*))=0\beta_I+(R-k+d_C)\beta_C=w_k\big(\bm{\pi}^{(k)}\big)$. Therefore, adding one separate node will not change the system capacity.

\noindent\textbf{Part 2 ($R\nmid k$):} When $R\nmid k$, there are two cases: $k< R$ and $k\geq R$.
\begin{itemize}
  \item \textbf{Case 1} ($k<R$): In this case, all the selected nodes in $\bm{\pi}^*(\textbf{s}^*)$ are in Cluster 1. For $\bm{\pi}^{(k)}$, the first $k-1$ selected nodes are also in cluster 1. Hence, $w_i(\bm{\pi}^*(\textbf{s}^*))=w_i\big(\bm{\pi}^{(k)}\big)$ for $1\leq i \leq k-1$ and
      \begin{small}\begin{align*}
        w_k(\bm{\pi}^*(\textbf{s}^*)) &= a_k(\bm{\pi}^*(\textbf{s}^*))\beta_I + d_C\beta_C\\
        &\overset{(a)}{=}(d_I+d_C+1-k-d_C)\beta_I+d_C\beta_C\\
        &=(R-k)\beta_I+d_C\beta_C,\\
        w_k\big(\bm{\pi}^{(k)}\big)&= (d_I+d_C+1-k)\beta_C =(R-k+d_C)\beta_C,
      \end{align*}\end{small}\noindent
      where (a) is because of Lemma~\ref{lemma_aibi}. Hence,
      $w_k(\bm{\pi}^*(\textbf{s}^*))-w_k\big(\bm{\pi}^{(k)}\big)=(R-k)(\beta_I-\beta_C)\geq 0$
      and
      $\sum_{i=1}^k\min\{w_i(\bm{\pi}^*(\textbf{s}^*)),\alpha\}\geq\sum_{i=1}^k\min\{w_i\big(\bm{\pi}^{(k)}\big),\alpha\},$ indicating that adding one separate node reduces the system capacity.
  \item \textbf{Case 2} ($k\geq R$): As $R\nmid k$, let
  \begin{align}
    q&\triangleq \lfloor k/R\rfloor,\\
    r&\triangleq k \mod R.
  \end{align}
  Obviously, $1\leq q \leq L-1$ and $1\leq r \leq k-1$. All nodes in the first $q$ clusters and $r$ nodes in the $(q+1)$-th cluster are selected nodes based on the horizontal selection algorithm. Comparing each components of $\bm{\pi}^*(\textbf{s}^*)$ and $\bm{\pi}^{(k)}$, we can find that
  \begin{align}\label{equ_equalWi}
    w_i(\bm{\pi}^*(\textbf{s}^*))=w_i\big(\bm{\pi}^{(k)}\big)
  \end{align}
  for $1\leq i\leq (q+1)r-1$.

  In cluster order $\bm{\pi}^{(k)}$, the first $k-1$ selected nodes are cluster nodes, and there are $r-1$ selected nodes in the $(q-1)$-th cluster. As mentioned in Subsection~\ref{subsec_TermDef}, the nodes in cluster orders are numbered and grouped column by column. For cluster order $\bm{\pi}^{(k)}$, there are $q+1$ selected nodes in each of the first $r$ columns and $q$ selected nodes in each of the remaining $R-r$ columns. When a separate selected node is added, some of the numbers of selected cluster nodes will change. There are $q+1$ selected nodes in each of the first $r-1$ columns and $q$ selected nodes in the remaining $R-r+1$ columns in $\bm{\pi}^{(k)}$. As the selected cluster node are numbered from left to right column by column, the numbers of the first $(q+1)r-1$ selected nodes will not change, leading to equation (\ref{equ_equalWi}). An example is illustrated as follows.

  In Figure~\ref{fig:compare7}, the selected nodes are numbered by cluster orders $\bm{\pi}^*(\textbf{s}^*)$ and $\bm{\pi}^{(7)}$ in (a) and (b), respectively, where $R=4$,\ $k=7$, $q=\lfloor k/R \rfloor=1$ and $r=3$. In Figure~\ref{fig:compare7} (a), there are $q+1=2$ selected nodes in each of the first $r=3$ column and $1$ selected nodes in the last one columns. In Figure~\ref{fig:compare7} (b), there are $q+1=2$ selected nodes in each of the first $r-1=2$ columns and $q=1$ selected nodes in each of the remaining $R-r+1=2$ columns. We can verify that the locations of the first $(q+1)r-1=2\times 3-1=5$ selected nodes, namely node 1 to node 5, will not change. However, the locations of other nodes have changed. For example, node 6 is the third selected node in cluster 2 in $\bm{\pi}^*(\textbf{s}^*)$, but the 6th node in $\bm{\pi}^{(7)}$ is the 4th one in cluster 1, leading to the difference of $w_6(\bm{\pi}^*(\textbf{s}^*))$ and $w_6\big(\bm{\pi}^{(7)}\big)$.

  When $i=(q+1)r$, as shown in Figure~\ref{fig:compare7}, the $i$-th node in  $\bm{\pi}^*(\textbf{s}^*)$ is in a different column than that in $\bm{\pi}^{(k)}$ and
  $h_{\bm{\pi}^{(k)}}(i)=h_{\bm{\pi}^*(\textbf{s}^*)}(i)+1.$
  Hence,
  $a_i(\bm{\pi} ^*(\textbf{s}^*))=d_I+1-h_{\bm{\pi} ^*(\textbf{s}^*)}(i)=d_I+1-h_{\bm{\pi}^{(k)}}(i)+1=a_i\big(\bm{\pi}^{(k)}\big)+1.$
  Based on Lemma~\ref{lemma_aibi}, $a_i(\bm{\pi} ^*(\textbf{s}^*))+b_i(\bm{\pi} ^*(\textbf{s}^*))=a_i\big(\bm{\pi}^{(k)}\big)+b_i\big(\bm{\pi}^{(k)}\big)$, then $$b_i(\bm{\pi}^*(\textbf{s}^*))=b_i\big(\bm{\pi}^{(k)}\big)-1.$$
  Hence,
  $w_i(\bm{\pi} ^*(\textbf{s}^*))-w_i\big(\bm{\pi}^{(k)}\big)=\beta_I-\beta_C\geq 0.$

  For example, in Figure~\ref{fig:compare7}, $w_6(\bm{\pi} ^*(\textbf{s}^*))=(d_I-2)\beta_I+(d_C-3)\beta_C=\beta_I+(d_C-3)\beta_C$ and $w_6\big(\bm{\pi}^{(k)}\big)=(d_I-3)\beta_I+(d_C-2)\beta_C=(d_C-2)\beta_C$. Hence, $w_6(\bm{\pi} ^*(\textbf{s}^*))-w_6\big(\bm{\pi}^{(k)}\big)=\beta_I-\beta_C$.

  When $(q+1)r<i\leq k-1$, there are only two possible cases about the relationship of  node $i$ in $\bm{\pi} ^*(\textbf{s}^*)$ and $\bm{\pi}^{(k)}$.
  \begin{itemize}
    \item \textbf{Subcase 1:} The column number of node $i$ in $\bm{\pi} ^*(\textbf{s}^*)$ equals that in $\bm{\pi}^{(k)}$, namely, $h_{\bm{\pi} ^*(\textbf{s}^*)}(i)=h_{\bm{\pi}^{(k)}}(i)$. For example, in Figure~\ref{fig:compare9}, the 4th nodes in $\bm{\pi} ^*(\textbf{s}^*)$ and $\bm{\pi}^{(9)}$ are both in the second column. So $h_{\bm{\pi} ^*(\textbf{s}^*)}(4)=h_{\bm{\pi}^{(9)}}(4)=2$. Similarly, $h_{\bm{\pi} ^*(\textbf{s}^*)}(6)=h_{\bm{\pi}^{(9)}}(6)=3$ and $h_{\bm{\pi} ^*(\textbf{s}^*)}(8)=h_{\bm{\pi}^{(9)}}(8)=4$.

        Based on Lemma~\ref{lemma_ai} and Lemma~\ref{lemma_aibi}, we can get
        $$a_i(\bm{\pi} ^*(\textbf{s}^*))=a_i\big(\bm{\pi}^{(k)}\big)\text{ and }b_i(\bm{\pi} ^*(\textbf{s}^*))=b_i\big(\bm{\pi}^{(k)}\big).$$
        Therefore, $w_i(\bm{\pi} ^*(\textbf{s}^*))=w_i\big(\bm{\pi}^{(k)}\big)$.

    \item \textbf{Subcase 2:} The column number of node $i$ in $\bm{\pi} ^*(\textbf{s}^*)$ is smaller than that in $\bm{\pi}^{(k)}$ by 1, namely, $h_{\bm{\pi} ^*(\textbf{s}^*)}(i)=h_{\bm{\pi}^{(k)}}(i)-1$. In Figure~\ref{fig:compare9}, the 5th node in $\bm{\pi} ^*(\textbf{s}^*)$ is in the second column, while the 5th node in $\bm{\pi}^{(9)}$ is in the third column. Hence, $h_{\bm{\pi} ^*(\textbf{s}^*)}(5)=2=h_{\bm{\pi}^{(9)}}(5)-1$. Similarly, $h_{\bm{\pi} ^*(\textbf{s}^*)}(7)=3=h_{\bm{\pi}^{(9)}}(7)-1$.

        Based on Lemma~\ref{lemma_ai} and Lemma~\ref{lemma_aibi},
        $$a_i(\bm{\pi} ^*(\textbf{s}^*))=a_i\big(\bm{\pi}^{(k)}\big)+1\text{ and }b_i(\bm{\pi} ^*(\textbf{s}^*))=b_i\big(\bm{\pi}^{(k)}\big)-1,$$
        Hence, $w_i(\bm{\pi} ^*(\textbf{s}^*))-w_i\big(\bm{\pi}^{(k)}\big)=\beta_I-\beta_C\geq 0$
    \end{itemize}
    Combining the above 2 subcases, it is proved that $w_i(\bm{\pi} ^*(\textbf{s}^*))\geq w_i\big(\bm{\pi}^{(k)}\big)$ for $(q+1)r<i\leq k-1$.
    When $i=k$, based on Algorithm~1 and~2, the $k$-th selected node in $\bm{\pi} ^*(\textbf{s}^*)$ is the $R$-th one in its cluster, namely, $h_{\bm{\pi} ^*(\textbf{s}^*)}(k)=R$. Based on Lemma~\ref{lemma_ai} and Lemma~\ref{lemma_aibi},
    $a_k(\bm{\pi} ^*(\textbf{s}^*))=d_I+1-h_{\bm{\pi} ^*(\textbf{s}^*)}(k)=0$
    and
    $w_k(\bm{\pi} ^*(\textbf{s}^*))=a_k(\bm{\pi} ^*(\textbf{s}^*))\beta_I+(d_I+d_C+1-k-a_k(\bm{\pi} ^*(\textbf{s}^*)))\beta_C=(d_C+R-k)\beta_C.$
  As $w_k\big(\bm{\pi}^{(k)}\big)=(d_I+d_C+1-k)\beta_C=(d_C+R-k)\beta_C$, then $w_k(\bm{\pi} ^*(\textbf{s}^*))=w_k\big(\bm{\pi}^{(k)}\big)$.
\end{itemize}
Hence, when $R\nmid k$, $w_i(\bm{\pi} ^*(\textbf{s}^*))\geq w_i\big(\bm{\pi}^{(k)}\big)$ for $1\leq i\leq k$. Then $MC(\textbf{s}^*,\bm{\pi}^*(\textbf{s}^*))\geq MC\big(\bm{\pi}^{(k)}\big)$. We finish the proof.

\section*{Acknowledgment}
This work was partially supported by China Program of International S$\&$T Cooperation 2016YFE0100300, the National Natural Science Foundation of China under Grant 61571293 and SJTU-CUHK Joint Research Collaboration Fund 2018.



\bibliographystyle{plain}
\bibliography{DSbib}
%
%
%

\end{document}